\numberwithin{figure}{section}
\numberwithin{equation}{section}
\newtheorem{thm}{\textbf{Theorem}}
\numberwithin{thm}{section}
\newtheorem{cor}{\textbf{Corollary}}
\numberwithin{cor}{section}
\newtheorem{lem}[thm]{\textbf{Lemma}}
\newtheorem{prop}[thm]{\textbf{Proposition}}
\newtheorem{conj}[thm]{Conjecture}
\theoremstyle{definition}
\newtheorem{defn}[thm]{Definition}
\newtheorem*{rem}{Remark}
\newcommand{\bea}{\begin{eqnarray}} 
\newcommand{\eea}{\end{eqnarray}} 
\newcommand{\be}{\begin{equation}} 
\newcommand{\ee}{\end{equation}} 
\newcommand{\benn}{\begin{equation*}} 
\newcommand{\eenn}{\end{equation*}}
\begin{document}

\title[Quantizing Weierstrass]{Quantizing Weierstrass}

\author[V.~Bouchard, N.K.~Chidambaram and T.~Dauphinee]{Vincent Bouchard, Nitin K. Chidambaram and Tyler Dauphinee}

\address{Department of Mathematical \& Statistical Sciences\\
University of Alberta\\
632 CAB, Edmonton, Alberta T6G 2G1\\
Canada}
\email{vincent.bouchard@ualberta.ca}
\email{chidamba@ualberta.ca}
\email{tdauphin@ualberta.ca}
%\received{November 26, 2014}

\begin{abstract}
We study the connection between the Eynard-Orantin topological recursion and quantum curves for the family of  genus one spectral curves given by the Weierstrass equation. We construct differential operators that annihilate the perturbative and non-perturbative wave-functions. In particular, for the non-perturbative wave-function, we prove, up to order $\hbar^5$, that the differential operator is a quantum curve. As a side result, we obtain an infinite sequence of identities relating $A$-cycle integrals of elliptic functions and quasi-modular forms.
\end{abstract} 

\maketitle

\section{Introduction}

The starting point of the Eynard-Orantin topological recursion \cite{EO,EO2} is a spectral curve. For the purposes of this paper, we can think of a spectral curve as an irreducible algebraic curve $\{ P(x,y) = 0 \} \subset\mathbb{C}^2$. Then, the topological recursion  recursively constructs an infinite sequence of symmetric meromorphic differentials $W_{g,n}$, $g \geq 0$ and $n \geq 1$, on the spectral curve. Depending on the choice of spectral curve, these differentials turn out to be generating functions for many different types of enumerative invariants, such as Gromov-Witten invariants, Hurwitz numbers, knot invariants, etc.\footnote{To be precise, for most of these applications the definition of the spectral curve must be generalized slightly.} (See for instance \cite{BoE,BEMS,BKMP,BM,BSLM,DFM,DLN,DBOSS,DKOSS,DBMNPS,EMS,EO,EO2,EO3,FLZ,FLZ2,FLZ3,GJKS,Ma,MSS}.)

The Eynard-Orantin topological recursion originated in the context of matrix models \cite{CEO,E1,EO,EO2}. But given its rather universal enumerative geometric interpretation, it has now a life of its own. However, it is still interesting to explore its roots, and see whether matrix model theory suggests further connections to \emph{a priori} unrelated mathematical structures. Those may lead to unexpected results in the various geometric contexts.

One such connection that is suggested by matrix models relates the Eynard-Orantin topological recursion to WKB asymptotic solutions of differential equations. In matrix models one can construct an object $\psi$ called the wave-function of the theory. On general grounds, it is then expected that there exists a ``quantization'' $\hat{P}(\hat x, {\hat y}; \hbar)$ of the spectral curve that annihilates $\psi$; this quantization of the spectral curve is generally called a \emph{quantum curve}. 

But what do we mean by quantization here? Assume that the spectral curve $P(x,y) = 0$ has degree $d$ in $y$. Let $\hat{P}(\hat x, {\hat y}; \hbar)$ be a polynomial in $\hat x$ and $\hat y$ of degree $d$ in $\hat y$, with coefficients that are possibly power series in $\hbar$. We let $\hat x$ and $\hat y$ be quantizations of the variables $x$ and $y$:
\begin{equation}
\hat x = x, \qquad \hat y = \hbar \frac{d}{dx},
\end{equation}
such that $[ \hat y, \hat x] =  \hbar$. This makes $\hat{P}(\hat x, {\hat y}; \hbar)$ into an order $d$ differential operator in $x$, with coefficients that are polynomial in $x$ and possibly power series in $\hbar$. We then say that $\hat{P}(\hat x, {\hat y}; \hbar)$ is a \emph{quantum curve} (of the original spectral curve) if, after normal ordering (that is bringing all the $\hat x$'s to the left of the $\hat y$'s), it takes the form
\begin{equation}
\hat{P}(\hat x, {\hat y}; \hbar) = P(\hat x, \hat y) + \sum_{n \geq 1} \hbar^n P_n(\hat x, \hat y),
\end{equation}
where the leading order term $P(\hat x, \hat y)$ recovers the original spectral curve (normal ordered), and the $P_n(\hat x, \hat y)$ are differential operators in $x$ of order at most $d-1$.

With this definition, the expectation from matrix models is that the wave-function $\psi$ should be the WKB asymptotic solution of the differential equation
\begin{equation}
\hat{P}(\hat{x}, \hat{y}; \hbar) \psi  = 0,
\end{equation}
for some quantum curve $\hat{P}(\hat x, {\hat y}; \hbar)$.
 This expectation follows from determinantal formulae in matrix models \cite{BeE,BeE2}.

The question of existence of quantum curves can be explored without reference to matrix models. Indeed, one can construct a wave-function $\psi$ in a natural way from the meromorphic differentials $W_{g,n}$ obtained from the topological recursion, without reference to any underlying matrix model. The question is then: for arbitrary spectral curves, does there exist a quantization that kills the wave function? And if so, can we construct this quantum curve explicitly from the topological recursion?

It should be noted here that this connection also has a deep relation with integrable systems. As explained in \cite{BorE1}, one can think of the wave-function as the Schlesinger transform of the partition function of the theory. If one assumes that the partition function is a $\tau$-function, \emph{i.e.} that it satisfies the Hirota equations, then it follows that the wave-function should be annihilated by a quantization of the spectral curve. However, it is not known whether the partition function constructed from the topological recursion is a $\tau$-function in general. 

An answer to the question above about the existence of quantum curves was provided in \cite{BE} for a very large class of genus zero spectral curves. More precisely, using the global topological recursion constructed in \cite{BE:2012, BHLMR}, it was proven that there exists a quantization that kills the wave-function for all spectral curves whose Newton polygons have no interior point and that are smooth as affine curves. For any such spectral curve, the quantum curve was reconstructed explicitly from the topological recursion. In fact, the quantum curve is not unique; one obtains different quantum curves (corresponding to different choices of ordering) depending on how one integrates the $W_{g,n}$ to construct the wave-function. More details can be found in \cite{BE}.

The goal of this paper is to study the relation between the topological recursion and quantum curves for genus one spectral curves. More precisely, we will focus on the family of spectral curves given by the Weierstrass equation
\begin{equation}
y^2 = 4 x^3 - g_2(\tau) x - g_3(\tau).
\end{equation}
We can apply the Eynard-Orantin topological recursion to this family of curves; the meromorphic differentials $W_{g,n}$ are elliptic and quasi-modular, while the free energies $F_g$ obtained from the recursion are quasi-modular forms. An interesting open question is whether these $W_{g,n}$ and $F_g$ have an enumerative interpretation for some geometric problem. We do not have an answer to this question. Nevertheless, this spectral curve is an interesting playground to study the connection between the topological recursion and quantum curves for spectral curves of genus greater than zero, since everything can be calculated very explicitly in terms of Weierstrass $\wp$ and $\wp'$ functions and Eisenstein series.

We initially study the wave-function $\psi$ constructed as in \cite{BE}; this is known as the \emph{perturbative wave-function}. For the Weierstrass spectral curve, it is obtained from the $W_{g,n}$ through the standard equation
\begin{multline}
\psi(z) =  \exp \left( \frac{1}{\hbar} \sum_{2g-1+n \geq 0} \frac{\hbar^{2g+n-1}}{n!} \int_0^z \cdots \int_0^z \Big(W_{g,n}(z_1,\ldots,z_n) \right.\\ \left. \left.  - \delta_{g,0}\delta_{n,2} \frac{dx(z_1) dx(z_2)}{(x(z_1) - x(z_2))^2} \right) \right).
\end{multline}
For this $\psi(z)$ we follow the steps in \cite{BE}, suitably generalized for our genus one curve, and construct an order two differential operator that annihilates $\psi$; however, it is not a quantum curve as defined above. But this was to be expected; from matrix model theory, when the spectral curve has genus greater than zero, the right wave-function to consider is not the perturbative wave-function. Rather, it needs to be corrected non-perturbatively.

A non-perturbative partition function was defined in \cite{E2,EM1} from the topological recursion directly, without reference to matrix models. The idea was to make the partition function modular invariant, which requires non-perturbative corrections. From the non-perturbative partition function one can define a non-perturbative wave-function as the Schlesinger transform \cite{BoE, BorE1}. This non-perturbative wave-function is the object that should be annihilated by a quantum curve.

We study the non-perturbative wave-function for the Weierstrass spectral curve. To obtain a well defined power series expansion in $\hbar$, it turns out that a quantization condition must be satisfied. The simplest elliptic curve that satisfies the quantization condition is:
$$
y^2 = 4 (x^3-1),
$$
which is of course a very special elliptic curve --- for instance, its $j$-invariant vanishes. 
Focusing on this spectral curve, through extensive symbolic calculations on Mathematica we calculate the wave-function up to order $5$ in $\hbar$. Remarkably, while the $W_{g,n}$ become extremely complicated, they somehow combine into very nice and simple elliptic functions in the non-perturbative wave-function. Using these calculations we are able to construct a quantum curve that annihilates the non-perturbative wave-function up to order $\hbar^5$. Perhaps unexpectedly, the quantization of the spectral curve includes non-trivial $\hbar$ corrections --- in fact probably an infinite number of such corrections. Nonetheless, the quantum curve is a true quantization of the spectral curve according to the definition above, as suggested by matrix model arguments. Therefore we obtain a proof of the existence of the quantum curve for the non-perturbative wave-function up to order $\hbar^5$, for this particular elliptic curve.

Going back to the perturbative wave-function, we perform the calculation of the differential operator in two different ways. It turns out that equivalence of the two approaches implies an infinite sequence of identities for $A$-cycle integrals of elliptic functions with quasi-modular properties. In particular, an infinite sub-sequence relates $A$-cycle integrals of elliptic functions to quasi-modular forms. In this paper, we write down explicitly only the first few identities, but it would certainly be interesting to study whether these identities are interesting from the point of view of elliptic functions and quasi-modular forms. For instance, they may be related to the results on quasi-modular forms obtained in \cite{GM}. We hope to report on that in the near future.

\subsection*{Outline}

In Section \ref{s:background} we review background material on elliptic functions and quasi-modular forms that will be needed in this paper. We also define the Eynard-Orantin topological recursion. In Section \ref{s:one}, we generalize the approach of \cite{BE} to construct an order two differential operator that annihilates the perturbative wave-function. In Section \ref{s:two}, we construct the same differential operator using a different approach, via the Riemann bilinear identity. We explore the connection and equivalence between the two approaches, which leads to the proof of an infinite sequence of identities for elliptic functions and quasi-modular forms. We briefly explore the first few of these identities in Section \ref{s:identities}. Then in Section \ref{s:NP} we study the non-perturbative wave-function and construct a quantum curve up to order $\hbar^5$. We conclude in Section \ref{s:conclusion}, with open questions and research avenues. Finally, we record in Appendix \ref{s:correlation} the first few $W_{g,n}$ constructed from the Weierstrass spectral curve, and in Appendix \ref{s:proof} we provide an independent proof of the simplest identity  that we obtained for elliptic functions, without reference to the topological recursion.

\subsection*{Acknowledgments}

We would like to thank G.~Borot, T.~Bridgeland, M. Mari\~no, M. Mulase, P.~Norbury and N.~Orantin for interesting discussions, and especially B.~Eynard for many enlightening discussions on related topics in recent years. We would also like to thank M.~ Westerholt-Raum for pointing out the paper \cite{GM} to us, and the referee for insightful comments. V.B. acknowledges
the support of the Natural Sciences and Engineering Research Council of Canada.

\section{Topological recursion on Weierstrass spectral curve}
\label{s:background}

\subsection{Elliptic functions and modular forms}

We start by defining standard objects that will be useful in this paper.

Let $\mathcal{H} = \{ \tau \in \mathbb{C} | \Im(\tau) > 0 \}$ be the upper half-plane, and define the lattice $\Lambda = \mathbb{Z} + \mathbb{Z} \tau$. The quotient $\mathbb{C} / \Lambda$ is topologically a torus. Functions on $\mathbb{C}/ \Lambda$ are given by doubly periodic functions, known as elliptic functions.

The Weierstrass function $\wp(z;\tau)$ is an example of an elliptic function. It is defined by
\begin{equation}
\wp(z;\tau) = \frac{1}{z^2} + \sum_{\omega \in \Lambda^*} \left(\frac{1}{(z-\omega)^2}	- \frac{1}{\omega^2}  \right),
\end{equation}
where $\Lambda^* = \Lambda \setminus \{ 0 \}$. The Weierstrass function $\wp'(z;\tau)$ is the derivative of $\wp(z;\tau)$ with respect to $z$; it is given by
\begin{equation}
\wp'(z;\tau) = -2 \sum_{\omega \in \Lambda} \frac{1}{(z-\omega)^3}.
\end{equation}
We also define the Weierstrass function $\zeta(z;\tau)$ by
\begin{equation}
\zeta(z; \tau) = \frac{1}{z} + \sum_{\omega \in \Lambda^*} \left(\frac{1}{z -\omega}  + \frac{1}{\omega} + \frac{z}{\omega^2} \right).
\end{equation}
This function is not elliptic; rather, it is quasi-elliptic, since $\zeta(z+1; \tau) = \zeta(z;\tau) + 2 \zeta(1/2; \tau)$ and $\zeta(z + \tau; \tau) = \zeta(z;\tau) + 2 \zeta(\tau/2; \tau)$.
It is clear that
\begin{equation}
\wp(z;\tau) = - \frac{d}{dz} \zeta(z;\tau).
\end{equation}

The Eisenstein series $G_{2n}(\tau)$, for $n \geq 2$, are defined by the uniformly convergent series\begin{equation}
G_{2n}(\tau) = \sum_{\omega \in \Lambda^*} \frac{1}{\omega^{2n}}.
\end{equation}
They are weight $2n$ modular forms, which means that they transform as
\begin{equation}
G_{2n}\left(\gamma \tau \right) = (c \tau + d)^{2n} G_{2n}(\tau), \quad \forall \gamma= \begin{pmatrix}a & b \\ c & d \end{pmatrix} \in SL(2,\mathbb{Z}),
\end{equation}
with
\begin{equation}
\gamma \tau = \frac{a \tau + b}{c \tau + d}.
\end{equation}

We can extend this definition to $n=1$, but then the series is not absolutely convergent anymore, so the order of summation matters. We define the second Eisenstein series $G_2(\tau)$
\begin{equation}
G_2(\tau) =\sum_{m \neq 0} \frac{1}{m^2} + \sum_{n \neq 0} \sum_{m \in \mathbb{Z}} \frac{1}{(m + n \tau)^2}.
\end{equation}
Because of the non-absolute convergence, we cannot change the order of summation, and it can be shown that it implies that $G_2(\tau)$ is not a modular form, but is rather a quasi-modular form of weight $2$. This means that it transforms with a shift, as
\begin{equation}\label{eq:G2trans}
G_2\left( \gamma \tau \right) = (c \tau + d)^2 G_2(\tau) -  2 \pi ic (c \tau + d ).
\end{equation}

We define the invariants
\begin{equation}
g_2(\tau) = 60 G_4(\tau), \qquad g_3(\tau) = 140 G_6(\tau). 
\end{equation}
It is well known that the Weierstrass functions satisfy the equation
\begin{equation}\label{eq:WPdiffeq}
\wp'(z;\tau)^2 = 4 \wp(z;\tau)^3 - g_2(\tau) \wp(z;\tau) - g_3(\tau).
\end{equation}
In other words, $x= \wp(z;\tau)$ and $y=\wp'(z;\tau)$ parameterize the Weierstrass curve
\begin{equation}
y^3 = 4 x^3 - g_2(\tau) x - g_3(\tau).
\end{equation}

The Weierstrass $\wp'(z;\tau)$ has three simple zeros at the half-periods
\begin{equation}
w_1 = \frac{1}{2}, \qquad w_2 = \frac{\tau}{2}, \qquad w_3 = - \frac{1}{2} (1 + \tau).
\end{equation}
As is customary, we denote by
\begin{equation}
e_1 = \wp(w_1; \tau), \qquad e_2 = \wp(w_2; \tau), \qquad e_3 = \wp(w_3; \tau)
\end{equation}
the value of the $\wp(z;\tau)$ function at the half-periods. We introduce the discriminant
\begin{equation}
\Delta(\tau) = g_2(\tau)^3 - 27 g_3(\tau)^2 = 16 (e_1-e_2)^2 (e_2-e_3)^2 (e_3-e_1)^2,
\end{equation}
which is a modular form of weight $12$.

The Weierstrass $\wp(z;\tau)$ function has a double pole at $z=0$. Its expansion near $z=0$ has a nice form; it is given by
\begin{equation}\label{eq:wpexpansion}
\wp(z;\tau) = \frac{1}{z^2} + \sum_{k=1}^\infty (2k+1) G_{2k+2}(\tau) z^{2k}.
\end{equation}
Following, for instance, \cite{Mason:2010}, let us define a new function $P_2(z;\tau)$ by including the $k=0$ term in the sum above:
\begin{equation}\label{eq:P2}
P_2(z;\tau) = \wp(z;\tau) + G_2(\tau).
\end{equation}
Of course $P_2(z;\tau)$ is not modular anymore, it is quasi-modular, because of $G_2(\tau)$. It is straightforward to show that it can be rewritten as
\begin{equation}
P_2(z; \tau) = (2 \pi i)^2 \sum_{n \in \mathbb{Z}^*} \frac{n q_z^n}{1 - q^n},
\end{equation}
where $q_z = \exp(2 \pi i z)$ and $q = \exp(2 \pi i \tau)$.

We also introduce $P_1(z; \tau)$ such that $\frac{d P_1(z;\tau)}{d z} = P_2(z;\tau)$. It follows that
\begin{equation}
P_1(z; \tau) = 2 \pi i \sum_{n \in \mathbb{Z}^*} \frac{q_z^n}{1-q^n} + A,
\end{equation}
for some constant $A$. We fix $A$ such that $P_1(-z; \tau) = - P_1(z; \tau)$. It follows that $A = \pi i$, that is,
\begin{equation}
P_1(z; \tau) = 2 \pi i \left(\sum_{n \in \mathbb{Z}^*} \frac{q_z^n}{1-q^n} + \frac{1}{2} \right).
\end{equation}
In terms of standard elliptic functions, we get
\begin{equation}
P_1(z;\tau) = - \zeta(z;\tau) + G_2(\tau) z .
\end{equation}
$P_1(z; \tau)$ is not elliptic anymore, but its transformation properties can be calculated. It is straightforward to show that
\begin{equation}
P_1(z + 1; \tau) = P_1(z), \qquad P_1(z + \tau; \tau) = P_1(z) + 2 \pi i.
\end{equation}
The second transformation is of course what makes it not quite elliptic.

\subsection{Spectral curve}

To define the topological recursion we need to introduce the notion of spectral curve.

\begin{defn}\label{d:sc}
A \emph{spectral curve} is a triple $(\Sigma,x,y)$ where $\Sigma$ is a Torelli marked genus $\hat{g}$ compact Riemann surface\footnote{A Torelli marked compact Riemann surface $\Sigma$ is a genus $\hat{g}$ Riemann surface $\Sigma$ with a choice of symplectic basis of cycles $(A_1, \ldots, A_{\hat{g}}, B_1, \ldots, B_{\hat{g}}) \in H_1(\Sigma,\mathbb{Z})$.} and $x$ and $y$ are meromorphic functions on $\Sigma$, such that the zeros of $dx$ do not coincide with the zeros of $dy$. 
\end{defn}

In this paper we will focus on one particular family of spectral curves, which we call the Weierstrass spectral curve.

\begin{defn}\label{d:wsc}
The \emph{Weierstrass spectral curve} is defined  by the triple $(\Sigma, x, y)$, where, $\Sigma = \mathbb{C}/\Lambda$ with lattice $\Lambda = \mathbb{Z} \oplus \tau \mathbb{Z}$, $x=\wp(z; \tau)$ and $y=\wp'(z;\tau)$. Then the meromorphic functions $x$ and $y$ identically satisfy the Weierstrass equation
\begin{equation}
y^2 = 4 x^3 - g_2(\tau) x -  g_3( \tau).
\end{equation}
\end{defn}

This is of course a genus one spectral curve, since $\Sigma$ is a torus. In fact, it is a family of curves, parametrized by the complex modulus $\tau$.

As usual in topological recursion we are interested in the branched covering $\pi : \Sigma \to \mathbb{P}^1$ given by the meromorphic function $x$. For the Weierstrass spectral curve, $\pi$ is a double cover. The deck transformation that exchanges the two sheets is simply given by $z \mapsto \tau(z) = - z$, since $\wp(z;\tau)$ is an even function in $z$.

We denote by $R$ the set of ramification points of $\pi$, which is given by the zeros of $dx$ and the poles of $x$ of order $\geq 2$. For the Weierstrass spectral curve, since
\begin{equation}
dx = \wp'(z;\tau) d z,
\end{equation}
the zeros of $dx$ are given by the half-periods $w_i$, $i=1,2,3$ introduced earlier. Moreover, $x = \wp(z;\tau)$ has a double pole at $z=0$. Therefore $R = \{w_1, w_2, w_3, 0\}$.

\subsection{Geometric objects}

For the topological recursion we also need the following objects that are canonically defined on a genus $\hat{g}$ compact Riemann surface $\Sigma$ with a symplectic basis of cycles for $H_1(\Sigma,\mathbb{Z}$).

\begin{defn}\label{d:w}
Let $a,b \in \Sigma$. The \emph{canonical differential of the third kind} $\omega^{a-b}(z)$ is a meromorphic one-form on $\Sigma$ such that:
\begin{itemize}
\item it is holomorphic away from $z=a$ and $z=b$;
\item it has a simple pole at $z=a$ with residue $+1$;
\item it has a simple pole at $z=b$ with residue $-1$;
\item it is normalized on $A$-cycles:
\begin{equation}
\oint_{z \in A_i} \omega^{a-b}(z) =  0, \qquad \text{for $i=1,\ldots,\hat{g}$.}
\end{equation}
\end{itemize}
\end{defn}

\begin{defn}\label{d:B}
The \emph{canonical bilinear differential of the second kind} $B(z_1,z_2)$ is the unique bilinear differential on $\Sigma^2$ satisfying the conditions:
\begin{itemize}
\item It is symmetric, $B(z_1, z_2) = B(z_2,z_1)$;
\item It has its only pole, which is double, along the diagonal $z_1 = z_2$, with leading order term (in any local coordinate $z$)
\begin{equation}
B(z_1,z_2) \underset{z_1 \to z_2 }{\rightarrow} \frac{d z_1 d z_2}{(z_1-z_2)^2} + \ldots;
\end{equation}
\item It is normalized on $A$-cycles:
\begin{equation}
\oint_{z_1 \in A_i} B(z_1, z_2) = 0, \qquad \text{for $i=1,\ldots,\hat{g}$}.
\end{equation}
\end{itemize}
\end{defn}

\begin{rem}
It follows from the definition that
\begin{equation}
B(z_1, z_2) = d_1 \omega^{z_1-b}(z_2).
\end{equation}
Equivalently,
\begin{equation}\label{e:omega}
\omega^{a - b}(z) = \int_{z_1=b}^a B(z_1, z),
\end{equation}
where the integral is taken over the unique homology chain with boundary $[a]-[b]$ that doesn't intersect the homology basis.
\end{rem}

It is not too difficult to identify what these objects are on the Weierstrass spectral curve. Recall that $\Sigma = \mathbb{C} / \Lambda$ with lattice $\Lambda = \mathbb{Z} \oplus \tau \mathbb{Z}$.  We fix the A-cycle to be given by $z \in [0,1)$,\footnote{To be precise, to integrate elliptic functions over $A$-cycles we need to shift it by $i \epsilon$ to avoid poles on the contour. For the same reason, a similar shift by a purely real $\epsilon$ must be done when evaluating $B$-cycle integrals.} and the B-cycle to be given by $z \in [0,\tau)$. The canonical bilinear differential of the second kind is given by
\begin{equation}
B(z_1,z_2) = P_2(z_1 - z_2; \tau) d z_1 d z_2,
\end{equation}
where $P_2(z;\tau)$ was introduced in \eqref{eq:P2}. First, it is symmetric, since $P_2(z;\tau)$ is an even function of $z$. Second, it is clear from \eqref{eq:wpexpansion} that it has a double pole on the diagonal with the right leading behavior. It has no other poles. As for normalization, one can check that it is normalized on the A-cycle.

The canonical differential of the third kind is then given by
\begin{equation}
\omega^{a-b}(z) =\left( P_1(z-b;\tau) - P_1(z-a;\tau) \right) dz.
\end{equation}

\subsection{Topological recursion}

We now introduce the topological recursion formalism, which was first proposed in \cite{CEO,E1,EO,EO2}. For clarity of presentation, we will only introduce the formalism in the context of the Weierstrass spectral curve. 

Let $(\Sigma,x,y)$ be a spectral curve. The topological recursion constructs an infinite tower of symmetric meromorphic differentials $W_{g,n}(z_1,\ldots,z_n)$, known as \emph{correlation functions}, on $\Sigma^n$. We now consider the special case where $(\Sigma,x,y)$ is the Weierstrass spectral curve.

\begin{defn}[Topological recursion]\label{d:TR}
We first define the initial conditions
\begin{align}
W_{0,1}(z) =& y(z) dx(z) = \wp'(z;\tau)^2 dz, \\ W_{0,2}(z_1,z_2) =& B(z_1,z_2) = P_2(z_1-z_2; \tau) dz_1 dz_2.
\end{align}

Let $\mathbf{z} = \{z_1,\ldots,z_n\}\in \Sigma^n$.
For $n \geq 0$, $g \geq 0$ and $2g-2+n \geq 0$, we uniquely construct symmetric meromorphic differentials $W_{g,n}$ on $\Sigma^n$ with poles along $R$ via the \emph{Eynard-Orantin topological recursion}:
\begin{equation}\label{eq:TR}
W_{g,n+1}(z_0, \mathbf{z}) =  \sum_{a \in R} \underset{z = a}{\text{Res}}~ K(z_0; z) \mathcal{R}^{(2)} W_{g,n+1}(z,-z; \mathbf{z}),
\end{equation}
where the recursion kernel is given by 
\begin{equation}
K(z_0; z) = \frac{\omega^{z - \alpha}(z_0)}{( y(z) - y(-z) )dx(z)} = \frac{\left(P_1(z_0-\alpha;\tau) - P_1(z_0-z;\tau) \right) d z_0}{2 \wp'(z;\tau)^2 d z},
\end{equation}
with $\alpha$ an arbitrary base point on $\Sigma$ (it can be checked that the definition is independent of the choice of $\alpha$, as a long as it is generic). The recursive structure is given by
\begin{multline}
 \mathcal{R}^{(2)} W_{g,n+1}(z,-z; \mathbf{z}) \\=W_{g-1,n+2}(z,-z,\mathbf{z}) + \sum_{g_1+g_2 =g} \sum_{I \cup J = \mathbf{z}}' W_{g_1, |I|+1}(z,I) W_{g_2, |J|+1}(-z,J).
 \label{eq:R2Wgn}
\end{multline}
In the second sum we are summing over all disjoint subsets $I, J \subset \mathbf{z}$ whose union is $\mathbf{z}$, and the prime means that we exclude the cases $(g_1, |I|) = (0,0)$ and $(g_2,|J|) = (0,0)$.
\end{defn}

We can compute the first few correlation functions explicitly for the Weierstrass spectral curve. Those are presented in Appendix A.

For later reference we also introduce the free energies $F_g := W_{g,0}$, $g \geq 2$, for the spectral curve. Those are obtained via the auxiliary equation
\begin{equation}\label{eq:fg}
F_g = \frac{1}{2-2g} \sum_{a \in R}  \underset{z = a}{\text{Res}}~ \phi(z) W_{g,1}(z),
\end{equation}
with $\phi(z) = \int y(z) dx(z)$ an arbitrary antiderivative of the one-form $y dx$.

\subsection{Quantum curve}

The purpose of this paper is to related the meromorphic differentials $W_{g,n}$ constructed from the topological recursion to the WKB asymptotic solution of a differential operator, known as a quantum curve. The connection will be explored in more detail later on in this paper, but for clarity and further reference let us define here what we mean by quantum curve.

Consider a spectral curve $(\Sigma, x,y)$. Let $P(x,y) = 0$ be the minimal irreducible polynomial equation satisfied by $x$ and $y$. Assume that it has degree $d$ in $y$. 

Define the quantization of the variables $x$ and $y$ as:
\begin{equation}
\hat x = x, \qquad \hat y = \hbar \frac{d}{dx},
\end{equation}
such that $[ \hat y, \hat x] =  \hbar$.

\begin{defn}[Quantum curve]\label{d:qc}
Let $\hat{P}(\hat x, {\hat y}; \hbar)$ be a polynomial in $\hat x$ and $\hat y$ of degree $d$ in $\hat y$, with coefficients that are possibly power series in $\hbar$. $\hat{P}(\hat x, {\hat y}; \hbar)$ is an order $d$ differential operator in $x$, with polynomial coefficients in $x$ that are possibly power series in $\hbar$. We say that $\hat{P}(\hat x, {\hat y}; \hbar)$ is a \emph{quantum curve} (of the original spectral curve) if, after normal ordering (that is bringing all the $\hat x$'s to the left of the $\hat y$'s), it takes the form
\begin{equation}
\hat{P}(\hat x, {\hat y}; \hbar) = P(\hat x, \hat y) + \sum_{n \geq 1} \hbar^n P_n(\hat x, \hat y),
\end{equation}
where the leading order term $P(\hat x, \hat y)$ recovers the original spectral curve (normal ordered), and the $P_n(\hat x, \hat y)$ are differential operators in $x$ of order at most $d-1$.
\end{defn}

We note here that the requirement that the $P_n(\hat x, \hat y)$ have order at most $d-1$ is equivalent to requiring that the coefficient of the highest degree $\hat y^d$ term in $\hat{P}(\hat x, {\hat y}; \hbar)$ does not depend on $\hbar$.

As an example that will be particularly relevant later on, consider the elliptic spectral curve $y^2 = 4  (x^3-1)$. 
According to the definition above, a quantum curve for this spectral curve must take the form
\begin{equation}
\hat{P}(\hat{x},\hat{y};\hbar) = \hbar^2 \frac{\text{d}^2}{\text{d} x^2}  - 4 (x^3-1) + {\sum_{i \geq 2} \hbar^{i} A_{i} (x) \frac{\text{d}}{\text{d} x} } + {\sum_{j \geq 1} \hbar^{j} B_{j} (x)},
\label{eq:qcwe}
\end{equation}
with the $A_i(x)$ and $B_j(x)$ polynomials in $x$.

\section{Perturbative wave-function: first approach}

\label{s:one}

In this section we approach the problem of constructing the quantum curve for the Weierstrass spectral curve naively. We apply the method of \cite{BE} directly with a few modifications. The idea here is to construct the wave-function as in \cite{BE}, which is what we will call the ``perturbative wave-function'', and then show that it is annihilated by an order two differential operator. However, this differential operator is not a quantum curve, according to Definition \ref{d:qc}. But this is because, as we will see in section \ref{s:NP}, and as is already expected from matrix models (see for instance \cite{BoE}), the perturbative wave-function is not the right object to look at. For spectral curves of genus $\geq 1$, one should use the non-perturbative wave-function to construct the quantum curve. We will study that in more detail in section \ref{s:NP}.

Coming back to the goal of this section, recall that in \cite{BE} quantum curves were obtained for all spectral curves whose Newton polygons have no interior point and that are smooth as affine curves. Of course, the Weierstrass spectral curve does not fall into that class, since its Newton polygon has an interior point (it has genus one). However, the main results of \cite{BE} can be adapted for this particular case, which is what we do in this section.

In this section we borrow heavily on the notation and calculations of \cite{BE}, even though the calculations are much simpler in the case at hand. The reader may want to refer to this paper for more detail.

\subsection{Reconstructing loop equations}

The first step in \cite{BE} is to reconstruct some sort of ``loop equations'' from the topological recursion. This is the content of Lemma 4.7 in \cite{BE}. Let us recall the notation. Here we focus on the Weierstrass spectral curve with the branched covering $\pi$. Since $\pi$ is a double cover and the deck transformation is given by $z \mapsto - z$, the objects introduced in \cite{BE} simplify drastically.

We first define
\begin{multline}
Q^{(2)}_{g,n+1}(z;\mathbf{z})\\= W_{g-1,n+2}(z,-z,\mathbf{z}) + \sum_{g_1+g_2 =g} \sum_{I \cup J = \mathbf{z}} W_{g_1, |I|+1}(z,I) W_{g_2, |J|+1}(-z,J).
\end{multline}
This is just like the recursive structure $ \mathcal{R}^{(2)} W_{g,n+1}(z,-z; \mathbf{z}) $ introduced in \eqref{eq:R2Wgn}, but with the $(g_1,|I|) = (0,0)$ and $(g_2,|J|) = (0,0)$ terms included. 

In our context, Lemma 4.7 of \cite{BE} becomes the following statement:
\begin{lem}
For $2g-2+n \geq 0$, the meromorphic one-forms (in $z$)
\begin{equation}
dz \left( \frac{Q^{(2)}_{g,n+1}(z;\mathbf{z})}{dx(z)^2} \right).
\end{equation}
can only have poles (in $z$) at $z = \pm z_i$, $i=1,\ldots,n$.
\end{lem}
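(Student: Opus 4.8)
The plan is to show that, apart from the prescribed simple poles at $z=\pm z_i$ coming from the insertions, the one-form $dz\,\bigl(Q^{(2)}_{g,n+1}(z;\mathbf z)/dx(z)^2\bigr)$ is regular everywhere on $\Sigma$. Since $Q^{(2)}_{g,n+1}$ is built from the $W_{g',n'}$, which by construction only have poles along the ramification set $R=\{w_1,w_2,w_3,0\}$ and along the diagonals, the only potential new poles of the quotient are: (i) at the finite ramification points $z=w_i$, where $dx(z)=\wp'(z;\tau)\,dz$ has a simple zero; (ii) at $z=0$, where $x=\wp$ has a double pole; and (iii) along the diagonals $z=\pm z_i$, which are exactly the poles we are allowed to keep. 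So the whole content is to check regularity at the $w_i$ and at $z=0$.

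First I would treat the ramification points $w_i$. Here I would invoke the standard local analysis of the topological recursion: near a simple zero $a$ of $dx$, each $W_{g,n}$ with $2g-2+n\geq 0$ has an expansion in the local coordinate whose combination appearing in $Q^{(2)}$ is tightly constrained. Concretely, one writes $W_{g,n+1}(z,I)=\sum_k \varphi_k^{(g,I)}(z)\,dz$ and observes that $Q^{(2)}_{g,n+1}(z;\mathbf z)$ is symmetric under the deck transformation $z\mapsto -z$ around $a$ (more precisely under the local involution fixing $a$); combined with the fact that $dx(z)^2$ vanishes to order $2$ at $a$, one needs the numerator $Q^{(2)}_{g,n+1}$ to vanish to sufficient order. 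This is precisely the ``loop equation'' statement: the linear loop equation (regularity of $W_{g,n+1}(z,\mathbf z)+W_{g,n+1}(-z,\mathbf z)$ at $a$) together with the quadratic loop equation (the combination $Q^{(2)}$ has at worst a double zero relative to what is needed) is a known consequence of the definition of $W_{g,n}$ via \eqref{eq:TR}. In the present double-cover setting this can be verified directly: one argues by induction on $2g-2+n$, plugging the recursion \eqref{eq:TR} into itself, and using that the kernel $K(z_0;z)$ has a simple zero at each $w_i$ (since the numerator $P_1(z_0-\alpha;\tau)-P_1(z_0-z;\tau)$ is regular and nonzero there while the denominator $2\wp'(z;\tau)^2\,dz$ has a double zero — wait, one must track this carefully) to conclude the claimed vanishing. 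The base cases $W_{0,2}$ together with $W_{0,1}=\wp'^2\,dz$ are checked by hand.

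Next I would handle the point $z=0$, the pole of $x$. Here $x=\wp(z;\tau)\sim z^{-2}$, so $dx(z)^2 = \wp'(z;\tau)^2\,dz^2 \sim 4z^{-6}\,dz^2$, which \emph{vanishes} to high order at $z=0$ in the sense that $1/dx(z)^2$ has a zero of order $6$. Thus the quotient $dz\,Q^{(2)}_{g,n+1}/dx(z)^2$ is regular at $z=0$ provided $Q^{(2)}_{g,n+1}(z;\mathbf z)$ does not blow up faster than order $5$ in $1/z$; this again follows because, away from the insertions $\pm z_i$, each $W_{g,n}$ is holomorphic near $z=0$ for $2g-2+n>0$ (only $W_{0,1}$ and $W_{0,2}$ carry poles there, and these enter $Q^{(2)}$ only through controlled products), so in fact the numerator is holomorphic at $z=0$ and the quotient is not merely regular but has a zero there. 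Alternatively, and perhaps more cleanly, one notes that $0\in R$ and the same residue/loop-equation bookkeeping used for the $w_i$ applies at $0$ as well, since the recursion \eqref{eq:TR} sums over all of $R$ including $0$.

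The main obstacle is the inductive verification of the quadratic loop equation at the $w_i$: keeping precise track of the orders of vanishing of $K(z_0;z)$, of $W_{g,n}$ and of $dx^2$ at each ramification point, and confirming that the ``prime-free'' sum $Q^{(2)}$ (which reinstates the $W_{0,1}$ and $W_{0,2}$ boundary terms excluded from $\mathcal R^{(2)}$) still has the right regularity — the reinstated terms $W_{0,1}(z)W_{g,n+1}(-z,\mathbf z)+W_{0,1}(-z)W_{g,n+1}(z,\mathbf z)$ and the $W_{0,2}$ contributions must be shown not to spoil the cancellation. This is exactly where one must reproduce, in simplified form, the argument of Lemma 4.7 of \cite{BE}; everything else is a routine unwinding of the definitions specialized to $\Sigma=\mathbb C/\Lambda$.
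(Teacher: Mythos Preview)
Your proposal is correct and ultimately rests on the same ingredient as the paper: the paper's proof is a one-line citation to Lemma~4.7 of \cite{BE}, observing that the argument there only uses smoothness of the affine curve and hence carries over verbatim to the Weierstrass curve. You arrive at the same conclusion (``this is exactly where one must reproduce, in simplified form, the argument of Lemma~4.7 of \cite{BE}''), but you surround it with an explicit breakdown of the potential poles and a direct power-counting check at $z=0$, which the paper does not spell out.

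A couple of minor remarks. First, your momentary confusion about the order of $K(z_0;z)$ at $w_i$ (you write ``simple zero'' then catch yourself) is indeed an error in sign of the order: since the denominator $2\wp'(z;\tau)^2\,dz$ has a double zero at $w_i$, the kernel has a double \emph{pole} there, and it is precisely this that forces the quadratic loop equation to hold in order for the recursion to produce meromorphic $W_{g,n}$. Since you defer the actual induction to \cite{BE}, this slip does not damage your argument. Second, your ``alternatively'' at $z=0$ is not needed: because $dx(z)^2$ has a \emph{pole} (of order $6$) rather than a zero at $z=0$, dividing by it can only improve regularity, and your straightforward power-counting already settles the matter without invoking any loop-equation cancellation there. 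The genuine content, as you correctly identify, lives entirely at the finite ramification points $w_i$.
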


\begin{proof}
The proof of Lemma 4.7 presented in \cite{BE} only requires the spectral curve to be smooth as an affine curve; it does not require the property that the Newton polygon has no interior point. Since the Weierstrass spectral curve is generically smooth as an affine curve, the proof goes through untouched.
\end{proof}

The next step in the approach of \cite{BE} is to prove Lemma 4.8, Lemma 4.9 and Theorem 4.12. Here the proofs need to be modified, and the results will differ. So let us go through these statements carefully.

The first lemma is
\begin{lem}
For the Weierstrass spectral curve,
\begin{equation}
\frac{Q^{(2)}_{0,1}(z)}{dx(z)^2} = - 4 x(z)^3 + g_2(\tau)x(z) + g_3(\tau).
\end{equation}
\end{lem}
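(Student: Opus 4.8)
The plan is to unwind the definition of $Q^{(2)}_{0,1}$ completely and then reduce everything to the Weierstrass differential equation \eqref{eq:WPdiffeq}. First I would specialize the definition of $Q^{(2)}_{g,n+1}$ to $g=0$ and $n=0$, so that $\mathbf{z} = \emptyset$. The term $W_{-1,2}(z,-z)$ vanishes (there is no correlation function of negative genus), and in the remaining double sum the only admissible choice is $g_1=g_2=0$ with $I=J=\emptyset$; note that here, unlike in $\mathcal{R}^{(2)}W_{g,n+1}$, the $(g_i,|I|)=(0,0)$ contributions are kept. Hence
\begin{equation}
Q^{(2)}_{0,1}(z) = W_{0,1}(z)\, W_{0,1}(-z).
\end{equation}

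Next I would substitute the initial data $W_{0,1}(z) = y(z)\,dx(z) = \wp'(z;\tau)^2\,dz$. For the second factor I would use that $W_{0,1}$ is the global meromorphic one-form $y\,dx$ and that $z\mapsto -z$ is the deck transformation: since $x=\wp(z;\tau)$ is even the one-form $dx$ is invariant under this map, while $y=\wp'(z;\tau)$ is odd, so $W_{0,1}(-z) = -\wp'(z;\tau)^2\,dz$. Dividing by $dx(z)^2 = \wp'(z;\tau)^2\,(dz)^2$ then gives
\begin{equation}
\frac{Q^{(2)}_{0,1}(z)}{dx(z)^2} = \frac{\wp'(z;\tau)^2\,dz\cdot\big(-\wp'(z;\tau)^2\,dz\big)}{\big(\wp'(z;\tau)\,dz\big)^2} = -\wp'(z;\tau)^2.
\end{equation}

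Finally I would invoke the Weierstrass equation \eqref{eq:WPdiffeq}, namely $\wp'(z;\tau)^2 = 4\wp(z;\tau)^3 - g_2(\tau)\wp(z;\tau) - g_3(\tau)$, together with $x(z)=\wp(z;\tau)$, to conclude
\begin{equation}
\frac{Q^{(2)}_{0,1}(z)}{dx(z)^2} = -4 x(z)^3 + g_2(\tau)\, x(z) + g_3(\tau),
\end{equation}
which is the claimed identity.

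The computation is entirely elementary, so there is no real obstacle; the only point that warrants a little care is the sign in $W_{0,1}(-z)$, i.e.\ being consistent about the fact that one pulls back the $dz$ factor under the deck transformation $z\mapsto -z$. This is precisely what produces the overall minus sign, and hence the correct leading coefficient $-4$ in front of $x(z)^3$; everything else is immediate from \eqref{eq:WPdiffeq}.
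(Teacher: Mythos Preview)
Your proof is correct and follows essentially the same approach as the paper: the paper also reduces $Q^{(2)}_{0,1}(z)$ to $W_{0,1}(z)W_{0,1}(-z)$, writes this as $y(z)y(-z)=-y(z)^2$, and then invokes the Weierstrass equation. You simply spell out the sign bookkeeping for $W_{0,1}(-z)$ more explicitly.
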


\begin{proof}
This is straightforward since
\begin{equation}
\frac{Q^{(2)}_{0,1}(z)}{dx(z)^2} = \frac{W_{0,1}(z) W_{0,1}(-z)}{dx(z)^2} = y(z) y(-z) = - y(z)^2.
\end{equation}
\end{proof}

The second lemma is a little more involved:
\begin{lem}\label{l:sec}
For the Weierstrass spectral curve,
\begin{equation}
\frac{Q^{(2)}_{0,2}(z; z_1)}{dx(z)^2} = - d_{z_1} \left(\frac{1}{(x(z) - x(z_1)) } \frac{W_{0,1}(z_1)}{dx(z_1)} \right)- 2 P_2(z_1; \tau) d z_1.
\end{equation}
\end{lem}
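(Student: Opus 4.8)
The plan is to compute $Q^{(2)}_{0,2}(z;z_1)/dx(z)^2$ directly from its definition and identify it with the claimed right-hand side. Since $(g_1,|I|)=(0,0)$ and $(g_2,|J|)=(0,0)$ terms are now included, the only terms in $Q^{(2)}_{0,2}(z;z_1)$ with $g=0$, $n=1$ are the stable-range products: $W_{0,1}(z)W_{0,2}(-z,z_1) + W_{0,2}(z,z_1)W_{0,1}(-z)$ (the $W_{-1,3}$ term vanishes, and there is no $W_{0,1}W_{0,1}$ piece because that has $n=0$). So
\begin{equation}
\frac{Q^{(2)}_{0,2}(z;z_1)}{dx(z)^2} = \frac{W_{0,1}(z)B(-z,z_1) + B(z,z_1)W_{0,1}(-z)}{dx(z)^2}.
\end{equation}
Using $W_{0,1}(z) = y(z)\,dx(z)$ and $B(z,z_1) = P_2(z-z_1;\tau)\,dz\,dz_1$, together with $dx(z) = \wp'(z;\tau)\,dz$ and the evenness $P_2(-z-z_1) = P_2(z+z_1)$, this becomes
\begin{equation}
\frac{Q^{(2)}_{0,2}(z;z_1)}{dx(z)^2} = \frac{y(z)\bigl(P_2(z+z_1;\tau) + P_2(z-z_1;\tau)\bigr)}{\wp'(z;\tau)}\,dz_1 = \frac{P_2(z+z_1;\tau) + P_2(z-z_1;\tau)}{1}\cdot\frac{y(z)}{\wp'(z;\tau)}\,dz_1,
\end{equation}
and since $y(z) = \wp'(z;\tau)$, the prefactor $y(z)/\wp'(z;\tau) = 1$, leaving $\bigl(P_2(z+z_1;\tau)+P_2(z-z_1;\tau)\bigr)dz_1$.

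The remaining task is to show that $P_2(z+z_1;\tau) + P_2(z-z_1;\tau)$ equals the claimed expression. First I would expand the right-hand side: using $W_{0,1}(z_1)/dx(z_1) = y(z_1) = \wp'(z_1;\tau)$ and $P_2 = \wp + G_2$, the target is
\begin{equation}
- d_{z_1}\!\left(\frac{\wp'(z_1;\tau)}{\wp(z) - \wp(z_1)}\right) - 2\bigl(\wp(z_1;\tau) + G_2(\tau)\bigr).
\end{equation}
So it suffices to prove the elliptic-function identity
\begin{equation}\label{eq:target-identity}
\wp(z+z_1) + \wp(z-z_1) = - \frac{d}{dz_1}\!\left(\frac{\wp'(z_1)}{\wp(z) - \wp(z_1)}\right) - 2\wp(z_1),
\end{equation}
where I have suppressed the $\tau$-dependence, after the two copies of $G_2$ on each side cancel. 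This is a classical Weierstrass identity; the standard proof is to fix $z$ and view both sides as elliptic functions of $z_1$, check that they have the same principal parts at the poles $z_1 \in \pm z + \Lambda$ and $z_1 \in \Lambda$, and match the constant term (e.g.\ by comparing Laurent expansions at $z_1 = 0$). Concretely, the addition formula $\wp(z+z_1) + \wp(z-z_1) = -2\wp(z_1) + \frac{1}{2}\frac{\wp''(z_1)(\wp(z)-\wp(z_1)) + \wp'(z_1)^2}{(\wp(z)-\wp(z_1))^2}$ combined with $\frac{d}{dz_1}\frac{\wp'(z_1)}{\wp(z)-\wp(z_1)} = \frac{\wp''(z_1)(\wp(z)-\wp(z_1)) + \wp'(z_1)^2}{(\wp(z)-\wp(z_1))^2}$ gives \eqref{eq:target-identity} up to the factor, which I would reconcile by being careful with the $\tfrac12$ in the addition formula (one should instead use the symmetric form $\wp(z+z_1)+\wp(z-z_1)-2\wp(z_1) = \frac{d}{dz_1}\!\bigl(\tfrac{\wp'(z_1)}{\wp(z_1)-\wp(z)}\bigr)$, which is the cleanest statement).

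The only genuine obstacle is pinning down \eqref{eq:target-identity} with the correct sign and coefficient; everything else is bookkeeping with the definitions. I would verify \eqref{eq:target-identity} by the principal-part argument rather than by manipulating the addition formula, since the former is less error-prone: both sides are elliptic in $z_1$ with double poles only at $z_1 \in \Lambda$ (coefficient $1$ from $\wp(z-z_1)+\wp(z+z_1)$ on the left; coefficient $-(-2) \cdot \tfrac{?}{} $ — here one checks the $-2\wp(z_1)$ term contributes $-2/z_1^2$ and the derivative term contributes $+3/z_1^2$ to the principal part, summing to $1/z_1^2$) and simple poles cancelling at $z_1 \in \pm z + \Lambda$ (the $z_1 = z$ pole in $\wp(z-z_1)$, residue-free since $\wp$ has no simple pole, is a double pole matched by the derivative term). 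A bounded elliptic function is constant, and matching the $z_1^0$ coefficient of the Laurent expansion at $z_1 = 0$ forces the constant to be zero, completing the proof.
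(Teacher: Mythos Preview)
Your strategy---reduce the lemma to a classical addition-type identity for $\wp$---is sound, and the paper itself notes in the Remark following this lemma that the statement is equivalent to
\[
P_2(z-z_1;\tau) + P_2(z+z_1;\tau) = \frac{\wp''(z_1;\tau)}{\wp(z;\tau)-\wp(z_1;\tau)} + \frac{\wp'(z_1;\tau)^2}{(\wp(z;\tau)-\wp(z_1;\tau))^2} + 2P_2(z_1;\tau).
\]
However, the paper's actual \emph{proof} is different: it rewrites $Q^{(2)}_{0,2}/dx^2$ as a sum of two residues of the meromorphic one-form $\frac{B(z',z_1)}{x(z')-x(z)}\,\frac{W_{0,1}(z')}{dx(z')}$ at $z'=\pm z$, then uses that residues on a compact surface sum to zero to trade these for the residues at $z'=z_1$ (which produces the $d_{z_1}$ term) and at $z'=0$, the pole of $y$ (which produces the $-2P_2(z_1)\,dz_1$ term). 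This residue trick is what generalises verbatim to the $(g,n)$ case in the next proposition; your direct approach, while more elementary here, does not.

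There is a genuine sign error in your computation. When you evaluate $B(-z,z_1)$ and $W_{0,1}(-z)$ you must account for $d(-z)=-dz$: one finds $B(-z,z_1)=-P_2(z+z_1)\,dz\,dz_1$ and $W_{0,1}(-z)=-\wp'(z)^2\,dz$. Hence
\[
\frac{Q^{(2)}_{0,2}(z;z_1)}{dx(z)^2} = -\bigl(P_2(z+z_1)+P_2(z-z_1)\bigr)\,dz_1,
\]
with a minus sign you have dropped. Consequently your target identity is false as stated: the correct version is
\[
\wp(z+z_1)+\wp(z-z_1) = +\,\frac{d}{dz_1}\!\left(\frac{\wp'(z_1)}{\wp(z)-\wp(z_1)}\right) + 2\wp(z_1),
\]
as one sees immediately by comparing both sides at $z_1=0$ (left side $=2\wp(z)$; your right side $=-2\wp(z)$). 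Your two sign errors happen to cancel in the final assembly, so the conclusion survives, but the intermediate identity you propose to prove by principal parts is not true, and the pole bookkeeping you sketch at $z_1\in\Lambda$ is off (neither side has a pole there for generic $z$; the poles to match are at $z_1=\pm z$). Fix the sign in the first step and everything lines up with the Remark in the paper.
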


\begin{proof}
Here the proof from \cite{BE} needs to be modified, so let us do it carefully. We have:
\begin{align}
\frac{Q^{(2)}_{0,2}(z; z_1)}{dx(z)^2} =& \frac{B(z,z_1)}{dx(z)} \frac{W_{0,1}(-z)}{dx(z)} +  \frac{B(-z,z_1)}{dx(z)} \frac{W_{0,1}(z)}{dx(z)} \nonumber\\
=& - \frac{B(z,z_1)}{dx(z)} \frac{W_{0,1}(z)}{dx(z)} -  \frac{B(-z,z_1)}{dx(z)} \frac{W_{0,1}(-z)}{dx(z)} \nonumber\\
=&- \underset{z'=z}{\text{Res}}~ \frac{B(z',z_1)}{x(z')-x(z)} \frac{W_{0,1}(z')}{dx(z')} - \underset{z'=-z}{\text{Res}}~ \frac{B(z',z_1)}{x(z')-x(z)} \frac{W_{0,1}(z')}{dx(z')} .
\end{align}
Now the expression
\begin{equation}
\frac{B(z',z_1)}{x(z')-x(z)} \frac{W_{0,1}(z')}{dx(z')}
\end{equation}
is a meromorphic one-form in $z'$ on the compact Riemann surface $\Sigma$. Therefore, the sum of its residues must vanish. Its only poles are at $z'=z_1$, $z' = \pm z$, and at the pole of $\frac{W_{0,1}(z')}{dx(z')} = y(z')$, that is, at $z'=0$.\footnote{This is where the proof differs from \cite{BE}. When the Newton polygon has no interior point, the only poles are at $z'=z_1$ and at $z' = \tau_i(z)$ where $\tau_i(z)$ indexes the sheets of the branched covering $\pi$.} Thus we get
\begin{align}
\frac{Q^{(2)}_{0,2}(z; z_1)}{dx(z)^2} =& -\underset{z'=z_1}{\text{Res}}~ \frac{B(z',z_1)}{x(z)-x(z')} \frac{W_{0,1}(z')}{dx(z')}-  \underset{z'=0}{\text{Res}}~ \frac{B(z',z_1)}{x(z)-x(z')} \frac{W_{0,1}(z')}{dx(z')}\nonumber\\
=& -d_{z_1} \left(\frac{1}{x(z)-x(z_1)} \frac{W_{0,1}(z_1)}{dx(z_1)} \right) - 2 P_2(z_1; \tau) dz_1,
\end{align}
where we used the fact that $\wp(z;\tau) \sim \frac{1}{z^2}$ near $z=0$, $\wp'(z;\tau) \sim - \frac{2}{z^3}$, and $B(z_1,z_2) = P_2(z_1-z_2;\tau) dz_1 dz_2$.
\end{proof}

\begin{rem}
Note that by replacing both the left-hand-side and right-hand-side of Lemma \ref{l:sec} in terms of elliptic functions, one can show that the statement above reduces to the well known identity:
\begin{multline}
P_2(z-z_1;\tau) + P_2(z+z_1;\tau)  \\= \frac{\wp''(z_1;\tau)}{\wp(z;\tau)-\wp(z_1;\tau)} + \frac{\wp'(z_1;\tau)^2}{(\wp(z;\tau)-\wp(z_1;\tau))^2} +2 P_2(z_1;\tau).
\end{multline}
\end{rem}

And finally, the main result that replaces Theorem 4.12 of \cite{BE} is the following theorem:
\begin{prop}\label{p:main1}
For the Weierstrass spectral curve, for $2g-2+n \geq 0$,
\begin{equation}
\frac{Q_{g,n+1}^{(2)}(z;\mathbf{z})}{dx(z)^2} =-  \sum_{i=1}^n d_{z_i} \left(\frac{1}{x(z)-x(z_i)} \frac{W_{g,n}(\mathbf{z} )}{dx(z_i)} \right) - 2 \left(\frac{W_{g,n+1}(z', \mathbf{z})}{dz'} \right)_{z'=0}.
\end{equation}
For $(g,n) = (0,1)$,
\begin{equation}
\frac{Q^{(2)}_{0,2}(z; z_1)}{dx(z)^2} = - d_{z_1} \left(\frac{1}{(x(z) - x(z_1)) }\frac{W_{0,1}(z_1)}{dx(z_1)}  \right)- 2 P_2(z_1; \tau) d z_1,
\end{equation}
while for $(g,n) = (0,0)$,
\begin{equation}
\frac{Q^{(2)}_{0,1}(z)}{dx(z)^2} = - 4 x(z)^3 + g_2(\tau)x(z) + g_3(\tau).
\end{equation}
\end{prop}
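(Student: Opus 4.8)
The plan is to prove Proposition \ref{p:main1} by induction on $2g-2+n$, following the strategy of Lemma \ref{l:sec} but now running the residue-computation argument in full generality. The base cases $(g,n)=(0,0)$ and $(g,n)=(0,1)$ are exactly the two preceding lemmas, so nothing new is needed there. For the inductive step with $2g-2+n \geq 1$, I would start from the definition of $Q^{(2)}_{g,n+1}(z;\mathbf{z})$ and rewrite it, just as in the proof of Lemma \ref{l:sec}, as a sum of residues in an auxiliary variable $z'$:
\begin{equation*}
\frac{Q^{(2)}_{g,n+1}(z;\mathbf{z})}{dx(z)^2} = - \underset{z'=z}{\text{Res}}~ \frac{1}{x(z')-x(z)} \frac{\mathcal{R}^{(2)}W_{g,n+1}(z',-z';\mathbf{z})}{dx(z')} - (\text{analogous term at }z'=-z),
\end{equation*}
where I use that the topological recursion kernel contributes the factor $1/(x(z')-x(z))$ and that $Q^{(2)}$ differs from $\mathcal{R}^{(2)}W$ only by the $W_{0,1}$-terms, which are handled as in the two base lemmas. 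The key point is that the one-form in $z'$ obtained this way is globally meromorphic on the compact surface $\Sigma$, so the sum of all its residues vanishes, and I can trade the residues at $z'=\pm z$ for minus the sum of residues at the remaining poles.

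The main work is then identifying those remaining poles and evaluating the residues there. By the previous Lemma 4.7 (the first unnumbered lemma in the excerpt), the quantity $dz'\,Q^{(2)}_{g,n+1}(z';\mathbf{z})/dx(z')^2$ has poles only at $z' = \pm z_i$; combined with the explicit pole of $1/(x(z')-x(z))$ at $z'=\pm z$ and the pole of $y(z') = W_{0,1}(z')/dx(z')$ at $z'=0$, the full list of poles of the auxiliary one-form is $z' = \pm z$, $z' = \pm z_i$ for $i=1,\ldots,n$, and $z'=0$. The residues at $z' = z_i$ and $z'=-z_i$ should combine (using the symmetry $z'\mapsto -z'$ and the evenness of $x$) into the derivative terms $-d_{z_i}\big(\frac{1}{x(z)-x(z_i)}\frac{W_{g,n}(\mathbf{z})}{dx(z_i)}\big)$; here I would use the standard fact that $W_{g,n}$ has double poles at ramification points but that the correlators are regular at $z'=z_i$ in the relevant variable, so that only the simple pole from $B(z',z_i)$-type contributions (hidden inside $Q^{(2)}$) contributes, exactly paralleling Lemma \ref{l:sec}. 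The residue at $z'=0$ should produce the boundary term $-2\big(W_{g,n+1}(z',\mathbf{z})/dz'\big)_{z'=0}$, using $\wp(z';\tau)\sim 1/z'^2$, $\wp'(z';\tau)\sim -2/z'^3$ near $z'=0$ together with the precise local behavior of $W_{g,n+1}$ at the puncture.

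The step I expect to be the main obstacle is the bookkeeping of the residue at $z'=0$ and the verification that it is exactly $-2(W_{g,n+1}(z',\mathbf{z})/dz')_{z'=0}$ rather than some more complicated combination of Taylor coefficients of $W_{g,n+1}$ at the puncture. In \cite{BE} this puncture does not arise (the Newton polygon has no interior point, so $y$ has no pole over the relevant point), so this is genuinely the new phenomenon, already flagged in the footnote to the proof of Lemma \ref{l:sec}. To control it I would expand $1/(x(z')-x(z)) = z'^2 + O(z'^4)$ near $z'=0$, expand $W_{0,1}(z')/dx(z') = y(z') \sim -2/z'^3$, and track which order of the Laurent expansion of $\mathcal{R}^{(2)}W_{g,n+1}(z',-z';\mathbf{z})$ (equivalently, via the topological recursion one order up, of $W_{g,n+1}$ itself) survives; I would need the fact that $W_{g,n+1}$ is regular at $z'=0$ and odd-or-even enough in $z'$ that its value-over-$dz'$ at $z'=0$ is the only surviving contribution. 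A secondary but routine obstacle is making sure the inductive structure is consistent: the right-hand side involves $W_{g,n}$ and $W_{g,n+1}$, but the latter is itself \emph{defined} by the topological recursion as a residue of $K$ times $\mathcal{R}^{(2)}W_{g,n+1}$, so one should check there is no circularity — in fact the identity is really a reorganization of the definition of $W_{g,n+1}$, so the "induction" is more properly just a direct computation valid for each $(g,n)$ with $2g-2+n\geq 1$ once the two base lemmas are in hand.
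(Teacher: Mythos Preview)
Your overall strategy is the paper's: write the left-hand side as a sum of residues at $z'=\pm z$ of a meromorphic one-form on $\Sigma$, use that the sum of all residues vanishes, and evaluate the remaining residues at $z'=\pm z_i$ and at the new pole $z'=0$. You also correctly identify the pole at $z'=0$ as the genuinely new feature compared to \cite{BE}, and you are right that no induction is needed --- it is a direct computation for each $(g,n)$.

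There is, however, a slip in your choice of auxiliary one-form. You write it with $\mathcal{R}^{(2)}W_{g,n+1}(z',-z';\mathbf{z})/dx(z')$, but this object has poles at the finite ramification points $w_1,w_2,w_3$ (the stable correlators inside $\mathcal{R}^{(2)}W$ are singular there and dividing once by $dx$ does not kill those poles). Your pole list does not include these, so the residue bookkeeping would be incomplete. The paper instead uses
\[
\frac{y(z')\,dz'}{x(z')-x(z)}\,\frac{Q^{(2)}_{g,n+1}(z';\mathbf{z})}{dx(z')^2},
\]
i.e.\ it works with $Q^{(2)}$ itself, precisely because the first lemma guarantees that $dz'\,Q^{(2)}_{g,n+1}(z';\mathbf{z})/dx(z')^2$ has poles only at $z'=\pm z_i$. (Your subsequent paragraph actually invokes exactly this lemma and the pole of $y(z')$ at $0$, so your pole analysis already matches the $Q^{(2)}$-formulation; only the displayed formula is off.)

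The residue at $z'=0$ is also simpler than you fear. Since $dx(z')^{-2}$ has a zero of order four at $z'=0$, every term in $Q^{(2)}_{g,n+1}/dx^2$ vanishes there except the two $W_{0,1}\cdot W_{g,n+1}$ cross-terms; combining $y(z')\sim -2/z'^3$ with $\frac{1}{x(z')-x(z)}\sim z'^2$ then picks out exactly $-2\bigl(W_{g,n+1}(z',\mathbf{z})/dz'\bigr)_{z'=0}$, with no higher Taylor coefficients surviving.
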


\begin{proof}
The cases $(g,n)=(0,0)$ and $(g,n)=(0,1)$ were proven in the two previous lemmas. So let us focus on $2g-2+n \geq 0$.

First, notice that we can write
\begin{align}
\frac{Q_{g,n+1}^{(2)}(z;\mathbf{z})}{dx(z)^2} = &\frac{y(z) dz}{dx(z)}\frac{Q_{g,n+1}^{(2)}(z;\mathbf{z})}{dx(z)^2}   \nonumber\\
= &\frac{1}{2} \left(- \frac{y(-z) dz}{dx(z)}\frac{Q_{g,n+1}^{(2)}(-z;\mathbf{z})}{dx(z)^2}+\frac{y(z) dz}{dx(z)}\frac{Q_{g,n+1}^{(2)}(z;\mathbf{z})}{dx(z)^2} \right) \nonumber\\
=&\frac{1}{2} \left(  \underset{z'=z}{\text{Res}}~ \frac{y(z') dz'}{x(z')-x(z)}\frac{Q_{g,n+1}^{(2)}(z';\mathbf{z})}{dx(z')^2} \right. \nonumber\\  & \left. \qquad +\underset{z'=-z}{\text{Res}}~ \frac{y(z') dz'}{x(z')-x(z)}\frac{Q_{g,n+1}^{(2)}(z';\mathbf{z})}{dx(z')^2}  \right).
\end{align}
The expression
\begin{equation}
\frac{y(z') dz'}{x(z')-x(z)}\frac{Q_{g,n+1}^{(2)}(z';\mathbf{z})}{dx(z')^2}  
\end{equation}
is a meromorphic one-form in $z'$ on a compact Riemann surface, hence the sum of its residues must be zero. But recall that the one-forms
\begin{equation}
dz' \left( \frac{Q_{g,n+1}^{(2)}(z';\mathbf{z})}{dx(z')^2} \right)
\end{equation}
can only have poles (in $z'$) at $z' = \pm z_i$, $i=1,\ldots,n$. Therefore, the expression
\begin{equation}
\frac{y(z') dz'}{x(z')-x(z)}\frac{Q_{g,n+1}^{(2)}(z';\mathbf{z})}{dx(z')^2}  
\end{equation}
can only have poles (in $z'$) at $z' = \pm z$, $z' = \pm z_i$, $i=1,\ldots,n$, and at the pole of $y(z')$, that is, at $z'=0$.\footnote{Again, this last pole at $z'=0$ does not occur when the spectral curve is such that its Newton polygon has no interior point. This is what makes the Weierstrass spectral curve different from the curves studied in \cite{BE}.} Thus we get
\begin{align}
\frac{Q_{g,n+1}^{(2)}(z;\mathbf{z})}{dx(z)^2} =& \frac{1}{2}\left( \sum_{i=1}^n  \underset{z'=\pm z_i}{\text{Res}}~ \frac{y(z') dz'}{x(z)-x(z')}\frac{Q_{g,n+1}^{(2)}(z';\mathbf{z})}{dx(z')^2}  \right.  \nonumber \\ & \qquad \left. + \underset{z'=0}{\text{Res}}~ \frac{y(z') dz'}{x(z)-x(z')}\frac{Q_{g,n+1}^{(2)}(z';\mathbf{z})}{dx(z')^2}  \right) \nonumber\\
=&\frac{1}{2} \sum_{i=1}^n \left(\underset{z'= z_i}{\text{Res}}~ \frac{y(z') dz'}{x(z)-x(z')}\frac{B(z',z_i) W_{g,n}(-z', \mathbf{z} \setminus \{z _i \})}{dx(z')^2}  \right. \nonumber\\ 
& \qquad \left.+\underset{z'= -z_i}{\text{Res}}~ \frac{y(z') dz'}{x(z)-x(z')}\frac{B(-z',z_i) W_{g,n}(z', \mathbf{z} \setminus \{z _i \})}{dx(z')^2} \right) \nonumber\\
& +\left(\frac{Q_{g,n+1}^{(2)}(z';\mathbf{z})}{dx(z')^2} \right)_{z'=0}\nonumber\\
=& - \sum_{i=1}^n d_{z_i} \left(  \frac{y(z_i) dz_i}{x(z)-x(z_i)}\frac{W_{g,n}(\mathbf{z})}{dx(z_i)^2} \right) - 2 \left(\frac{W_{g,n+1}(z';\mathbf{z})}{dz'}  \right)_{z'=0} \nonumber\\
=&- \sum_{i=1}^n d_{z_i} \left(  \frac{1}{x(z)-x(z_i)}\frac{W_{g,n}(\mathbf{z})}{dx(z_i)} \right) - 2 \left(\frac{W_{g,n+1}(z';\mathbf{z})}{dz'}  \right)_{z'=0} .
\end{align}
Here we used the fact that as $z' \to 0$, we have that
\begin{equation}
\frac{Q_{g,n+1}^{(2)}(z';\mathbf{z})}{dx(z')^2}  \to \frac{W_{0,1}(z') W_{g,n+1}(-z';\mathbf{z}) + W_{0,1}(-z') W_{g,n+1}(z';\mathbf{z})}{dx(z')^2},
\end{equation}
since all other terms vanish because $z'=0$ is a pole of $dx(z')$.

\end{proof}

The following corrollary then follows directly from the definition of $Q_{g,n+1}^{(2)}(z;\mathbf{z})$ and the fact that
\begin{equation}
W_{0,2}(z,z_1) + W_{0,2}(-z,z_1) = \frac{dx(z) dx(z_1)}{(x(z) - x(z_1))^2}.
\end{equation}
\begin{cor}\label{c:mess}
For $2g-2+n \geq 0$,
\begin{multline}
- \frac{W_{g-1,n+2}(-z,z,\mathbf{z})}{dx(z)^2} + \sum_{g_1+g_2 =g} \sum_{I \cup J = \mathbf{z}} \frac{W_{g_1, |I|+1}(-z,I)}{dx(z)} \frac{W_{g_2, |J|+1}(-z,J)}{dx(z)} \\-  \sum_{i=1}^n \left(\frac{ dx(z_i)}{(x(z)-x(z_i))^2}\frac{W_{g,n}(-z, \mathbf{z} \setminus \{ z_i \})}{ dx(z) } \right. \\ \left. -d_{z_i} \left(\frac{1}{x(z)-x(z_i)} \frac{W_{g,n}(-z_i, \mathbf{z} \setminus \{ z_i \} )}{dx(z_i)} \right)  \right)
+2 \left(\frac{W_{g,n+1}(-z', \mathbf{z})}{dz'} \right)_{z'=0} = 0.
\label{eq:mess}
\end{multline}
For $(g,n)=(0,1)$,
\begin{multline}
 2 \frac{W_{0,2}(-z,z_1)}{dx(z)} \frac{W_{0,1}(-z)}{dx(z)} - \frac{dx(z_1)}{(x(z)-x(z_1))^2} \frac{W_{0,1}(-z)}{dx(z)} \\+d_{z_1} \left(\frac{1}{(x(z) - x(z_1)) }\frac{W_{0,1}(-z_1)}{dx(z_1)}  \right)
 -2 P_2(z_1; \tau) d z_1 = 0,
\end{multline}
while for $(g,n) = (0,0)$,
\begin{equation}
2 \frac{W_{0,1}(-z) W_{0,1}(-z)}{dx(z)^2} - 4 x(z)^3 + g_2(\tau) x(z) + g_3(\tau) = 0.
\end{equation}
\end{cor}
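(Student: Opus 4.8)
The plan is to derive all three displayed identities directly from Proposition~\ref{p:main1}, feeding in only the definition of $Q_{g,n+1}^{(2)}$, the relation $W_{0,2}(z,z_1)+W_{0,2}(-z,z_1)=\dfrac{dx(z)\,dx(z_1)}{(x(z)-x(z_1))^2}$, and the parities $W_{0,1}(-z)=-W_{0,1}(z)$ and $\tfrac{W_{0,1}(\pm z)}{dx(z)}=\pm y(z)$, which come from $\wp'(-z;\tau)=-\wp'(z;\tau)$; for the $(0,0)$ case I would also use the Weierstrass equation $y^2=4x^3-g_2(\tau)x-g_3(\tau)$. No new analytic input is involved, so the whole proof is a reorganization of terms. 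I would dispatch $(0,0)$ first: the Lemma computing $Q^{(2)}_{0,1}(z)/dx(z)^2$ gives $Q^{(2)}_{0,1}(z)/dx(z)^2=y(z)y(-z)=-y(z)^2$, which by Proposition~\ref{p:main1} equals $-4x(z)^3+g_2(\tau)x(z)+g_3(\tau)$, and the Weierstrass equation then turns this into the stated identity.

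Next I would treat $(g,n)=(0,1)$, which is the template for the general case. Expanding the left side of Lemma~\ref{l:sec} via $Q^{(2)}_{0,2}(z;z_1)=W_{0,1}(z)W_{0,2}(-z,z_1)+W_{0,2}(z,z_1)W_{0,1}(-z)$ turns that lemma into
\[
\frac{W_{0,1}(z)}{dx(z)}\frac{W_{0,2}(-z,z_1)}{dx(z)}+\frac{W_{0,2}(z,z_1)}{dx(z)}\frac{W_{0,1}(-z)}{dx(z)}=-d_{z_1}\!\left(\frac{1}{x(z)-x(z_1)}\frac{W_{0,1}(z_1)}{dx(z_1)}\right)-2P_2(z_1;\tau)\,dz_1 .
\]
In the identity to be proved I would then rewrite $\dfrac{dx(z_1)}{(x(z)-x(z_1))^2}$ as $\dfrac{W_{0,2}(z,z_1)+W_{0,2}(-z,z_1)}{dx(z)}$, use $\tfrac{W_{0,1}(-z_1)}{dx(z_1)}=-\tfrac{W_{0,1}(z_1)}{dx(z_1)}$, substitute the display above to eliminate the $d_{z_1}$ and $P_2$ terms, and collect: everything should collapse to $\dfrac{W_{0,2}(-z,z_1)}{dx(z)}\big(\tfrac{W_{0,1}(z)}{dx(z)}+\tfrac{W_{0,1}(-z)}{dx(z)}\big)$, which is $0$ by the parity of $y$.

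For $2g-2+n\ge 0$ I would run the same scheme. Expand $Q_{g,n+1}^{(2)}$ by definition and split $\sum_{g_1+g_2=g}\sum_{I\cup J=\mathbf{z}}$ into: the two unstable terms with $(g_1,|I|)=(0,0)$ or $(g_2,|J|)=(0,0)$ (giving $W_{0,1}(\pm z)W_{g,n+1}(\mp z,\mathbf{z})$); the $n$ pairs of terms carrying a single factor $W_{0,2}(\pm z,z_i)$ (giving $W_{0,2}(z,z_i)W_{g,n}(-z,\mathbf{z}\setminus\{z_i\})+W_{0,2}(-z,z_i)W_{g,n}(z,\mathbf{z}\setminus\{z_i\})$); and the remaining doubly stable terms. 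Substituting $\dfrac{W_{0,2}(z,z_i)+W_{0,2}(-z,z_i)}{dx(z)}=\dfrac{dx(z_i)}{(x(z)-x(z_i))^2}$ converts the $W_{0,2}$ pairs into the $\dfrac{dx(z_i)}{(x(z)-x(z_i))^2}$ terms of the Corollary; the $d_{z_i}$ terms are precisely those furnished by the right side of Proposition~\ref{p:main1} (with the argument of $W_{g,n}$ there read as $-z_i$ in the $i$-th slot, which is what the residue computation in its proof actually produces); and the $z'=0$ term is matched via $\big(\tfrac{W_{g,n+1}(-z',\mathbf{z})}{dz'}\big)_{z'=0}=-\big(\tfrac{W_{g,n+1}(z',\mathbf{z})}{dz'}\big)_{z'=0}$. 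What is left over should again telescope into $\big(\tfrac{W_{0,1}(z)}{dx(z)}+\tfrac{W_{0,1}(-z)}{dx(z)}\big)$ times a sum of correlation functions, hence vanish.

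The hard part is not conceptual but organizational: one has to track, term by term, which sheet ($z$ versus $-z$) and which variable ($z_i$ versus $-z_i$) each correlation function sits on, along with the signs coming from $\wp'(-z;\tau)=-\wp'(z;\tau)$ and from $d(-z_i)=-dz_i$, and one must correctly peel off the unstable and the single-$W_{0,2}$ contributions inside the double sum. That is where sign errors creep in; the $(0,1)$ computation above is the model for how the cancellations go in general.
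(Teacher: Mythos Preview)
Your proposal is correct and is essentially the paper's approach: the paper states only that the corollary ``follows directly from the definition of $Q^{(2)}_{g,n+1}(z;\mathbf{z})$ and the fact that $W_{0,2}(z,z_1)+W_{0,2}(-z,z_1)=\dfrac{dx(z)\,dx(z_1)}{(x(z)-x(z_1))^2}$,'' and your plan is precisely to expand $Q^{(2)}$, feed in this $W_{0,2}$ identity together with the parities of $W_{0,1}$, and collapse the leftover against $\tfrac{W_{0,1}(z)}{dx(z)}+\tfrac{W_{0,1}(-z)}{dx(z)}=0$. Your observation that the $d_{z_i}$ term in Proposition~\ref{p:main1} really carries $W_{g,n}(-z_i,\mathbf{z}\setminus\{z_i\})$ (as the residue computation in its proof shows) is exactly the bookkeeping needed to match the corollary, and your $(0,1)$ verification is the right template for the stable case.
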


\subsection{Integration}

The next step is to integrate the equation above. Following the notation in \cite{BE}, we choose the integration divisor to be $D = [z]- [0]$, since $z=0$ is the only pole of $x(z)$. While $z=0$ is in $R$, it is easy to show that the correlation functions do not have poles at $z=0$, therefore the integrals will converge. 

\begin{defn}
We define
\begin{equation}
G_{g,n+1}(z; \mathbf{z}) =  \int_0^{z_1} \cdots \int_0^{z_n} W_{g,n+1}(-z, z_1, \ldots, z_n).
\label{eq:Ggndef}
\end{equation}
Note that we are integrating in each variable $z_1, \ldots, z_n$, with base point $0$, but we are not integrating in the variable $z$.
\end{defn}

Now we can integrate Corollary \ref{c:mess} in $z_1, \ldots, z_n$. We get:

\begin{lem}
For $2g-2+n \geq 0$,
\begin{multline}
 -\left(\frac{\partial}{\partial x(z_{n+1}) }\frac{G_{g-1,n+2}(z;\mathbf{z}, z_{n+1})}{dx(z)} \right)_{z_{n+1}=z}\\+ \sum_{g_1+g_2 =g} \sum_{I \cup J = \mathbf{z}} \frac{G_{g_1, |I|+1}(z;I)}{dx(z)} \frac{G_{g_2, |J|+1}(z;J)}{dx(z)} \\-\sum_{i=1}^n \left(\frac{ 1}{x(z_i)-x(z)} \left( \frac{G_{g,n}(z_i; \mathbf{z} \setminus \{ z_i \} )}{dx(z_i)}-\frac{G_{g,n}(z; \mathbf{z} \setminus \{ z_i \})}{ dx(z) }  \right) \right)  
  \\+ 2 \left(\frac{G_{g,n+1}(z';\mathbf{z})}{dz'} \right)_{z'=0}= 0.
\end{multline}
For $(g,n) = (0,1)$,
\begin{multline}
 2 \frac{G_{0,2}(z;z_1)}{dx(z)} \frac{G_{0,1}(z)}{dx(z)} - \frac{1}{x(z_1) - x(z) } \left(\frac{G_{0,1}(z_1)}{dx(z_1)}- \frac{G_{0,1}(z)}{dx(z)} \right) \\
 \\-2 P_1(z_1; \tau) = 0,
\end{multline}
while for $(g,n) = (0,0)$,
\begin{equation}
2 \frac{G_{0,1}(z) G_{0,1}(z)}{dx(z)^2} - 4 x(z)^3 + g_2(\tau) x(z) + g_3(\tau) = 0.
\end{equation}
\end{lem}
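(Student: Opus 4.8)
The plan is to obtain the stated identities by integrating the three equations of Corollary \ref{c:mess} in the variables $z_1, \ldots, z_n$ from $0$ to $z_i$, term by term, and recognizing the integrated expressions as the $G_{g,n+1}$. I would start with the $(g,n)=(0,0)$ case, which requires no integration at all: it is literally the $(0,0)$ equation of Corollary \ref{c:mess} rewritten using $W_{0,1}(-z) = W_{0,1}(z)$ (since $x$ is even, $y(-z)=-y(z)$, so $W_{0,1}(-z) = y(-z)dx(-z) = y(z)dx(z) = W_{0,1}(z)$) together with the definition $G_{0,1}(z) = W_{0,1}(-z)$. Then $2 W_{0,1}(-z)^2/dx(z)^2 = 2 G_{0,1}(z)^2/dx(z)^2$, giving the claimed relation.

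Next I would handle the $(g,n)=(0,1)$ case. Here there is a single integration variable $z_1$, and I integrate the $(0,1)$ equation of Corollary \ref{c:mess} from $0$ to $z_1$. The first term integrates to $2 (G_{0,2}(z;z_1)/dx(z))(G_{0,1}(z)/dx(z))$ since only the $B(-z,z_1)$ factor depends on $z_1$ and $\int_0^{z_1} W_{0,2}(-z,z_1) = G_{0,2}(z;z_1)$ — here I should note that $G_{0,2}(z;\cdot)$ has base point $0$ and that the constant of integration vanishes because $\omega^{z_1-0}$-type primitives vanish appropriately; the third term $d_{z_1}(\cdots)$ integrates directly to $\frac{1}{x(z)-x(z_1)}(W_{0,1}(-z_1)/dx(z_1)) = \frac{1}{x(z)-x(z_1)}(G_{0,1}(z_1)/dx(z_1))$ minus its value at $z_1=0$, which is $-\frac{1}{x(z)-x(z_1)}(G_{0,1}(z)/dx(z))$ after using that the limit $z_1 \to 0$ picks out $G_{0,1}(z)/dx(z)$ in a way that matches the structure of the general formula (this uses that $x(z_1)\to\infty$, so one must be careful: in fact the surviving term is more subtle, and one recombines the second term $-dx(z_1)/(x(z)-x(z_1))^2 \cdot W_{0,1}(-z)/dx(z)$ with the integrated $d_{z_1}$ term); finally $-2P_2(z_1;\tau)dz_1$ integrates to $-2P_1(z_1;\tau)$ since $P_1' = P_2$ and $P_1$ is odd, so the constant is fixed to $0$.

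For the general case $2g-2+n\ge 0$ I would integrate \eqref{eq:mess} in $z_1,\dots,z_n$ successively. The key observations are: (i) $\int_0^{z_i} d_{z_i}(\cdots) dz_i$ is just the fundamental theorem of calculus, producing the evaluated difference between $z_i$ and $0$; (ii) the combination of the term $\frac{dx(z_i)}{(x(z)-x(z_i))^2}\frac{W_{g,n}(-z,\cdots)}{dx(z)}$ with the $d_{z_i}$-term is exactly the $z_i$-derivative of $\frac{1}{x(z_i)-x(z)}(G_{g,n}(z_i;\cdots)/dx(z_i) - G_{g,n}(z;\cdots)/dx(z))$, which is the content of the partial-fraction identity used throughout \cite{BE}; (iii) the quadratic sum $\sum W_{g_1}W_{g_2}$ integrates variable-by-variable into $\sum G_{g_1}G_{g_2}$ because the disjoint-subset structure lets each $z_i$ act on only one factor; (iv) the term $W_{g-1,n+2}(-z,z,\mathbf z)/dx(z)^2$ integrates into $(\partial/\partial x(z_{n+1}))(G_{g-1,n+2}(z;\mathbf z,z_{n+1})/dx(z))$ evaluated at $z_{n+1}=z$, after writing $1/dx(z)^2 = (1/dx(z))\cdot(\partial/\partial x(z_{n+1}))|_{z_{n+1}=z}$ acting on a primitive. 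One must check at each stage that the constants of integration vanish, which follows from the vanishing of $W_{g,n}$ (and hence their primitives based at $0$) in the appropriate limits, or is absorbed into the definition of $G_{g,n}$.

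The main obstacle I anticipate is bookkeeping the constants of integration and the limits as $z_i \to 0$, where $x(z_i) \to \infty$: several terms individually blow up or vanish and only the correct recombination (the partial-fraction identity (ii) above, together with the $z'=0$ evaluation terms) gives a finite, clean result. In particular, verifying that $\int_0^{z_1}\big[{-}dx(z_1)/(x(z)-x(z_1))^2\big]\,(W_{0,1}(-z)/dx(z))$ combines with $\int_0^{z_1} d_{z_1}\big(\tfrac{1}{x(z)-x(z_1)}\tfrac{W_{0,1}(-z_1)}{dx(z_1)}\big)$ to produce precisely $-\frac{1}{x(z_1)-x(z)}\big(\frac{G_{0,1}(z_1)}{dx(z_1)} - \frac{G_{0,1}(z)}{dx(z)}\big)$ requires care — it is essentially an application of the identity $d_{z_1}\big(\frac{f(z_1)}{x(z)-x(z_1)}\big) = \frac{d_{z_1}f(z_1)}{x(z)-x(z_1)} + \frac{f(z_1)\,dx(z_1)}{(x(z)-x(z_1))^2}$ rearranged, and one has to confirm the boundary term at $z_1 = 0$ lands on $\frac{G_{0,1}(z)}{dx(z)}$ as written rather than generating a spurious constant.
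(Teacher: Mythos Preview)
Your overall strategy --- integrate Corollary~\ref{c:mess} term by term in $z_1,\dots,z_n$ and identify the boundary contributions at $z_i=0$ --- is exactly the paper's approach, and your points (i)--(iv) for the stable case are correct. The genuine gap is in your handling of the $(g,n)=(0,1)$ boundary terms.

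You write that ``$P_1$ is odd, so the constant is fixed to $0$,'' and that the boundary of the $d_{z_1}$-term at $z_1=0$ ``lands on $G_{0,1}(z)/dx(z)$.'' Neither is right. The function $P_1(z_1;\tau) = -\zeta(z_1;\tau)+G_2(\tau)z_1$ is odd but has a simple pole at $z_1=0$, with $P_1(z_1;\tau)\sim -1/z_1$; so the lower limit of $\int_0^{z_1}(-2P_2)\,dz_1' = -2P_1(z_1')\big|_0^{z_1}$ is divergent, not zero. Likewise, the lower limit of the $d_{z_1}$-term gives
\[
\lim_{z_1\to 0}\ \frac{1}{x(z)-x(z_1)}\,\frac{G_{0,1}(z_1)}{dx(z_1)}
\ =\ \lim_{z_1\to 0}\ \frac{-y(z_1)}{x(z)-x(z_1)}\ \sim\ \frac{2/z_1^3}{-1/z_1^2}\ =\ -\frac{2}{z_1},
\]
which is also divergent --- it does not produce $G_{0,1}(z)/dx(z)$. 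The term $\tfrac{1}{x(z_1)-x(z)}\,G_{0,1}(z)/dx(z)$ in the lemma instead comes entirely from integrating the second summand $-\tfrac{dx(z_1)}{(x(z)-x(z_1))^2}\,W_{0,1}(-z)/dx(z)$, whose boundary at $z_1=0$ vanishes cleanly since $x(z_1)\to\infty$. What makes the $(0,1)$ case work is that the two divergent boundary pieces cancel: using $x(z_1)\sim z_1^{-2}$, $y(z_1)\sim -2z_1^{-3}$, $P_1(z_1)\sim -z_1^{-1}$, one checks
\[
\lim_{z_1\to 0}\left[\frac{1}{x(z_1)-x(z)}\Big(\frac{G_{0,1}(z_1)}{dx(z_1)}-\frac{G_{0,1}(z)}{dx(z)}\Big)+2P_1(z_1;\tau)\right]=0.
\]
This is the computation the paper actually performs. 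Once you replace your ``odd $\Rightarrow$ vanishes at $0$'' and ``lands on $G_{0,1}(z)/dx(z)$'' claims with this cancellation, your argument goes through and coincides with the paper's.
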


\begin{proof}
The integration is straightforward; all we need to do is be careful with the base point $0$. 

For $2g-2+n \geq 0$, integrating the terms inside the summation $\sum_{i=1}^n$ gives rise to a term of the form
\begin{equation}
\sum_{i=1}^n \lim_{z_i = 0}\left(\frac{ 1}{x(z)-x(z_i)}\frac{G_{g,n}(z; \mathbf{z} \setminus \{ z_i \})}{ dx(z) }-\frac{1}{x(z)-x(z_i)} \frac{G_{g,n}(z_i; \mathbf{z} \setminus \{ z_i \} )}{dx(z_i)}   \right) .
\end{equation}
The first term clearly vanishes since $z_i=0$ is a pole of $x(z_i)$. As for the second term, it also vanishes, because $G_{g,n}(z_i; \mathbf{z} \setminus \{ z_i \} )$ cannot have a pole at $z_i=0$. Hence we get the expression in the Lemma.

For $(g,n)=(0,1)$, integrating in $z_1$ gives rise to a term of the form
\begin{multline}
\lim_{z_1 = 0} \left[ \frac{1}{x(z_1) - x(z) } \left(\frac{G_{0,1}(z_1)}{dx(z_1)}- \frac{G_{0,1}(z)}{dx(z)} \right) + 2 P_1(z_1;\tau) \right] \\
= \lim_{z_1 = 0} \left[ \frac{1}{x(z_1) - x(z) } \left(-y(z_1)+ y(z) \right) + 2 P_1(z_1;\tau) \right] .
\end{multline}
But since, as $z_1 \to 0$,
\begin{equation}
x(z_1) = \wp(z_1;\tau) \to \frac{1}{z_1^2}, \qquad y(z_1) = \wp'(z_1;\tau) \to - \frac{2}{z_1^3}, \qquad P_1(z_1;\tau) \to - \frac{1}{z_1},
\end{equation}
we see that the limit actually vanishes. Thus we get expression in the Lemma.

As for $(g,n)=(0,0)$, we are not integrating so the expression is obvious.
\end{proof}

\subsection{Principal specialization}

Then we ``principal specialize'' by setting $z_1 = \ldots = z_n = z$. We define
\begin{equation}
\hat{G}_{g,n+1}(z'; z) =G_{g,n+1}(z'; z, \ldots, z). 
\end{equation}
We get:
\begin{lem}\label{l:ps}
For $2g-2+n \geq 0$,
\begin{multline}
 -\frac{1}{n+1} \left(\frac{d}{d x(z) }\frac{\hat{G}_{g-1,n+2}(z';z)}{dx(z')} \right)_{z'=z}\\+ \sum_{g_1+g_2 =g} \sum_{m=0}^n \frac{n!}{m! (n-m)!} \frac{\hat{G}_{g_1, m+1}(z;z)}{dx(z)} \frac{\hat{G}_{g_2, n-m+1}(z;z)}{dx(z)} \\-n \left( \frac{d}{dx(z')}  \frac{\hat{G}_{g,n}(z'; z )}{dx(z')} \right)_{z'=z}     
 + 2 \left(\frac{\hat{G}_{g,n+1}(z';z)}{dz'} \right)_{z'=0}= 0.
\end{multline}
For $(g,n)=(0,1)$,
\begin{equation}
 2 \frac{\hat{G}_{0,2}(z;z)}{dx(z)} \frac{\hat{G}_{0,1}(z)}{dx(z)} - \frac{d}{dx(z)} \left( \frac{\hat{G}_{0,1}(z)}{dx(z)} \right) 
 -2 P_1(z; \tau) = 0,
\end{equation}
while for $(g,n) = (0,0)$,
\begin{equation}
2 \frac{\hat{G}_{0,1}(z) \hat{G}_{0,1}(z)}{dx(z)^2} - 4 x(z)^3 + g_2(\tau) x(z) + g_3(\tau) = 0.
\end{equation}

\end{lem}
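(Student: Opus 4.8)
The plan is to obtain Lemma~\ref{l:ps} directly from the previous Lemma by carrying out the principal specialization $z_1=\cdots=z_n=z$ in each of its three cases. Apart from the symmetry of the $G_{g,n+1}$ in their integrated variables, a one-line subset count, and a single application of l'H\^opital's rule, every step is a relabelling.

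For $2g-2+n\geq 0$ I would argue term by term. Since $G_{g-1,n+2}(z;\mathbf{z},z_{n+1})$ is symmetric in its $n+1$ integrated arguments (inherited from the symmetry of $W_{g-1,n+2}$), the chain rule gives
\[
\frac{d}{dx(z)}\,\hat{G}_{g-1,n+2}(z';z)=(n+1)\Bigl(\tfrac{\partial}{\partial x(z_{n+1})}\,G_{g-1,n+2}(z';z,\dots,z,z_{n+1})\Bigr)_{z_{n+1}=z},
\]
so after dividing by $dx(z')$ and setting $z'=z$ the first term acquires the factor $\tfrac1{n+1}$. For the quadratic term, once all the $z_i$ are set equal to $z$ every ordered pair of disjoint subsets $(I,J)$ with $I\cup J=\mathbf{z}$ and $|I|=m$ contributes the identical quantity $\tfrac{\hat{G}_{g_1,m+1}(z;z)}{dx(z)}\,\tfrac{\hat{G}_{g_2,n-m+1}(z;z)}{dx(z)}$, and there are $\binom{n}{m}$ of them, so $\sum_{I\cup J=\mathbf{z}}$ collapses to $\sum_{m=0}^{n}\binom{n}{m}$. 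For the third term, each summand $-\tfrac{1}{x(z_i)-x(z)}\bigl(\tfrac{G_{g,n}(z_i;\mathbf{z}\setminus\{z_i\})}{dx(z_i)}-\tfrac{G_{g,n}(z;\mathbf{z}\setminus\{z_i\})}{dx(z)}\bigr)$ has a removable singularity at $z_i=z$, the bracket vanishing there, so its value on the diagonal is the l'H\^opital limit $\bigl(\tfrac{d}{dx(z')}\tfrac{\hat{G}_{g,n}(z';z)}{dx(z')}\bigr)_{z'=z}$, and the $n$ equal summands assemble into the coefficient $-n$. The last term is copied over verbatim with $\hat{G}$ in place of $G$. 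The $(0,1)$ case is handled the same way, the only substantive point being the $z_1\to z$ limit of its second term, again by l'H\^opital, and the $(0,0)$ case requires no specialization at all.

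The one point that genuinely needs justification — and what I would flag as the main (though mild) difficulty — is the legitimacy of restricting the previous Lemma's identity, which holds as an equality of meromorphic functions of $(z_1,\dots,z_n)$, to the diagonal $z_1=\cdots=z_n=z$: one must check that no building block develops a pole at a generic $z$ when the points collide. For all stable indices $(g,n)\neq(0,2)$ occurring here this is immediate, since the relevant $W_{g,n}$, hence $G_{g,n}$, has poles only along $R$ and a generic $z$ avoids $R$; for the $(0,2)$ contributions one uses that $G_{0,2}(z;z_1)=-\bigl(P_1(z+z_1;\tau)-P_1(z;\tau)\bigr)dz$ involves $P_1$ evaluated at $z+z_1$ and is therefore regular at $z_1=z$. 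Once this regularity is in hand, the derivatives, the $z'\to 0$ and $z_i\to z$ limits, and the specialization all commute, so the term-by-term reduction above yields precisely the three displayed identities.
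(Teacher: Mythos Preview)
Your argument is correct and follows exactly the approach the paper takes: the paper's proof says only that the specialization is ``straightforward, the only terms that need to be treated carefully are those that give rise to the derivatives,'' and you have carefully spelled out precisely those two derivative terms (the $\tfrac{1}{n+1}$ from the symmetry of $G_{g-1,n+2}$ in its integrated variables, and the $-n$ from the difference quotient in the third term), together with the binomial count in the quadratic sum. The regularity check you add at the end is a welcome extra care that the paper leaves implicit.
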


\begin{proof}
This is straightforward, the only terms that need to be treated carefully are those that give rise to the derivatives.
\end{proof}

Finally, we sum over $g$ and $n$. More precisely, we define
\begin{equation}
\xi_1(z';z) = - \sum_{g,n=0}^\infty \frac{\hbar^{2g+n}}{n!} \frac{\hat{G}_{g,n+1}(z'; z) }{dx(z')}.
\end{equation}
Multiplying the equations in Lemma \ref{l:ps} by $\frac{\hbar^{2g+n}}{n!} $, and summing over $g$ and $n$, we get:
\begin{lem}
\begin{multline}\label{eq:xi}
\hbar \frac{d}{dx(z)} \xi_1(z;z) + \xi_1(z;z)^2 - 4 x(z)^3 + g_2(\tau) x(z) + g_3(\tau) \\
-2 \hbar P_1(z;\tau) + 2 \sum_{2g-2+n \geq 0} \frac{\hbar^{2g+n}}{n!} \left(\frac{\hat{G}_{g,n+1}(z';z)}{dz'} \right)_{z'=0} = 0.
\end{multline}
\end{lem}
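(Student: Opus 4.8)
The plan is to obtain equation \eqref{eq:xi} by simply summing the three equations in Lemma \ref{l:ps}, weighted by $\frac{\hbar^{2g+n}}{n!}$, after identifying each resulting sum with the generating function $\xi_1(z;z)$ or with $\hbar$-derivatives thereof. First I would note that by definition $\xi_1(z';z) = -\sum_{g,n \geq 0} \frac{\hbar^{2g+n}}{n!} \frac{\hat G_{g,n+1}(z';z)}{dx(z')}$, so that the $(g,n)=(0,0)$ term contributes $\frac{\hat G_{0,1}(z)}{dx(z)} = -\xi_1(z;z)\big|_{\text{leading}}$ and, more importantly, the full sum over all $(g,n)$ of $\frac{\hbar^{2g+n}}{n!}$ times $\frac{\hat G_{g,n+1}(z;z)}{dx(z)}$ is exactly $-\xi_1(z;z)$.

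Next I would handle each of the four types of terms appearing across the three equations in Lemma \ref{l:ps}, in this order. (i) The quadratic convolution term: $\sum_{g,n}\frac{\hbar^{2g+n}}{n!}\sum_{g_1+g_2=g}\sum_{m=0}^n \binom{n}{m}\frac{\hat G_{g_1,m+1}(z;z)}{dx(z)}\frac{\hat G_{g_2,n-m+1}(z;z)}{dx(z)}$ is a standard Cauchy-product rearrangement: the binomial $\binom{n}{m}$ together with $\frac{1}{n!}$ splits as $\frac{1}{m!}\cdot\frac{1}{(n-m)!}$, and $\hbar^{2g+n}=\hbar^{2g_1+m}\hbar^{2g_2+(n-m)}$, so the sum factorizes as $\big(\sum_{g_1,m}\frac{\hbar^{2g_1+m}}{m!}\frac{\hat G_{g_1,m+1}}{dx(z)}\big)\big(\sum_{g_2,\ell}\frac{\hbar^{2g_2+\ell}}{\ell!}\frac{\hat G_{g_2,\ell+1}}{dx(z)}\big)=\xi_1(z;z)^2$; I must remember to include the $(0,0)$ and $(0,1)$ cases, whose quadratic terms ($2\frac{\hat G_{0,1}}{dx}\frac{\hat G_{0,1}}{dx}$ and $2\frac{\hat G_{0,2}}{dx}\frac{\hat G_{0,1}}{dx}$) are precisely the "missing" boundary pieces of the full convolution (with the factor $2$ accounting for the two orderings $g_1\leftrightarrow g_2$), so the three equations together package into exactly $\xi_1(z;z)^2$ with no overcounting. (ii) The two derivative terms $-\frac{1}{n+1}\big(\frac{d}{dx(z)}\frac{\hat G_{g-1,n+2}(z';z)}{dx(z')}\big)_{z'=z}$ and $-n\big(\frac{d}{dx(z')}\frac{\hat G_{g,n}(z';z)}{dx(z')}\big)_{z'=z}$ combine, after shifting indices ($g-1\to g$, $n+1\to n$ respectively) and matching powers of $\hbar$, into $\hbar\frac{d}{dx(z)}\xi_1(z;z)$; the $-\frac{1}{n+1}$ and $-n$ weights are exactly what is needed to reconcile the shifted summation indices with the $\frac{1}{n!}$ normalization in $\xi_1$, and the $(0,1)$ equation's term $-\frac{d}{dx(z)}\big(\frac{\hat G_{0,1}(z)}{dx(z)}\big)$ supplies the bottom of this telescoped sum. (iii) The potential terms $-4x^3+g_2 x+g_3$ appear only in the $(0,0)$ equation, carrying $\hbar^0$, hence pass through unchanged. (iv) The $P_1$ terms: $-2P_1(z;\tau)$ appears only in the $(0,1)$ equation with weight $\hbar^{2\cdot 0+1}/1!=\hbar$, giving $-2\hbar P_1(z;\tau)$. (v) The boundary term $2\big(\frac{\hat G_{g,n+1}(z';z)}{dz'}\big)_{z'=0}$ survives the summation verbatim as $2\sum_{2g-2+n\geq 0}\frac{\hbar^{2g+n}}{n!}\big(\frac{\hat G_{g,n+1}(z';z)}{dz'}\big)_{z'=0}$.

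Assembling (i)--(v) gives exactly \eqref{eq:xi}, completing the proof. The only genuinely delicate point — and the one I would write out most carefully — is the bookkeeping in (i) and (ii): one must verify that the three separate equations of Lemma \ref{l:ps} (for $2g-2+n\geq 0$, for $(0,1)$, and for $(0,0)$) together cover, with correct multiplicities, all index ranges appearing in the unrestricted generating-function identities, with the combinatorial factor $2$ in the boundary equations exactly compensating the symmetrization $g_1\leftrightarrow g_2$ and the index shifts in the derivative terms exactly compensating the $1/n!$ weights. Everything else is a routine resummation, which is why the author's proof will likely be a one-line "straightforward" remark; I would nonetheless spell out the Cauchy-product step and the index shift $g-1\to g,\ n+1\to n$ explicitly to make the $\hbar\frac{d}{dx(z)}\xi_1(z;z)$ and $\xi_1(z;z)^2$ identifications transparent.
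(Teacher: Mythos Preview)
Your proposal is correct and follows exactly the route the paper takes: the paper's entire proof is the single sentence ``Multiplying the equations in Lemma~\ref{l:ps} by $\frac{\hbar^{2g+n}}{n!}$ and summing over $g$ and $n$,'' followed by \qed, and your write-up is simply a careful unpacking of that sentence. Your identification of the one genuinely delicate point --- that the two derivative terms in Lemma~\ref{l:ps} are partial derivatives in the \emph{different} slots $z'$ and $z$ of $\hat G_{g,n+1}(z';z)$, so that together (with the $(0,1)$ boundary) they reconstitute the \emph{total} derivative $\hbar\,\frac{d}{dx(z)}\xi_1(z;z)$ without double-counting --- is exactly the bookkeeping the paper leaves implicit.
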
 \qed

\subsection{Differential operator}

As in \cite{BE} we introduce the perturbative wave-function:
\begin{multline}\label{e:psi}
\psi(z) =     \exp \left( \frac{1}{\hbar} \sum_{2g-1+n \geq 0} \frac{\hbar^{2g+n-1}}{n!} \int_0^z \cdots \int_0^z \left(W_{g,n}(z_1,\ldots,z_n)   \right. \right. \\ \left. \left. - \delta_{g,0}\delta_{n,2} \frac{dx(z_1) dx(z_2)}{(x(z_1) - x(z_2))^2} \right) \right),
\end{multline}
and we define
\begin{equation}
\psi_1(z';z) = \psi(z) \xi_1(z';z).
\end{equation}
Then it is easy to show that
\begin{equation}\label{eq:psi1}
\psi_1(z;z) = \hbar \frac{d}{dx} \psi(z).
\end{equation}
(see Lemma 5.10 in \cite{BE}.) It follows that we can rewrite \eqref{eq:xi} as:
\begin{thm}\label{t:qc}
\begin{multline}
\left[\hbar^2 \frac{d^2}{dx^2} - 4 x(z)^3 + g_2(\tau) x(z) + g_3(\tau) 
-2 \hbar P_1(z;\tau)  \right. \\ \left. +2 \sum_{2g-2+n \geq 0} \frac{\hbar^{2g+n}}{n!} \left(\frac{\hat{G}_{g,n+1}(z';z)}{dz'} \right)_{z'=0}  \right] \psi(z) = 0.
\end{multline}
\end{thm}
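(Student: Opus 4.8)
The plan is to obtain Theorem~\ref{t:qc} directly from the summed loop equation \eqref{eq:xi} together with the two structural facts linking $\xi_1$ to the perturbative wave-function: that $\psi_1(z';z) = \psi(z)\,\xi_1(z';z)$ by definition, and the relation \eqref{eq:psi1}, namely $\psi_1(z;z) = \hbar\,\frac{d}{dx}\psi(z)$ (the analogue, for the double cover $\pi$, of Lemma~5.10 of \cite{BE}). The only substantive point is to recognize that, after multiplying \eqref{eq:xi} through by $\psi(z)$, the first two terms produce the second-order operator: $\psi(z)\big(\hbar\,\frac{d}{dx(z)}\xi_1(z;z) + \xi_1(z;z)^2\big) = \hbar^2\,\frac{d^2}{dx^2}\psi(z)$. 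All the remaining terms of \eqref{eq:xi} are already functions of $z$ and simply multiply $\psi(z)$, so they carry over verbatim.

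Concretely, I would first iterate \eqref{eq:psi1}: since $\psi(z)$ and $\psi_1(z;z)$ are honest functions of the single variable $z$,
\begin{equation*}
\hbar^2\,\frac{d^2}{dx^2}\psi(z) = \hbar\,\frac{d}{dx(z)}\Big(\hbar\,\frac{d}{dx(z)}\psi(z)\Big) = \hbar\,\frac{d}{dx(z)}\psi_1(z;z).
\end{equation*}
Writing $\psi_1(z;z) = \psi(z)\,\xi_1(z;z)$ and applying the ordinary Leibniz rule for $\frac{d}{dx(z)}$,
\begin{equation*}
\hbar\,\frac{d}{dx(z)}\big(\psi(z)\,\xi_1(z;z)\big) = \Big(\hbar\,\frac{d}{dx(z)}\psi(z)\Big)\xi_1(z;z) + \psi(z)\,\hbar\,\frac{d}{dx(z)}\xi_1(z;z),
\end{equation*}
and using \eqref{eq:psi1} once more to rewrite $\big(\hbar\,\frac{d}{dx(z)}\psi(z)\big)\xi_1(z;z) = \psi_1(z;z)\,\xi_1(z;z) = \psi(z)\,\xi_1(z;z)^2$, I obtain
\begin{equation*}
\hbar^2\,\frac{d^2}{dx^2}\psi(z) = \psi(z)\Big(\xi_1(z;z)^2 + \hbar\,\frac{d}{dx(z)}\xi_1(z;z)\Big).
\end{equation*}
Multiplying \eqref{eq:xi} by $\psi(z)$ and substituting this identity for its first two terms then yields exactly the bracketed operator of Theorem~\ref{t:qc} acting on $\psi(z)$, which completes the proof.

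The one place that requires care --- and what I would regard as the main (though modest) obstacle --- is making sure that the symbol $\hbar\,\frac{d}{dx(z)}\xi_1(z;z)$ appearing in \eqref{eq:xi} really is the total $x(z)$-derivative of the diagonal function $z \mapsto \xi_1(z;z)$, since that is precisely the form needed for the Leibniz step above. I would check this against Lemma~\ref{l:ps}: the term built from $\hat{G}_{g-1,n+2}$ is differentiated in the principal-specialization slot, while the term built from $\hat{G}_{g,n}$ is differentiated in the un-integrated slot and then restricted to $z'=z$; after re-indexing $g$ and $n$ and summing against $\hbar^{2g+n}/n!$, these are exactly the two contributions $(\partial_{z'} + \partial_z)F\big|_{z'=z}$ produced by the chain rule for $\frac{d}{dz}\big(F(z',z)\big|_{z'=z}\big)$, so their sum reconstitutes the total derivative. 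This is the genus-one counterpart of the corresponding bookkeeping in \cite{BE}, and once it is in place the theorem follows immediately.
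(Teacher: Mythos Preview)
Your proof is correct and is essentially the same as the paper's: both multiply \eqref{eq:xi} by $\psi(z)$ and use \eqref{eq:psi1} together with the Leibniz rule to convert $\psi(z)\big(\hbar\,\tfrac{d}{dx}\xi_1(z;z)+\xi_1(z;z)^2\big)$ into $\hbar^2\tfrac{d^2}{dx^2}\psi(z)$. Your final paragraph verifying that the term $\hbar\,\tfrac{d}{dx(z)}\xi_1(z;z)$ in \eqref{eq:xi} is genuinely the total $x$-derivative of the diagonal restriction is a welcome bit of extra care that the paper leaves implicit.
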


\begin{proof}
We start with \eqref{eq:xi}, multiply by $\psi(z)$, to get
\begin{multline}
\hbar \psi(z) \frac{d}{dx} \xi_1(z;z) + \psi_1(z;z) \xi_1(z;z) + (- 4 x(z)^3 + g_2(\tau) x(z) + g_3(\tau))\psi(z) \\
-2 \hbar P_1(z;\tau) \psi(z) + 2 \psi(z) \sum_{2g-2+n \geq 0} \frac{\hbar^{2g+n}}{n!} \left(\frac{\hat{G}_{g,n+1}(z';z)}{dz'} \right)_{z'=0} = 0.
\end{multline}
But
\begin{align}
\hbar \psi(z) \frac{d}{dx} \xi_1(z;z) =&\hbar \frac{d}{dx} \psi_1(z;z) - \xi_1(z;z) \hbar \frac{d}{dx} \psi(z) \nonumber\\
=& \hbar^2 \frac{d^2}{dx^2} \psi(z) - \xi_1(z;z) \psi_1(z;z),
\end{align}
thus the equation becomes
\begin{multline}\label{eq:qcurve}
\hbar^2 \frac{d^2}{dx^2} \psi(z) + (- 4 x(z)^3 + g_2(\tau) x(z) + g_3(\tau))\psi(z) \\
-2 \hbar P_1(z;\tau) \psi(z) + 2 \psi(z) \sum_{2g-2+n \geq 0} \frac{\hbar^{2g+n}}{n!} \left(\frac{\hat{G}_{g,n+1}(z';z)}{dz'} \right)_{z'=0} = 0.
\end{multline}
\end{proof}

Theorem \ref{t:qc} gives an order two differential operator that annihilates the perturbative wave-function $\psi(z)$. However, this is not a quantum curve, according to Definition \ref{d:qc}. It has an infinite series of $\hbar$ corrections, and those corrections are not polynomials in $x$; in fact they are not even functions of $x$. They also have poles at the ramification points of the branched covering $\pi$.

What we have constructed is an order two differential operator that kills the standard perturbative wave-function \eqref{e:psi}, but it is not a quantum curve. However, it may be possible to define a new wave-function, which can be obtained from $\psi$, and that is annihilated by a proper quantum curve. To achieve this, we need to define the non-perturbative wave-function, which we will do in section \ref{s:NP}.

\begin{rem}
We remark here that we checked numerically on Mathematica that Theorem \ref{t:qc} is indeed satisfied for the first few orders in $\hbar$.
\end{rem}

\section{Perturbative wave-function: second approach}

\label{s:two}

In this section we study a second approach to obtain the order two differential operator that kills the perturbative wave-function. We will see that we obtain a differential operator that looks quite different \emph{a priori} from the differential operator obtained in the previous section. But we can prove that the two are equivalent. In fact, this equivalent gives rise to an infinite tower of identities for cycle integrals of elliptic functions.

We start with the topological recursion (Equation \ref{eq:TR}):
\begin{equation}
W_{g,n+1}(z_0, \mathbf{z}) = \text{d}z_{0}  \sum_{a \in R} \underset{z = a}{\text{Res}}~\left(\int_{\alpha}^{z} P_{2}(z' - z_{0}; \tau) \text{d}z'\right) \frac{\mathcal{R}^{(2)} W_{g,n+1}(z,-z; \mathbf{z})}{2\wp'(z;\tau)^{2} \text{d}z}.
\label{eq:TR2}
\end{equation}
We now wish to express the sum over residues around poles in $R$ in terms of residues of the other poles of the integrand. If the integrand was a well defined meromorphic differential in $z$ over the compact Riemann surface $\Sigma$, then the sum of its residues at all poles would have to vanish. However, it is not a well defined meromorphic differential in $z$; because of the line integral from $\alpha$ to $z$, it is only defined in the fundamental domain. Thus what we need to use is the Riemann bilinear identity.

\subsection{Riemann bilinear identity}
The integral form of the Riemann bilinear identity can be stated as follows:

\begin{equation*}
\sum_{\text{all poles }b \text{ of }u \eta}  \underset{z = b}{\text{Res}}~ u \eta = \frac{1}{2\pi i} \sum_{j = 1}^{g} \left( \oint_{B_{j}}\omega\oint_{A_{j}}\eta - \oint_{B_{j}}\eta\oint_{A_{j}}\omega \right),
\end{equation*}
where $\eta$ is a meromorphic differential on the compact Riemann surface $\Sigma$ of genus $g$, and $(A_j, B_j)$, $j=1,\ldots,g$ is a symplectic basis of cycles. Moreover,
\begin{equation}
u(z) = \int_{\alpha}^z \omega,
\end{equation}
where $\omega$ is a residueless meromorphic differential, $\alpha$ is an arbitrary base point, and the line integral is taken in the fundamental domain.

We can apply the Riemann bilinear identity to \eqref{eq:TR2}. First, we have that
\begin{multline}
 \sum_{\text{all poles }b} \underset{z = b}{\text{Res}}~\left(\int_{\alpha}^{z} P_{2}(z' - z_{0}; \tau) \text{d}z'\right) \frac{\mathcal{R}^{(2)} W_{g,n+1}(z,-z; \mathbf{z})}{2\wp'(z;\tau)^{2} \text{d}z} \\
 = \frac{1}{2 \pi i} \left(\oint_B  P_{2}(z -  z_{0}; \tau)\text{ d}z \oint_A \frac{\mathcal{R}^{(2)} W_{g,n+1}(z,-z; \mathbf{z})}{2\wp'(z;\tau)^{2} \text{d}z} \right. \\ \left. - \oint_A P_{2}(z -  z_{0}; \tau)\text{ d}z \oint_B \frac{\mathcal{R}^{(2)} W_{g,n+1}(z,-z; \mathbf{z})}{2\wp'(z;\tau)^{2} \text{d}z} \right).
\end{multline}
We note that:
\begin{equation}
\oint_{A} P_{2}(z -  z_{0}; \tau)\text{ d}z = 0, \qquad \oint_{B} P_{2}(z - z_{0}; \tau)\text{ d}z = 2 \pi i,
\end{equation}
thus
\begin{equation}
\sum_{\text{all poles }b} \underset{z = b}{\text{Res}}~\left(\int_{\alpha}^{z} P_{2}(z' - z_{0}; \tau) \text{d}z'\right) \frac{\mathcal{R}^{(2)} W_{g,n+1}(z,-z; \mathbf{z})}{2\wp'(z;\tau)^{2} \text{d}z} \\
 = B_{g,n+1}(\mathbf{z}) .
\end{equation}
where we defined
\begin{equation*}
B_{g,n+1} (\mathbf{z}):=  \oint_{A} \frac{\mathcal{R}^{(2)} W_{g,n+1}(z,-z; \mathbf{z})}{2\wp'(z;\tau)^{2} \text{d}z}.
\end{equation*}
But the poles $b$ can be separated into poles in $R$ and poles that are not in $R$, which means that we can rewrite \ref{eq:TR2} as:
\begin{multline}
\frac{W_{g,n+1}(z_0, \mathbf{z}) }{d z_0}=  B_{g,n+1}(\mathbf{z})\\ -  \sum_{a \notin R} \underset{z = a}{\text{Res}}~\left( P_{1}(z - z_{0}; \tau) - P_{1}(\alpha-z_{0}; \tau)\right) \frac{\mathcal{R}^{(2)} W_{g,n+1}(z,-z; \mathbf{z})}{2\wp'(z;\tau)^{2} \text{d}z}
\label{eq:TRinvert}
\end{multline}

Given this equation we now seek to write it in a form (nearly) identical to equation \eqref{eq:mess}. To do this we must first calculate the residues.

\subsection{Calculating the residues} 
Now since, $P_{1}(z-z_{0}) \rightarrow -\frac{1}{z-z_{0}}$ as $z \rightarrow z_0$, we see that there is a simple pole at $z = z_{0}$. However there is also a collection of poles at each of the marked points $z_{j}$ (with $j = \{1,\cdots,n\}$) coming from the recursive structure. To see this more clearly, let us examine equation \eqref{eq:R2Wgn} more closely:

\begin{multline}
 \mathcal{R}^{(2)} W_{g,n+1}(z,-z; \mathbf{z}) =   W_{g-1,n+2}(z,-z,\mathbf{z}) \\ +\sum_{\text{stable}} W_{g_1, |I|+1}(z,I) W_{g_2, |J|+1}(-z,J) \\ \nonumber
  + \sum_{j=1}^{n} W_{0,2}(z,z_{j})W_{g,n}(-z,\mathbf{z}/z_{j}) + W_{0,2}(-z,z_{j})W_{g,n}(z,\mathbf{z}/z_{j}).
\end{multline}

The ``stable" sum term excludes the cases where either $(g_1,|I| )$ or $(g_{2},|J|)$ is equal to $(0,0)$ or $(0,1)$.
 From here we note that as $z \rightarrow \pm z_{j}$, $W_{0,2}(\pm z,z_{j}) \rightarrow \pm\frac{\text{d}z\text{d}z_{j}}{(z \mp z_{j})^{2}}$, hence there are second order poles at each of the points $\pm z_{j}$ (with $j = \{1,\cdots,n\}$). Now we can proceed to calculate the residues:

\begin{multline}
\text{Residue at } z_0 = -\frac{\text{d}z_0}{2\wp'(z_0; \tau)^{2}}\left(\frac{W_{g-1,n+2}(z_{0},-z_{0},\mathbf{z})}{\text{d}z_0^{2}} \right. \\ \left.+ \sum_{g_1+g_2 =g} \sum_{I \cup J = \mathbf{z}}' \frac{W_{g_1, |I|+1}(z_0,I)}{\text{d}z_0} \frac{W_{g_2, |J|+1}(-z_0,J)}{\text{d}z_0} \right)
\end{multline}

\begin{equation}
\text{Residue at } z_i =  
\text{d}_{z_{i}} \left( \frac{P_{1}(z_{i} - z_{0}; \tau) - P_{1}(\alpha - z_{0}; \tau)}{2\wp'(z_{i},\tau)^{2}\text{d}z_{i}} W_{g,n}(-z_{i},\mathbf{z}/z_{i}) \right)\text{d}z_{0}
\end{equation}

\begin{equation}
\text{Residue at } (-z_i) =  
\text{d}_{z_{i}} \left( \frac{P_{1}(z_{i} + z_{0}; \tau) + P_{1}(\alpha - z_{0} ; \tau)}{2\wp'(z_{i},\tau)^{2}\text{d}z_{i}} W_{g,n}(-z_{i},\mathbf{z}/z_{i}) \right)\text{d}z_{0}
\end{equation}

Summing all of these contributions, dividing both sides by $\text{d}z_{0}$ and rearranging the expression we arrive at: 
\begin{prop}\label{c:mess2}
For $2g-2+n \geq 0$,
\begin{multline}
- \frac{W_{g-1,n+2}(-z_{0},z_{0},\mathbf{z})}{dx(z_{0})^2} + \sum_{g_1+g_2 =g} \sum_{I \cup J = \mathbf{z}} \frac{W_{g_1, |I|+1}(-z_{0},I)}{dx(z_{0})} \frac{W_{g_2, |J|+1}(-z_{0},J)}{dx(z_{0})} \\-  \sum_{i=1}^n \left(\frac{ dx(z_i)}{(x(z_{0})-x(z_i))^2}\frac{W_{g,n}(-z_{0}, \mathbf{z} \setminus \{ z_i \})}{ dx(z_{0}) }\right. \\ \left.-d_{z_i} \left(\frac{1}{x(z_{0})-x(z_i)} \frac{W_{g,n}(-z_i, \mathbf{z} \setminus \{ z_i \} )}{dx(z_i)} \right)  \right) \\
+ \sum_{i=1}^{n} \text{d}_{z_{i}} \left(\frac{2P_{1}(z_{i}; \tau)}{\wp'(z_{i}; \tau)} \frac{W_{g,n}(-z_i, \mathbf{z} \setminus \{ z_i \} )}{dx(z_i)} \right)   - 2 B_{g,n+1}(\mathbf{z})= 0.
\label{eq:mess2}
\end{multline}
For $(g,n)=(0,1)$,
\begin{multline}
 2 \frac{W_{0,2}(-z_0,z_1)}{dx(z_0)} \frac{W_{0,1}(-z_0)}{dx(z_0)} - \frac{dx(z_1)}{(x(z_0)-x(z_1))^2} \frac{W_{0,1}(-z_0)}{dx(z_0)} \\+d_{z_1} \left(\frac{1}{(x(z_0) - x(z_1)) }\frac{W_{0,1}(-z_1)}{dx(z_1)}  \right)
 +\text{d}_{z_1} \left(\frac{2P_1(z_1; \tau)}{\wp'(z_1; \tau)}\frac{W_{0.1}(-z_1)}{\wp'(z_1); \tau} \right) = 0,
\end{multline}
while for $(g,n) = (0,0)$,
\begin{equation}
2 \frac{W_{0,1}(-z_0) W_{0,1}(-z_0)}{dx(z)^2} - 4 x(z)^3 + g_2(\tau) x(z) + g_3(\tau) = 0.
\end{equation}
\end{prop}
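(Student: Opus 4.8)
The statement is the Riemann-bilinear-identity counterpart of Corollary \ref{c:mess}, and its proof amounts to assembling the computations already set up above. The plan is to start from the inverted recursion \eqref{eq:TRinvert}. This is obtained by applying the Riemann bilinear identity to the integrand of \eqref{eq:TR2} with $\omega = P_2(z-z_0;\tau)\,dz$ — which is residueless, its only singularity being the double pole at $z=z_0$ — and $\eta = \mathcal{R}^{(2)}W_{g,n+1}(z,-z;\mathbf{z})/\bigl(2\wp'(z;\tau)^2\,dz\bigr)$ — a genuine meromorphic one-form in $z$ on $\Sigma$, since the $W_{g,n}$ are elliptic in each variable — and then using $\oint_A\omega=0$, $\oint_B\omega=2\pi i$ to replace the sum of residues over $R$ in \eqref{eq:TR2} by the period term $B_{g,n+1}(\mathbf{z})$ minus the residues at the poles \emph{not} in $R$. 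Generically those extra poles sit only at $z=z_0$ (a simple pole, from $P_1(z-z_0;\tau)\sim -1/(z-z_0)$) and at $z=\pm z_i$, $i=1,\ldots,n$ (double poles, from the $W_{0,2}$ factors inside $\mathcal{R}^{(2)}$). Crucially there is \emph{no} extra contribution at $z=0$, since $0\in R$; this is the structural difference with Section \ref{s:one}, and it is what trades the ``$z'=0$'' term of Proposition \ref{p:main1} for $-2B_{g,n+1}(\mathbf{z})$ together with one new, $z_0$-independent term.

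Next I would insert the three residue expressions displayed above. The residue at $z_0$ reproduces, up to the overall factor $1/\bigl(2\wp'(z_0;\tau)^2\bigr)$, the quadratic recursive structure $W_{g-1,n+2}(z_0,-z_0,\mathbf{z}) + \sum' W_{g_1}(z_0,I)W_{g_2}(-z_0,J)$, which — after writing $\wp'(z)^2\,dz^2=dx(z)^2$, adding back the unstable $(g_1,|I|)=(0,0)$ terms, and using $W_{0,2}(z,z_1)+W_{0,2}(-z,z_1)=dx(z)\,dx(z_1)/(x(z)-x(z_1))^2$ exactly as in the passage from Proposition \ref{p:main1} to Corollary \ref{c:mess} — gives the rational-in-$x$ part of \eqref{eq:mess2}. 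The residues at $z_i$ and at $-z_i$ each come out in the form $d_{z_i}(\,\cdot\,)$, because of the double pole of $P_2$, with numerators $P_1(z_i-z_0;\tau)-P_1(\alpha-z_0;\tau)$ and $P_1(z_i+z_0;\tau)+P_1(\alpha-z_0;\tau)$; adding the two, the $P_1(\alpha-z_0;\tau)$ pieces cancel, so the answer is independent of the auxiliary base point $\alpha$, as it must be.

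The final step is the $\zeta$-addition theorem in the form $P_1(z_i-z_0;\tau)+P_1(z_i+z_0;\tau)=2P_1(z_i;\tau)-\wp'(z_i;\tau)/\bigl(\wp(z_i;\tau)-\wp(z_0;\tau)\bigr)$ (immediate from $P_1=-\zeta+G_2z$). This splits the combined $z_i$-plus-$(-z_i)$ residue into a $z_0$-dependent piece, which merges with the $z_0$ residue to complete the rational-in-$x$ terms of \eqref{eq:mess2}, and a $z_0$-independent piece $d_{z_i}\bigl(2P_1(z_i;\tau)\wp'(z_i;\tau)^{-1}W_{g,n}(-z_i,\mathbf{z}\setminus\{z_i\})/dx(z_i)\bigr)$, the new term. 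Collecting everything and moving $B_{g,n+1}(\mathbf{z})$ to the left gives \eqref{eq:mess2}. The $(g,n)=(0,1)$ case runs identically, keeping only the $W_{0,2}(z,z_1)W_{0,1}(\pm z)$ pieces of $\mathcal{R}^{(2)}$, and $(g,n)=(0,0)$ is immediate since then there is no recursion and no $z_i$.

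The one genuinely delicate point is the residue bookkeeping of the middle step: since $\mathcal{R}^{(2)}$ is built from one-forms evaluated at $z$ and at $-z$, one has to track the $d(-z)=-dz$ signs throughout, and the double pole of $P_2(z\mp z_i)$ must be expanded \emph{including} its constant term $G_2(\tau)$ (which drops out of the residue but not of the intermediate expressions) in order to land on the form $d_{z_i}(\,\cdot\,)$ with precisely the coefficients in \eqref{eq:mess2}. A reassuring internal check is that for $(g,n)=(0,1)$ the new term $d_{z_1}\bigl(2P_1(z_1;\tau)\wp'(z_1;\tau)^{-1}W_{0,1}(-z_1)/dx(z_1)\bigr)$ collapses to $-2P_2(z_1;\tau)\,dz_1$, recovering the $(0,1)$ line of Corollary \ref{c:mess}; the mismatch between the two full identities when $2g-2+n\ge0$ is exactly the tower of cycle-integral identities pursued in Section \ref{s:two}.
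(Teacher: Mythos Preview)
Your proposal is correct and follows the paper's own route: apply the Riemann bilinear identity to the recursion integrand to obtain \eqref{eq:TRinvert}, compute the residues at $z_0$ and at $\pm z_i$ (exactly the three displayed formulae preceding the proposition), and then combine. You make explicit what the paper packages under ``summing all of these contributions \ldots\ and rearranging'': the addition law $P_1(z_i-z_0)+P_1(z_i+z_0)=2P_1(z_i)-\wp'(z_i)/(\wp(z_i)-\wp(z_0))$, which is what splits the $\pm z_i$ residues into the rational-in-$x$ piece and the new $2P_1(z_i)/\wp'(z_i)$ piece.

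One small remark on the unstable cases. The $(g,n)=(0,1)$ line does not literally ``run identically'' through the Riemann bilinear argument, since $W_{0,2}$ is an initial datum and there is no recursion \eqref{eq:TR2} to feed into the bilinear identity at that level. What actually happens is precisely your ``consistency check'': the extra term $d_{z_1}\bigl(2P_1(z_1)\wp'(z_1)^{-1}\,W_{0,1}(-z_1)/dx(z_1)\bigr)$ evaluates to $-2P_2(z_1)\,dz_1$, so the $(0,1)$ statement of the proposition is exactly the $(0,1)$ line of Corollary~\ref{c:mess} already established in Section~\ref{s:one}. Likewise $(0,0)$ is immediate. This is a presentational point only; the $2g-2+n\ge 0$ argument, which is the substance, is handled correctly.
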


This is to compare with Corollary \ref{c:mess} obtained in the previous section.

\subsection{Differential operator}

From Proposition \ref{c:mess2} we want to obtain a differential operator that annihilates the perturbative wave-function. We follow the procedure outlined in the previous section. We arrive at the following differential equation:
\begin{thm}\label{t:qc2}
\begin{multline}
\left[\hbar^2 \frac{\text{d}^2}{\text{d}x^2} - 2\hbar^{2}\frac{P_1(z; \tau)}{\wp'(z; \tau)}\frac{\text{d}}{\text{d}x} - 4 x(z)^3 + g_2(\tau) x(z) + g_3(\tau)   \right. \\ \left. -2 \sum_{2g-2+n \geq 0} \frac{\hbar^{2g+n}}{n!} \left(\int_0^z \cdots \int_0^z B_{g,n+1}(\mathbf{z})\right) \right] \psi(z) = 0.
\end{multline}
\end{thm}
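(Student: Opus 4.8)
The plan is to mirror, step by step, the four operations (integration over $z_1,\dots,z_n$, principal specialization, summation over $g,n$, and multiplication by $\psi$) that were applied in Section \ref{s:one} to Corollary \ref{c:mess}, but now applied to Proposition \ref{c:mess2}. Since Proposition \ref{c:mess2} differs from Corollary \ref{c:mess} only by (i) replacing the term $2\left(\frac{W_{g,n+1}(-z',\mathbf{z})}{dz'}\right)_{z'=0}$ with $-2B_{g,n+1}(\mathbf{z})$ and (ii) adding the new term $\sum_{i=1}^n \mathrm{d}_{z_i}\!\left(\frac{2P_1(z_i;\tau)}{\wp'(z_i;\tau)}\frac{W_{g,n}(-z_i,\mathbf{z}\setminus\{z_i\})}{dx(z_i)}\right)$, the bulk of the manipulation is literally the same as in Section \ref{s:one}, and I only need to track what these two modified pieces become under the four operations.

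\textbf{Step 1 (integration).} I would integrate \eqref{eq:mess2} in $z_1,\dots,z_n$ with base point $0$, exactly as in the Lemma following Corollary \ref{c:mess}. The terms common to both propositions integrate as before; the $B_{g,n+1}(\mathbf{z})$ term, being already a function of $\mathbf{z}$ only (no $z_0$-dependence, no poles to worry about at the ramification points), just integrates to $\int_0^{z_1}\!\cdots\!\int_0^{z_n} B_{g,n+1}(\mathbf{z})$. The genuinely new point is the total-derivative term $\sum_i \mathrm{d}_{z_i}(\cdots)$: integrating $\mathrm{d}_{z_i}$ against $\int_0^{z_i}$ collapses it to a boundary evaluation $\left[\frac{2P_1(z_i;\tau)}{\wp'(z_i;\tau)}\frac{W_{g,n}(-z_i,\cdots)}{dx(z_i)}\right]$ evaluated between $z_i=0$ and the upper limit. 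I need to check that the contribution at the lower limit $z_i=0$ vanishes; using $P_1(z_i;\tau)\to -1/z_i$, $\wp'(z_i;\tau)\to -2/z_i^3$, $dx(z_i)=\wp'(z_i;\tau)dz_i$, and the regularity of $W_{g,n}$ at $z_i=0$, one sees the boundary term at $0$ is $O(z_i^{2})\to 0$, so only the upper-limit contribution $\frac{2P_1(z;\tau)}{\wp'(z;\tau)}\frac{W_{g,n}(\cdots)}{dx(\cdot)}$ survives (after principal specialization).

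\textbf{Steps 2--4 (specialization, summation, conjugation by $\psi$).} Principal-specializing $z_1=\dots=z_n=z$ and introducing $\hat G_{g,n+1}$ as in Section \ref{s:one} converts the surviving new term into something proportional to $\frac{P_1(z;\tau)}{\wp'(z;\tau)}\frac{d}{dx(z)}\hat G_{g,n}$, i.e. a first-order-derivative contribution. Summing over $g,n$ with the weight $\frac{\hbar^{2g+n}}{n!}$ and recalling $\xi_1(z';z)=-\sum\frac{\hbar^{2g+n}}{n!}\frac{\hat G_{g,n+1}(z';z)}{dx(z')}$, the analogue of \eqref{eq:xi} now reads $\hbar\frac{d}{dx(z)}\xi_1(z;z) - 2\hbar^2\frac{P_1(z;\tau)}{\wp'(z;\tau)}\xi_1(z;z) + \xi_1(z;z)^2 - 4x^3 + g_2 x + g_3 + 2\sum\frac{\hbar^{2g+n}}{n!}\int_0^z\!\cdots\!\int_0^z B_{g,n+1} = 0$, where the extra factor of $\hbar^2$ on the $P_1/\wp'$ term tracks the index-shift $n\mapsto n-1$ relative to the $\xi_1^2$ term (same bookkeeping as the $P_1$ term in Section \ref{s:one}). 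Finally, multiply through by $\psi(z)$ and use $\psi_1(z;z)=\hbar\frac{d}{dx}\psi(z)$ together with the identity $\hbar\psi\frac{d}{dx}\xi_1 = \hbar^2\frac{d^2}{dx^2}\psi - \xi_1\psi_1$ from the proof of Theorem \ref{t:qc}; the term $-2\hbar^2\frac{P_1}{\wp'}\xi_1\psi = -2\hbar^2\frac{P_1}{\wp'}\psi_1 = -2\hbar^2\frac{P_1}{\wp'}\hbar\frac{d}{dx}\psi$ — wait, more carefully, $\xi_1\psi = \psi_1 = \hbar\frac{d}{dx}\psi$ only at coincident arguments in the combination that appears, so the term becomes $-2\hbar^2\frac{P_1(z;\tau)}{\wp'(z;\tau)}\frac{d}{dx}\psi(z)$ after using the same conjugation lemma, giving exactly the stated operator.

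\textbf{Main obstacle.} The delicate part is the index/$\hbar$-power bookkeeping in the new first-derivative term: one must verify that after integration, specialization and summation the $\frac{P_1}{\wp'}$ contribution lands with weight $\hbar^2$ (not $\hbar$) multiplying $\frac{d}{dx}\psi$, which requires carefully matching the combinatorial factors $\frac{n!}{m!(n-m)!}$ and the shift from $W_{g,n}$ (appearing at ``level $n$'' inside a ``level $n+1$'' equation) exactly as the perturbative term $-2\hbar P_1\psi$ was handled in Theorem \ref{t:qc} — except here the term sits one $\hbar$-order higher because it carries an explicit factor $W_{g,n}/dx(z_i)$, i.e. it behaves like the $\xi_1^2$ (two-point) piece rather than like the inhomogeneous $P_1$ piece. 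A secondary but routine check is confirming the vanishing of all lower-limit ($z_i=0$) boundary contributions, including for the new total-derivative term, which follows from the Laurent expansions of $\wp,\wp',P_1$ at the origin quoted in Section \ref{s:one}. Everything else is a line-by-line transcription of the argument already carried out for Theorem \ref{t:qc}.
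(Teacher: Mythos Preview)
Your proposal is correct and is exactly the approach the paper takes: the paper simply says ``we follow the procedure outlined in the previous section'' applied to Proposition~\ref{c:mess2}, which is precisely your four-step transcription. Two minor bookkeeping slips in your intermediate $\xi_1$-equation (the $P_1/\wp'$ term carries $-2\hbar$, not $-2\hbar^2$, with the second $\hbar$ appearing only after using $\xi_1\psi=\hbar\tfrac{d}{dx}\psi$; and the $B_{g,n+1}$ sum enters with sign $-2$, not $+2$) do not affect the argument and you essentially self-correct them in your final paragraph.
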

This is to be contrasted with the differential operator that was obtained in Theorem \ref{t:qc}. The perturbative wave-function $\psi$ is the same for both Theorems. It is then expected that the two differential equations should be equivalent, even though they  look quite different \emph{a priori}.

\subsection{Connection with the calculation of the previous section}

In the previous section, we calculated a differential equation satisfied by $\psi$; in this section we also computed a differential equation satisfied by $\psi$, which looks different \emph{a priori}. Let us now show that they are the same.

First, let us compare Proposition \ref{c:mess2} with Corollary \ref{c:mess} of the previous section. In particular, for equation \eqref{eq:mess2} to be equivalent to equation \eqref{eq:mess}, the following identity must hold:

\begin{cor}\label{c:identity}
For the Weierstrass spectral curve, for $2g-2+n \geq 0$,
\begin{equation}\label{eq:identity}
B_{g,n+1}(\mathbf{z}) = -\left(\frac{W_{g,n+1}(-z_{0}, \mathbf{z})}{dz_{0}} \right)_{z_{0}=0} + \sum_{i=1}^{n} \text{d}_{z_{i}} \left(\frac{P_{1}(z_{i}; \tau)}{\wp'(z_{i}; \tau)} \frac{W_{g,n}(-z_i, \mathbf{z} \setminus \{ z_i \} )}{dx(z_i)} \right).
\end{equation}
\end{cor}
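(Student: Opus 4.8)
The plan is to derive the identity \eqref{eq:identity} purely by comparing the two equations that both compute $W_{g,n+1}(z_0,\mathbf{z})/dz_0$, namely equation \eqref{eq:mess} from Corollary \ref{c:mess} and equation \eqref{eq:mess2} from Proposition \ref{c:mess2}. Both were obtained by the same manipulation of the topological recursion, applied to the same $W_{g,n}$'s, so their left-hand sides agree identically as meromorphic objects in $z_0$ and $\mathbf{z}$. Subtracting \eqref{eq:mess} from \eqref{eq:mess2}, all the terms built from $W_{g-1,n+2}$, the stable bilinear sum, and the $\sum_{i=1}^n$ terms involving $\frac{dx(z_i)}{(x(z_0)-x(z_i))^2}$ and $d_{z_i}\!\left(\frac{1}{x(z_0)-x(z_i)}\frac{W_{g,n}}{dx(z_i)}\right)$ cancel verbatim, since they appear identically in both. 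What remains is precisely
\begin{equation*}
\sum_{i=1}^{n} d_{z_{i}} \left(\frac{2P_{1}(z_{i}; \tau)}{\wp'(z_{i}; \tau)} \frac{W_{g,n}(-z_i, \mathbf{z} \setminus \{ z_i \} )}{dx(z_i)} \right) - 2 B_{g,n+1}(\mathbf{z}) + 2 \left(\frac{W_{g,n+1}(-z', \mathbf{z})}{dz'} \right)_{z'=0} = 0,
\end{equation*}
which, upon dividing by $2$ and rearranging, is exactly \eqref{eq:identity}. So the entire content of the corollary is the bookkeeping observation that the two a priori different routes to the differential operator — residues at $R$ rewritten via vanishing of the total residue in section \ref{s:one}, versus residues at $R$ rewritten via the Riemann bilinear identity in section \ref{s:two} — must give the same answer.

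Concretely, I would carry out the following steps. First, recall that both \eqref{eq:mess} and \eqref{eq:mess2} are rearrangements of $\frac{W_{g,n+1}(z_0,\mathbf{z})}{dz_0}$: in the first approach this quantity equals a sum of residues over $R$ of $K(z_0;z)\,\mathcal{R}^{(2)}W_{g,n+1}$, and Proposition \ref{p:main1} together with Corollary \ref{c:mess} repackages it; in the second approach, equation \eqref{eq:TRinvert} writes the same $\frac{W_{g,n+1}(z_0,\mathbf{z})}{dz_0}$ as $B_{g,n+1}(\mathbf{z})$ minus residues at the poles not in $R$, and Proposition \ref{c:mess2} repackages that. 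Second, I would align the two expressions term-by-term, noting that Proposition \ref{c:mess2} is literally \eqref{eq:mess} with the substitution: replace the term $+2\left(\frac{W_{g,n+1}(-z',\mathbf{z})}{dz'}\right)_{z'=0}$ by $+\sum_{i=1}^{n} d_{z_{i}}\!\left(\frac{2P_{1}(z_{i};\tau)}{\wp'(z_{i};\tau)}\frac{W_{g,n}(-z_i,\mathbf{z}\setminus\{z_i\})}{dx(z_i)}\right) - 2B_{g,n+1}(\mathbf{z})$. Third, equate the two versions and solve for $B_{g,n+1}(\mathbf{z})$, obtaining \eqref{eq:identity}. The cases $(g,n)=(0,0)$ and $(0,1)$ are handled separately by the same comparison of the corresponding special-case equations, though for $(0,0)$ both sides of the would-be identity are trivially consistent and for $(0,1)$ one checks the $P_1$ versus $B_{0,2}$ terms match.

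The main subtlety — and the place where I would be most careful — is making sure the comparison is honest, i.e. that Proposition \ref{c:mess2} really is derived independently and not circularly from Corollary \ref{c:mess}. Proposition \ref{c:mess2} comes from the Riemann bilinear identity applied to \eqref{eq:TR2}, with the explicit residue computations at $z_0$, $z_i$, and $-z_i$ performed directly from the pole structure of $\mathcal{R}^{(2)}W_{g,n+1}$ (the simple pole of $P_1(z-z_0)$ at $z=z_0$ and the double poles of $W_{0,2}(\pm z, z_j)$ at $z=\pm z_j$), while Corollary \ref{c:mess} comes from Proposition \ref{p:main1}, whose proof uses vanishing of the total residue of a genuinely meromorphic one-form on $\Sigma$ and the pole of $y$ at $z=0$. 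These are logically distinct derivations, both valid, so their outputs may be equated and the identity \eqref{eq:identity} is a genuine consequence rather than a tautology. A secondary point worth spelling out: the contour shift by $i\epsilon$ mentioned in the footnote for $A$-cycle integrals means $B_{g,n+1}(\mathbf{z})$ is well-defined, and the claim that the $\hat G_{g,n}$ (hence $W_{g,n}$) have no pole at $z=0$ — used already in section \ref{s:one} — guarantees the evaluation $\left(\frac{W_{g,n+1}(-z',\mathbf{z})}{dz'}\right)_{z'=0}$ makes sense; both facts are needed for \eqref{eq:identity} to even be a well-posed equality.
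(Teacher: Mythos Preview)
Your comparison argument is correct and is exactly how the paper motivates the corollary: the text states that for \eqref{eq:mess2} to be equivalent to \eqref{eq:mess}, the identity \eqref{eq:identity} must hold, and since both equations are established independently (one via the total-residue-vanishes trick, the other via the Riemann bilinear identity), subtracting them yields \eqref{eq:identity} as you describe. Your care about non-circularity is well placed and the bookkeeping is accurate.

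The paper, however, also supplies a second, more direct proof that you may want to be aware of: it takes equation \eqref{eq:TRinvert}, which already expresses $W_{g,n+1}(z_0,\mathbf{z})/dz_0$ as $B_{g,n+1}(\mathbf{z})$ minus residues at $z=\pm z_i$, and simply evaluates both sides at $z_0=0$. The residues at $\pm z_i$ are then computed with $P_1(z_i-z_0)\to P_1(z_i)$ and $P_1(-z_i-z_0)\to -P_1(z_i)$ (using oddness of $P_1$), the $\alpha$-dependent pieces cancel between the $z_i$ and $-z_i$ contributions, and one lands directly on \eqref{eq:identity}. This route avoids invoking Corollary \ref{c:mess} altogether and is slightly more self-contained, since it does not rely on the separate section-\ref{s:one} derivation. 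Your approach, by contrast, makes the equivalence of the two differential operators the conceptual centerpiece, which is arguably more illuminating for the paper's narrative about the two methods agreeing.
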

This is a non-trivial identity between elliptic functions; in fact, it gives an infinite tower of expressions for cycle integrals of elliptic functions. We will come back to that in the next subsection. But we note here that we can also  prove Corollary \ref{c:identity} directly. 

We start with \eqref{eq:TRinvert}, which we rewrite as
\begin{multline}
 B_{g,n+1}(\mathbf{z}) = 
- \frac{W_{g,n+1}(-z_0, \mathbf{z}) }{d z_0} \\+ \sum_{a \notin R} \underset{z = a}{\text{Res}}~\left( P_{1}(z - z_{0}; \tau) - P_{1}(\alpha-z_{0}; \tau)\right) \frac{\mathcal{R}^{(2)} W_{g,n+1}(z,-z; \mathbf{z})}{2\wp'(z;\tau)^{2} \text{d}z}.
\end{multline}
We take the limit as $z_0 \to 0$. We obtain
\begin{multline}
 B_{g,n+1}(\mathbf{z}) = 
- \left( \frac{W_{g,n+1}(-z_0, \mathbf{z}) }{d z_0} \right)_{z_0 = 0} \\+  \sum_{a \notin R} \underset{z = a}{\text{Res}}~\left( P_{1}(z; \tau ) - P_{1}(\alpha; \tau)\right) \frac{\mathcal{R}^{(2)} W_{g,n+1}(z,-z; \mathbf{z})}{2\wp'(z;\tau)^{2} \text{d}z}.
\end{multline}
The only poles in the sum over $a \notin R$ are at $z = \pm z_i$, for $i=1,\ldots,n$. The residue at $z=z_i$ gives rise to a term of the form
\begin{equation}
d_{z_i} \left( \frac{ P_{1}(z_i ;\tau) - P_{1}(\alpha; \tau)}{2\wp'(z_i;\tau) } \frac{W_{g,n}(-z_i ,\mathbf{z} \setminus \{ z_i \} )}{d x(z_i)} \right),
\end{equation}
while the residue at $z = - z_i$ gives rise to a term of the form
\begin{equation}
d_{z_i} \left( \frac{ P_{1}(z_i ; \tau) + P_{1}(\alpha; \tau)}{2\wp'(z_i;\tau) } \frac{W_{g,n}(-z_i ,\mathbf{z} \setminus \{ z_i \} )}{d x(z_i)} \right).
\end{equation}
Putting those together, we obtain
\begin{equation}
B_{g,n+1}(\mathbf{z}) = 
- \left( \frac{W_{g,n+1}(-z_0, \mathbf{z}) }{d z_0} \right)_{z_0 = 0}+  \sum_{i=1}^n d_{z_i} \left( \frac{ P_{1}(z_i ;\tau) }{\wp'(z_i;\tau) } \frac{W_{g,n}(-z_i ,\mathbf{z} \setminus \{ z_i \} )}{d x(z_i)} \right),
\end{equation}
which is precisely \eqref{eq:identity}.

We can also compare Theorem \ref{t:qc2} with Theorem \ref{t:qc} of the previous section. These give two differential equations satisfied by the perturbative wave-function $\psi$. 
For Theorems \ref{t:qc} and \ref{t:qc2} to be equivalent, we need the following equation for the wave-function $\psi$ to be satisfied:
\begin{cor}
\begin{multline}
\left[-\hbar P_1(z;\tau)  +\sum_{2g-2+n \geq 0} \frac{\hbar^{2g+n}}{n!} \left(\frac{\hat{G}_{g,n+1}(z';z)}{dz'} \right)_{z'=0} \right] \psi \\= - \left[\sum_{2g-2+n \geq 0} \frac{\hbar^{2g+n}}{n!} \left(\int_0^z \cdots \int_0^z B_{g,n+1}(\mathbf{z})\right) + \hbar^{2}\frac{P_1(z;\tau)}{\wp'(z;\tau)}\frac{\text{d}}{\text{d}x} \right] \psi.
\label{eq:tbp}
\end{multline}
\end{cor}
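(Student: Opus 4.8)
The plan is to reduce the equation \eqref{eq:tbp} for $\psi$ to the already-established identity in Corollary \ref{c:identity}, by integrating that identity appropriately and recognizing how each piece assembles into the displayed operator expression. First I would recall the structure of the construction in Section \ref{s:one}: the quantity $\hat G_{g,n+1}(z';z)$ is obtained from $W_{g,n+1}(-z,\ldots)$ by integrating $n$ of its $n+1$ variables from $0$ to $z$ and principal-specializing, and the terms $\bigl(\hat G_{g,n+1}(z';z)/dz'\bigr)_{z'=0}$ arise precisely as the boundary contributions in that integration. The strategy is therefore to take Corollary \ref{c:identity}, which holds at the level of correlation functions, integrate it in the variables $z_1,\ldots,z_n$ from $0$ to $z$, principal specialize, multiply by $\hbar^{2g+n}/n!$ and sum over $g,n$ with $2g-2+n\ge 0$, and then multiply the resulting scalar identity by $\psi(z)$.

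The key steps, in order: (1) Integrate the right-hand side of \eqref{eq:identity}. The first term $-\bigl(W_{g,n+1}(-z_0,\mathbf z)/dz_0\bigr)_{z_0=0}$ integrates, after principal specialization and summation, to $-\sum \frac{\hbar^{2g+n}}{n!}\bigl(\hat G_{g,n+1}(z';z)/dz'\bigr)_{z'=0}$ — this is immediate from the definition of $\hat G$. (2) The sum $\sum_{i=1}^n \mathrm d_{z_i}\bigl(\frac{P_1(z_i;\tau)}{\wp'(z_i;\tau)}\frac{W_{g,n}(-z_i,\mathbf z\setminus\{z_i\})}{dx(z_i)}\bigr)$ is a total derivative in each $z_i$, so integrating from $0$ to $z$ it evaluates to the boundary terms: at the upper limit $z_i=z$ it produces (after the remaining integrations and principal specialization) $n$ identical copies of $\frac{P_1(z;\tau)}{\wp'(z;\tau)}\frac{\hat G_{g,n}(z;z)}{dx(z)}$ up to combinatorial factors, and at the lower limit $z_i=0$ it vanishes because $P_1(z_i;\tau)/\wp'(z_i;\tau)\sim (-1/z_i)/(-2/z_i^3)= z_i^2/2\to 0$ while $W_{g,n}$ is regular there. (3) Assemble the $n$-fold sum with the factor $\hbar^{2g+n}/n!$: the combinatorics of pulling out one index and shifting $n\mapsto n-1$ converts $\sum_{g,n}\frac{\hbar^{2g+n}}{n!}\cdot n\cdot\frac{P_1(z;\tau)}{\wp'(z;\tau)}\frac{\hat G_{g,n}(z;z)}{dx(z)}$ into $\hbar\cdot\frac{P_1(z;\tau)}{\wp'(z;\tau)}\cdot\xi_1(z;z)$ up to sign, recalling $\xi_1(z;z) = -\sum\frac{\hbar^{2g+n}}{n!}\frac{\hat G_{g,n+1}(z;z)}{dx(z)}$. (4) Multiply through by $\psi(z)$ and use \eqref{eq:psi1}, i.e. $\psi(z)\xi_1(z;z)=\psi_1(z;z)=\hbar\frac{d}{dx}\psi(z)$, to turn the $\xi_1$ term into $\hbar^2\frac{P_1(z;\tau)}{\wp'(z;\tau)}\frac{d}{dx}\psi$, matching the right-hand side of \eqref{eq:tbp}. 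The left-hand side of \eqref{eq:identity}, namely $B_{g,n+1}(\mathbf z)$, integrates, principal-specializes, and sums to exactly $\sum\frac{\hbar^{2g+n}}{n!}\bigl(\int_0^z\cdots\int_0^z B_{g,n+1}(\mathbf z)\bigr)$, which appears on the right of \eqref{eq:tbp}; collecting the $-\hbar P_1(z;\tau)$ term requires tracking the $(g,n)$ that produces the bare $P_1$ (it comes, as in Section \ref{s:one}, from the genus-zero unstable pieces entering the sum with $2g-2+n\ge 0$), and this matches the corresponding term on the left of \eqref{eq:tbp}.

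The main obstacle I expect is the careful bookkeeping in step (2)–(3): one must verify that after integrating the total-derivative terms and principal-specializing, the upper-limit boundary contributions reorganize exactly into multiples of $\hat G_{g,n}(z;z)/dx(z)$ with the right combinatorial weights, and in particular that the structure $\frac{W_{g,n}(-z_i,\mathbf z\setminus\{z_i\})}{dx(z_i)}$ becomes $\frac{\hat G_{g,n}(z;z)}{dx(z)}$ consistently — this is the same kind of delicate limit-at-the-base-point argument that appears in the proof of Lemma \ref{l:ps} and in the integration lemma preceding it, and the vanishing of the lower-limit terms relies on the precise order of vanishing of $P_1(z_i;\tau)/\wp'(z_i;\tau)$ at $z_i=0$. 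Everything else is a direct unwinding of the definitions of $\xi_1$, $\hat G_{g,n+1}$, and the relation \eqref{eq:psi1}.
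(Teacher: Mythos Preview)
Your approach is essentially the same as the paper's: start from Corollary~\ref{c:identity}, integrate each $z_i$ from $0$ to $z$, principal-specialize, weight by $\hbar^{2g+n}/n!$, sum over $2g-2+n\ge 0$, then multiply by $\psi$ and invoke \eqref{eq:psi1}. The paper carries this out exactly, arriving first at the scalar identity
\[
\sum \tfrac{\hbar^{2g+n}}{n!}\!\int\!\cdots\!\int B_{g,n+1} \;=\; -\sum \tfrac{\hbar^{2g+n}}{n!}\Bigl(\tfrac{\hat G_{g,n+1}(z';z)}{dz'}\Bigr)_{z'=0} + \tfrac{P_1(z;\tau)}{\wp'(z;\tau)^2}\sum_{2g-2+n\ge 0}\tfrac{\hbar^{2g+n}}{(n-1)!}\tfrac{\hat G_{g,n}(z;z)}{dz},
\]
and then multiplying by $-\psi$.

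One point in your writeup is imprecise and worth correcting: in step~(3), the index shift $n\mapsto n-1$ does \emph{not} turn the last sum into $\hbar\cdot\tfrac{P_1}{\wp'}\,\xi_1(z;z)$ on the nose. The shifted sum ranges over $2g+m\ge 1$, whereas $\xi_1$ (by its definition) also contains the $(g,m)=(0,0)$ term, which equals $-\hat G_{0,1}(z)/dx(z)=\wp'(z;\tau)$. It is precisely this missing $(0,0)$ term that produces the stray $-\hbar P_1(z;\tau)$ on the left of \eqref{eq:tbp}: one has
\[
\tfrac{P_1}{\wp'}\sum_{2g-2+n\ge 0}\tfrac{\hbar^{2g+n}}{(n-1)!}\tfrac{\hat G_{g,n}}{dx}\;\psi \;=\; -\hbar\,\tfrac{P_1}{\wp'}\,\xi_1\,\psi \;+\; \hbar P_1\,\psi \;=\; -\hbar^2\tfrac{P_1}{\wp'}\tfrac{d}{dx}\psi + \hbar P_1\psi.
\]
So the $-\hbar P_1$ does not come from any separate ``unstable piece entering the sum with $2g-2+n\ge 0$'' (Corollary~\ref{c:identity} has no such terms); it comes from the $W_{0,1}$ contribution hidden inside $\xi_1$ via \eqref{eq:psi1}. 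With that correction your argument goes through and matches the paper line by line.
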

It turns out that we can indeed show directly that this is the case. We start with \eqref{eq:identity}, which is valid for $2g-2+n \geq 0$:
\begin{equation}
B_{g,n+1}(\mathbf{z}) = 
- \left( \frac{W_{g,n+1}(-z', \mathbf{z}) }{d z'} \right)_{z' = 0}+  \sum_{i=1}^n d_{z_i} \left( \frac{ P_{1}(z_i ;\tau) }{\wp'(z_i;\tau) } \frac{W_{g,n}(-z_i ,\mathbf{z} \setminus \{ z_i \} )}{d x(z_i)} \right).
\end{equation}
We integrate in $z_1, \ldots, z_n$ from $0$ to $z$, multiply by $\frac{\hbar^{2g+n}}{n!}$, and sum over $g$ and $n$ from $2g-2+n \geq 0$. We get:
\begin{multline}
\sum_{2g-2+n \geq 0} \frac{\hbar^{2g+n}}{n!} \left(\int_0^z \cdots \int_0^z B_{g,n+1}(\mathbf{z})\right) =\\ -\sum_{2g-2+n \geq 0} \frac{\hbar^{2g+n}}{n!} \left(\frac{\hat{G}_{g,n+1}(z';z)}{dz'} \right)_{z'=0} 
+ \sum_{2g-2+n \geq 0}  \frac{\hbar^{2g+n}}{(n-1)!} \frac{P_1(z;\tau)}{\wp'(z;\tau)^2} \frac{\hat{G}_{g,n}(z;z)}{dz} .\label{eq:tbr}
\end{multline}
Then, using \eqref{eq:psi1}, we notice that
\begin{align}
\hbar \frac{d}{dx} \psi (z) =& \psi_1(z;z) \\ \nonumber
=& \left[ \wp'(z;\tau) - \frac{1}{\wp'(z;\tau)} \sum_{2g-2+n \geq -1} \frac{\hbar^{2g+n}}{n!} \frac{\hat{G}_{g,n+1}(z;z)}{dz} \right] \psi(z).
\end{align}
Redefining the index in the sum and multiplying the equation by $\hbar$, we get
\begin{align}
\hbar^2 \frac{d}{dx} \psi (z) = \left[ \hbar \wp'(z;\tau) - \frac{1}{ \wp'(z;\tau)} \sum_{2g-2+n \geq 0} \frac{\hbar^{2g+n}}{(n-1)!} \frac{\hat{G}_{g,n}(z;z)}{dz} \right] \psi(z).
\end{align}
Going back to \eqref{eq:tbr}, we multiply by $- \psi(z)$ and use the above to rewrite it as
\begin{multline}
- \left[  \sum_{2g-2+n \geq 0} \frac{\hbar^{2g+n}}{n!} \left(\int_0^z \cdots \int_0^z B_{g,n+1}(\mathbf{z})\right) +  \hbar^2 \frac{P_1(z;\tau)}{\wp'(z;\tau)} \frac{d}{dx} \right] \psi(z) \\ \nonumber
= \left[ -\hbar P_1(z;\tau) +  \sum_{2g-2+n \geq 0} \frac{\hbar^{2g+n}}{n!} \left(\frac{\hat{G}_{g,n+1}(z';z)}{dz'} \right)_{z'=0} \right] \psi(z),
\end{multline}
which is precisely \eqref{eq:tbp}.

\section{Identities for elliptic functions}

\label{s:identities}

In this section we go back to Corollary \ref{c:identity} and explore its consequences for elliptic functions. We see that \eqref{eq:identity} gives rise to an infinite sequence of identities for cycle integrals of elliptic functions. Let us have a look at these identities for the first few levels in $2g-2+n$.

Let us start at the first level, namely $2g-2+n = 0$. We start with $(g,n) = (1,0)$. In this case \eqref{eq:identity} becomes
\begin{equation}
B_{1,1} = - \left( \frac{W_{1,1}(-z_0)}{d z_0} \right)_{z_0=0}.
\end{equation}
By definition
\begin{equation}
B_{1,1} =  - \oint_A \frac{P_2(2 z;\tau)}{2 \wp'(z;\tau)^2} dz,
\end{equation}
while from Appendix A, after a few simplifications, we obtain
\begin{equation}
\left( \frac{W_{1,1}(-z_0)}{d z_0} \right)_{z_0=0} =- \frac{G_4(\tau) (G_2(\tau)^2 - 5 G_4(\tau) )}{60 (20 G_4(\tau)^3 - 49 G_6(\tau)^2 )}.
\end{equation}
The identity is then
\begin{cor}\label{c:ellint}
\begin{equation}
 \oint_A \frac{P_2(2 z;\tau)}{\wp'(z;\tau)^2} dz =  \frac{G_4(\tau) ( 5 G_4(\tau)-G_2(\tau)^2 )}{30 (20 G_4(\tau)^3 - 49 G_6(\tau)^2 )}.
\end{equation}
\end{cor}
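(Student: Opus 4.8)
The plan is to prove Corollary \ref{c:ellint} by specializing the general identity \eqref{eq:identity} from Corollary \ref{c:identity} to the case $(g,n)=(1,0)$, which is the first level $2g-2+n=0$. When $n=0$ there is no summation term on the right-hand side of \eqref{eq:identity}, so the identity collapses to $B_{1,1} = -\left(W_{1,1}(-z_0)/dz_0\right)_{z_0=0}$. Thus the entire content reduces to two independent computations: evaluating the $A$-cycle integral defining $B_{1,1}$, and evaluating the Taylor coefficient of $W_{1,1}$ at $z_0=0$. The final identity is then obtained by equating the two and clearing the sign and the factor of $2$.

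\textbf{First step: the left-hand side.} By the definition of $B_{g,n+1}$ together with the recursive structure $\mathcal{R}^{(2)}W_{1,1}(z,-z) = W_{0,2}(z,-z) = P_2(2z;\tau)\,dz\,dz$ (the unstable terms $(g_1,|I|)=(0,0)$ are excluded, and $W_{0,0}$ does not exist), one reads off
\begin{equation}
B_{1,1} = \oint_A \frac{P_2(2z;\tau)\,dz^2}{2\wp'(z;\tau)^2\,dz} = \oint_A \frac{P_2(2z;\tau)}{2\wp'(z;\tau)^2}\,dz,
\end{equation}
so $B_{1,1} = -\frac{1}{2}\oint_A \frac{P_2(2z;\tau)}{\wp'(z;\tau)^2}\,dz$ up to the sign convention fixed in the definition of $B_{g,n+1}$; here the $A$-cycle is $z\in[0,1)$ shifted by $i\epsilon$ as prescribed in the footnote, which is what makes the integral of this elliptic function well defined.

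\textbf{Second step: the right-hand side.} I would take the explicit expression for $W_{1,1}(z)$ recorded in Appendix A, write it in terms of $\wp$, $\wp'$ and the Eisenstein series $G_2, G_4, G_6$, expand near $z_0=0$ using \eqref{eq:wpexpansion} and $\wp'(z;\tau)\sim -2/z^3$, and extract the coefficient of $dz_0$ at $z_0=0$. Since $W_{1,1}$ is even in its argument up to the differential, one checks the limit is finite, and the stated value $-\frac{G_4(G_2^2-5G_4)}{60(20G_4^3-49G_6^2)}$ emerges after simplification (using $g_2=60G_4$, $g_3=140G_6$, and $\Delta = g_2^3-27g_3^2 = 60^3 G_4^3 - 27\cdot 140^2 G_6^2$, which accounts for the $20G_4^3-49G_6^2$ combination up to an overall constant). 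Equating the two sides and multiplying through by $-2$ gives the claimed formula.

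\textbf{Main obstacle.} The conceptual content is trivial once \eqref{eq:identity} is granted; the real work is the bookkeeping in the second step — correctly simplifying the Appendix A formula for $W_{1,1}$, which is a rational expression in $\wp,\wp'$ and modular forms, into a clean function of $G_2,G_4,G_6$, and then carrying out the small-$z_0$ expansion without sign or normalization errors. A secondary subtlety is being careful that the $A$-cycle integral on the left is interpreted with the $i\epsilon$-shift, since $P_2(2z;\tau)/\wp'(z;\tau)^2$ has poles on the naive contour $[0,1)$ (at the half-period $z=1/2$ where $\wp'$ vanishes, and potentially at $z=0$); one should verify that the integrand is in fact regular enough, or that the prescribed contour handles these poles consistently with the residue manipulations used to derive \eqref{eq:identity}. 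Beyond that, everything is a direct substitution, and one can (as the authors note they did) cross-check the numerical value of both sides for a sample $\tau$ on Mathematica.
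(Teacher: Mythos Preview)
Your approach is essentially the same as the paper's: in Section~\ref{s:identities} the corollary is obtained precisely by specializing \eqref{eq:identity} to $(g,n)=(1,0)$, computing $B_{1,1}$ from the definition and $\left(W_{1,1}(-z_0)/dz_0\right)_{z_0=0}$ from the Appendix~A formula, and equating. One small slip: your two displayed expressions for $B_{1,1}$ disagree in sign --- the correct one is $B_{1,1}=-\oint_A \frac{P_2(2z;\tau)}{2\wp'(z;\tau)^2}\,dz$, the minus coming from $d(-z)=-dz$ in $W_{0,2}(z,-z)$ --- but you flag this yourself and it does not affect the strategy. For completeness, note that the paper also supplies an entirely independent direct proof in Appendix~\ref{s:proof}, using the double-angle formula for $P_2(2z)$ and integration by parts to reduce the $A$-cycle integral to a $2\times 2$ linear system; that route avoids Corollary~\ref{c:identity} altogether.
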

In other words, we obtain an explicit expression for the A-cycle integral of the elliptic function $\frac{P_2(2 z;\tau)}{\wp'(z;\tau)^2}$ in terms of quasi-modular forms. To emphasize the non-triviality of this expression, we provide an independent proof of this Corollary in Appendix B directly from the theory of elliptic functions.

Let us now study the other case at level $2g-2+n=0$, namely $(g,n) = (0,3)$. In this case \eqref{eq:identity} becomes:
\begin{multline}
\frac{B_{0,3}(z_1,z_2)}{dz_1 dz_2} = - \left(\frac{W_{0,3}(-z_{0}, \mathbf{z})}{dz_{0} d z_1 dz_2} \right)_{z_{0}=0} -  \frac{d}{dz_1} \left(\frac{P_{1}(z_{1};\tau)}{\wp'(z_{1};\tau)^{2}} P_{2}(z_{1} + z_{2};\tau) \right) \\
- \frac{d}{dz_2} \left(\frac{P_{1}(z_{2};\tau)}{\wp'(z_{2})^{2}} P_{2}(z_{1} + z_{2};\tau) \right) .
\end{multline}
But by definition
\begin{multline}
\frac{B_{0,3}(z_1,z_2) }{dz_1 dz_2}= - \oint_A \frac{P_2(z-  z_1;\tau) P_2(z+ z_2;\tau)}{2 \wp'(z;\tau)^2} d z\\ - \oint_A \frac{P_2(z +z_1;\tau) P_2(z - z_2;\tau)}{2 \wp'(z;\tau)^2} d z.
\end{multline}
Moreover, the result for $W_{0,3}(z_0, z_1, z_2)$ in Appendix A reads
\begin{multline}
\left(\frac{W_{0,3}(-z_{0}, z_1, z_2)}{dz_{0} dz_1 dz_2} \right)_{z_{0}=0} \\= - \frac{12}{\Delta} \sum_{i=1}^3  (20 G_4(\tau) - e_i^2) (e_i + G_2(\tau) ) P_2(z_1-\omega_i;\tau) P_2(z_2-\omega_i;\tau).
\end{multline}
Therefore, the identity becomes
\begin{cor}
\begin{multline}
\oint_A \frac{P_2(z-  z_1;\tau) P_2(z+ z_2;\tau)}{2 \wp'(z;\tau)^2} d z + \oint_A \frac{P_2(z +z_1;\tau) P_2(z - z_2;\tau)}{2 \wp'(z;\tau)^2} d z \\= \frac{d}{dz_1} \left(\frac{P_{1}(z_{1};\tau)}{\wp'(z_{1};\tau)^{2}} P_{2}(z_{1} + z_{2};\tau) \right) + \frac{d}{dz_2} \left(\frac{P_{1}(z_{2};\tau)}{\wp'(z_{2};\tau)^{2}} P_{2}(z_{1} + z_{2};\tau) \right)  \\
-\frac{12}{\Delta}\sum_{i=1}^3  (20 G_4(\tau) - e_i^2) (e_i + G_2(\tau) ) P_2(z_1-\omega_i;\tau) P_2(z_2-\omega_i;\tau).
\end{multline}
\end{cor}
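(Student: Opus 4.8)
The plan is to derive the stated identity for $(g,n)=(0,3)$ as a direct specialization of Corollary~\ref{c:identity}, exactly parallel to the treatment of the $(g,n)=(1,0)$ case. First I would write down Corollary~\ref{c:identity} with $g=0$ and $n=2$, so that $\mathbf{z}=\{z_1,z_2\}$ and the index $i$ runs over $\{1,2\}$. The left-hand side is $B_{0,3}(z_1,z_2)/(dz_1 dz_2)$, and by the definition $B_{g,n+1}(\mathbf{z}) = \oint_A \mathcal{R}^{(2)} W_{g,n+1}(z,-z;\mathbf{z})/(2\wp'(z;\tau)^2 dz)$, I expand $\mathcal{R}^{(2)}W_{0,3}(z,-z;z_1,z_2)$. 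Since for $(g,n)=(0,2)$ the only contributions to $\mathcal{R}^{(2)}$ are $W_{0,2}(z,z_1)W_{0,2}(-z,z_2) + W_{0,2}(z,z_2)W_{0,2}(-z,z_1)$ (the $W_{-1,\cdot}$ term is absent and the stable-splitting term is empty), and $W_{0,2}(z,z_i)=P_2(z-z_i;\tau)dz\,dz_i$, this reproduces the claimed contour-integral expression for $B_{0,3}$ involving $P_2(z-z_1)P_2(z+z_2)$ and $P_2(z+z_1)P_2(z-z_2)$ (using that $P_2$ is even, so $P_2(-z-z_2)=P_2(z+z_2)$).

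Next I would handle the two pieces on the right-hand side of \eqref{eq:identity}. The term $-(W_{0,3}(-z_0,z_1,z_2)/(dz_0\,dz_1\,dz_2))_{z_0=0}$ is taken directly from the explicit formula for $W_{0,3}$ recorded in Appendix~A; substituting $z_0=0$ into that expression and using $P_2(-\omega_i;\tau)=P_2(\omega_i;\tau)$ gives the sum $-\frac{12}{\Delta}\sum_{i=1}^3 (20G_4-e_i^2)(e_i+G_2)P_2(z_1-\omega_i)P_2(z_2-\omega_i)$ appearing on the last line of the statement. For the sum $\sum_{i=1}^n d_{z_i}(\cdots)$ in \eqref{eq:identity}, with $n=2$, the $i=1$ term is $d_{z_1}\bigl(\frac{P_1(z_1;\tau)}{\wp'(z_1;\tau)}\frac{W_{0,2}(-z_1,z_2)}{dx(z_1)}\bigr)$; since $dx(z_1)=\wp'(z_1;\tau)dz_1$ and $W_{0,2}(-z_1,z_2)/(dz_1 dz_2)=P_2(-z_1-z_2;\tau)=P_2(z_1+z_2;\tau)$, this becomes $\frac{d}{dz_1}\bigl(\frac{P_1(z_1;\tau)}{\wp'(z_1;\tau)^2}P_2(z_1+z_2;\tau)\bigr)dz_1 dz_2$, and symmetrically for $i=2$. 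Dividing through by $dz_1 dz_2$, moving the $B_{0,3}$ term and the contour integrals to one side, and the rest to the other yields precisely the displayed corollary.

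The only genuinely delicate point is the bookkeeping of the $z_0\to 0$ limit together with the sign and evenness conventions: one must confirm that $(W_{0,3}(-z_0,\ldots)/dz_0)_{z_0=0}$ is extracted from the Appendix~A formula with the correct orientation, and that the identification $W_{0,2}(-z,z_i)=P_2(z+z_i;\tau)dz\,dz_i$ is consistent with $B(z_1,z_2)=P_2(z_1-z_2;\tau)dz_1 dz_2$ and the evenness of $P_2$. These are the same conventions already used successfully in the proof of Lemma~\ref{l:sec} and in the $(1,0)$ computation above, so no new mechanism is needed; the corollary follows by straightforward substitution into Corollary~\ref{c:identity}. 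I expect the main (minor) obstacle to be merely verifying that no spurious factor of $2$ or sign enters when passing between $B_{0,3}$ written as a single contour integral of the symmetrized $\mathcal{R}^{(2)}$ and the two separate contour integrals displayed in the statement; this is resolved by noting that the factor $\tfrac12$ in the kernel denominator $2\wp'(z;\tau)^2$ is already carried along in the definition of $B_{g,n+1}$.
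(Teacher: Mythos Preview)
Your proposal is correct and follows essentially the same approach as the paper: specialize Corollary~\ref{c:identity} to $(g,n)=(0,2)$, expand $B_{0,3}$ via its definition in terms of $\mathcal{R}^{(2)}W_{0,3}$, evaluate $(W_{0,3}(-z_0,\mathbf{z})/dz_0)_{z_0=0}$ from the Appendix~A formula, and unwind the $d_{z_i}$ terms using $W_{0,2}$. The paper carries the signs with the opposite convention (treating $d(-z)=-dz$ literally, so that $B_{0,3}/(dz_1 dz_2)$ and $(W_{0,3}(-z_0,\cdots)/dz_0\cdots)_{z_0=0}$ each pick up an extra minus sign relative to your computation), but since these two sign flips cancel in the final rearrangement, both routes land on the same identity; your flagged caution about the sign bookkeeping is exactly the place where the two presentations differ cosmetically.
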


We can continue producing such identities at higher levels of $2g -2 + n \geq 0$. In particular, for all cases with $n=1$, we obtain identities relating cycle integrals of elliptic functions to quasi-modular forms. It would be interesting to study whether these identities are of interest from the point of view of elliptic functions and quasi-modular forms. For instance, they may be related to the cycle integrals studied in \cite{GM}.

\section{Non-perturbative wave-function and quantum curve}

\label{s:NP}

In each of the previous two sections, we obtained a differential operator that annihilates the perturbative wave-function. However, these differential operators were not quantum curves, according to Definition \ref{d:qc}. 

In this section we will switch gears and study the \textit{non-perturbative topological recursion} formalism described in \cite{BoE,BorE1, E2,EM1}. It is expected from matrix models that this non-perturbative wave-function should be annihilated by quantum curve. This is what we study in this section.

\subsection{Non-perturbative wave-function}

Let us now introduce a non-perturbative wave-function, along the lines of \cite{BoE,BorE1,E2,EM1}. 

One of the major motivations for the definition of the non-perturbative partition function in \cite{BoE,BorE1,E2,EM1} is to construct a $\tau$-function for an arbitrary algebraic curve. $\tau$-functions in classical integrable systems are functions that satisfy Hirota bilinear equations. The Hirota equations are also equivalent to a self-replication property of the kernel. Either of these statements imply that there exists a system of differential equations, which we can use to get the quantum curve.  In \cite{BorE1}, it is conjectured that this non-perturbative partition function is indeed a $\tau$-function.

\subsubsection{Notation}

Before we write down the expression for the non-perturbative partition function, we need to define some fundamental objects that we will use.

We define a Jacobi theta function (called $ \theta_{11} $ in some references):\footnote{We make a choice of characteristics here in defining our theta function; it may be interesting to study other choices of characteristics.}

\begin{align}
\theta(z|\tau) = \sum_{n \in \mathbb{Z}} e^{i \pi (n + 1/2)^2 \tau + 2 \pi i (z + 1/2)(n + 1/2)}
\end{align} where $ z \in \mathbb{C} $ and $ \tau \in \mathbb{H} $.
We also define:
\begin{align}
\zeta_{\hbar} (\tau) = \frac{1}{2\pi i \hbar}\left(\oint_{\mathcal{B}} y \mathrm{d}x - \tau \oint_{\mathcal{A}} y \mathrm{d}x\right)
\end{align}
and introduce the following notation:
\begin{align}
\theta(\tau) &= \theta (\zeta_\hbar (\tau) |\tau) \\ \nonumber
\theta_\bullet(z|\tau) &= \theta (\zeta_\hbar(\tau) + z | \tau)
\end{align}
Then we define the perturbative partition function:
\begin{align}
Z_{\text{pert}}(\tau) = \text{exp}\left( \frac{1}{\hbar^2} \sum_{k\geq 0} \hbar^k F_k(\tau)\right),
\end{align}
where the $ F_k $'s are the free energies of the spectral curve defined in \eqref{eq:fg} ($F_0$ and $F_1$ can be defined independently; we refer the reader to \cite{BorE1,EO,EO2} for more details) . However, this partition function is non-modular, which is not what we expect from a ``true" partition function coming from a quantum field theory. In order to construct the non-perturbative partition function (conjectured to be a $\tau$-function), we  multiply the perturbative partition function by certain combinations of theta functions and their derivatives, which exactly cancel out the non-modularity (proved in \cite{E2,EM1}). This also ensures that the non-perturbative partition function is background independent. 

\subsubsection{Non-perturbative partition function and wave-function}

The non-perturbative partition function introduced in \cite{E2,EM1} is defined as a functional on the spectral curve:
\begin{align} \label{eq:nppart}
	Z_{\text{NP}}(\tau) &= \text{exp}\left(\frac{1}{\hbar^2} \sum_{k \geq0} \hbar^k F_k(\tau)\right) \\ \nonumber
  \times &\left\{ \sum_{r\geq 0} \frac{1}{r!} \sum_{\substack{h_j,d_j \geq 0 \\ 2h_j +d_j - 2 >0 }} \hbar^{\sum  2h_j +d_j - 2 }  \prod_{j=1}^{r} \left(\frac{F_{h_j}^{(d_j)}(\tau)  }{(2\pi i)^{d_j} d_j !} \right) \nabla^{(\sum_j d_j)} \theta(\tau) \right\} 
\end{align}
where
\begin{align}
F^{(d)}_{h}(\tau) &= \frac{1}{n!}\frac{1}{(2\pi i)^d d!} \oint_{\mathcal{B}} \dots \oint_{\mathcal{B}} W_{h,d} (z_1, \ldots, z_d),
\end{align}
and $W_{h,d}(z_1,\ldots,z_d)$ are the meromorphic differentials constructed from the spectral curve. Here, $ \nabla \theta ( \tau ) = \left(\frac{\mathrm{d}}{\mathrm{d}z} \theta (z|\tau)\right)|_{z=0}$ .

From the non-perturbative partition function, one can define a non-perturbative wave-function, following \cite{BoE, BorE1}. What we will call non-perturbative wave-function in this paper, and denote by $\psi_{\text{NP}}$, is the $(1 \vert 1)$-kernel of \cite{BoE,BorE1}. It is defined as a ``Schlesinger" transformation of the non-perturbative partition function:
\begin{align}\label{eq:kernel}
\psi_{\text{NP}}(p_1,p_2)  = \frac{\mathcal{T}_\hbar[y \text{d}x \rightarrow y \text{d}x + \hbar \omega^{p_2-p_1}]}{\mathcal{T}_\hbar[y \text{d}x]},
\end{align}
where $\omega^{p_2-p_1}$ was defined in \eqref{e:omega} (we removed the $p$-dependence for clarity). In the following, for the Weierstrass spectral curve we will choose the base point $p_1 = 0$, and consider the wave-function as a function of $p_2=z$. So we write
\begin{equation}
\psi_{\text{NP}}(z) =  \frac{\mathcal{T}_\hbar[y \text{d}x \rightarrow y \text{d}x + \hbar \omega^{z-0}]}{\mathcal{T}_\hbar[y \text{d}x]}.
\end{equation}

\subsubsection{Graphical interpretation}

It turns out that $\psi_{\text{NP}}$ has a nice graphical interpretation in terms of connected graphs satisfying certain properties (see \cite{BoE} for more details). We define $S_k(z)$s, $k \geq 0$ as follows:
\begin{equation}\label{eq:expansion}
\psi_{\text{NP}}(z) =  \exp\left(\frac{1}{\hbar} \sum_{k \geq 0} \hbar^k S_k(z) \right).
\end{equation}
Then, we can write a general expression for the $S_k$s. For $k\geq 2$,
\begin{multline}\label{eq:sk}
S_{k}(z) = \sum_{\substack{h_j,n_j,d_j \geq 0 \\ \sum_{j \geq 0} (2h_j + n_j +d_j - 2) = k - 1 }}  \frac{1}{j!}\left( \prod_j  \hspace{.3cm} G^{h_{j},d_{j}}_{n_{j}}(z) \right)\\
\times \left( V^{(d_{1}, d_{2}, \cdots, d_{j})}_\bullet - \delta_{(\sum_j n_j) , 0} V^{(d_1)}V^{(d_2)}\cdots V^{(d_j)} \right),
\end{multline}
where we used the notation
\begin{align}
G^{h,d}_{n}(z) &= \frac{1}{n!}\frac{1}{(2\pi i)^{d} d!} \int_{0}^{z} \dots \int_{0}^{z} \oint_{\mathcal{B}} \dots \oint_{\mathcal{B}} W_{h,n+d}(z_1,\ldots,z_{n+d}) \\
V^{(d_{1}, \cdots, d_{j})}_\bullet &= \frac{\partial}{\partial z_{1}} \cdots \frac{\partial}{\partial z_{j}} \log \left[ E \left( \exp \sum z_{i} \nabla^{d_{i}}\right)|_{z=0}\right].
\end{align}
Here, $E$ is defined as $E(\nabla^{d_i}) = \frac{\nabla^{d_i} \theta_\bullet(z|\tau)}{\theta_\bullet(z|\tau)}$, with $ \nabla = \frac{\mathrm{d}}{\mathrm{d}z}$. The undotted  $ V^{(\cdots)} $s are given by the same expressions but in terms of undotted theta functions. We note as well that by connectedness we have
\begin{align}
V^{(d_1,d_2)}_\bullet &= V^{(d_1+d_2)}_\bullet - V^{(d_1)}_\bullet V^{(d_2)}_\bullet ,\\
V^{(d_1,d_2,d_3)}_\bullet &= V^{(d_1+d_2+d_3)}_\bullet - V^{(d_1+d_2)}_\bullet V^{(d_3)}_\bullet - V^{(d_2+d_3)}_\bullet V^{(d_1)}_\bullet -V^{(d_3+d_1)}_\bullet V^{(d_2)}_\bullet \\ \nonumber
&+ 2 V^{(d_1)}_\bullet V^{(d_2)}_\bullet V^{(d_3)}_\bullet,
\end{align}
and so on.

$S_0$ and $S_1$ are defined differently. For the Weierstrass spectral curve, they are simply given by
\begin{align}
S_{0} &= \int_{0}^{z} \wp'(z)^{2} \mathrm{d} z, \\
S_{1} &= -\int_{0}^{z} \frac{\wp''(z)}{2\wp'(z)} \mathrm{d} z.
\end{align}

Coming back to \eqref{eq:sk}, we can write down the graphical expansion explicitly. The expansion was written down in \cite{BoE}; we rewrite it here for reference.\footnote{Note that a few typos in the expressions of \cite{BoE} were corrected here.}
\begin{align}
S_2(z) &= G^{0,0}_3 (z) + G^{1,0}_1(z) + G^{0,1}_2 V^{(1)}_\bullet + 
G^{1, 1}_0(z) \left( V^{(1)}_\bullet - V^{(1)} \right)  \\ \nonumber
&+ G^{0,2}_1 (z) V^{(2)}_\bullet+ 
G^{0, 3}_0(z) (V^{(3)}_\bullet - V^{(3)}),
\end{align}
\begin{align}
S_3(z) &= G^{0,0}_4(z) + G^{1,0}_2(z) + G^{0,1}_3(z) V^{(1)}_\bullet + G^{1,1}_1(z) V^{(1)}_\bullet + G^{0,2}_2(z) V^{(2)}_\bullet \\ \nonumber
&+ G^{1,2}_0(z) (V^{(2)}_\bullet - V^{(2)} ) + G^{0,3}_1(z) V^{(3)}_\bullet + G^{0,4}_0(z) (V^{(4)}_\bullet - V^{(4)} ) \\ \nonumber
& + \frac{1}{2} (G^{0,1}_2(z))^2 V^{(1,1)}_\bullet + G^{0,1}_2(z) G^{1, 1}_0(z) V^{(1,1)}_\bullet   \\ \nonumber
&+ \frac{1}{2} (G^{1, 1}_0(z))^2 ( V^{(1,1)}_\bullet - (V^{(2)})^2)+ G^{0,1}_2(z) G^{0,2}_1(z) V^{(1,2)}_\bullet \\ \nonumber 
& + G^{0,1}_2(z) G^{0,3}_0(z) V^{(1,3)}_\bullet + G^{0,3}_0(z) G^{1,1}_0(z) (V^{(1,3)}_\bullet - V^{(1)}V^{(3)} ) \\ \nonumber 
&+ \frac{1}{2} G^{0,2}_1(z) V^{(2,2)}_\bullet + G^{0,2}_1(z) G^{0,3}_0(z) V^{(2,3)}_\bullet   \\ \nonumber
&+ G^{0,2}_1(z) G^{1,1}_0(z) V^{(1,2)}_\bullet + \frac{1}{2} (G^{0,3}_0(z))^2 (V^{(3,3)}_\bullet - (V^{(3)})^2),
\end{align}
\begin{align}
S_4(z) &=G^{0,0}_5(z)   
+ G^{1,0}_3(z) + G^{2,0}_1(z)  + 
G^{0,1}_4(z) V^{(1)}_\bullet + G^{1,1}_2(z) V^{(1)}_\bullet  \\ \nonumber 
&+ G^{1,2}_1(z) V^{(2)}_\bullet + G^{0,3}_2(z) V^{(3)}_\bullet + 
G^{1,3}_0(z) (-V^{(3)} + V^{(3)}_\bullet ) + G^{0,4}_1(z) V^{(4)}_\bullet  \\ \nonumber 
&+ G^{0,5}_0(z) (-V^{(5)} + V^{(5)}{_\bullet} ) + G^{0,1}_2(z) G^{0,1}_3(z) V^{(1,1)}_\bullet +
G^{0,1}_3(z) G^{1,1}_0(z) V^{(1,1)}_\bullet  \\ \nonumber 
&+ G^{0,1}_2(z) G^{1,1}_1(z) V^{(1,1)}_\bullet +G^{1,1}_0(z) G^{1,1}_1(z) V^{(1,1)}_\bullet + 
G^{0,1}_3(z) G^{0,2}_1(z) V^{(1,2)}_\bullet  \\ \nonumber 
&+ G^{0,1}_2(z) G^{0,2}_2(z) V^{(1,2)}_\bullet + 
G^{0,2}_2(z) G^{1,1}_0(z) V^{(1,2)}_\bullet + 
G^{0,2}_1(z) G^{1,1}_1(z) V^{(1,2)}_\bullet  \\ \nonumber 
&+ G^{0,1}_2(z) G^{1,2}_0(z) V^{(1,2)}_\bullet + 
G^{1,1}_0(z) G^{1,2}_0(z) (-V^{(1)} V^{(2)} + V^{(1,2)}_\bullet) \\ \nonumber 
&+ G^{0,1}_3(z) G^{0,3}_0(z) V^{(1,3)}_\bullet + 
G^{0,1}_2(z) G^{0,3}_1(z) V^{(1,3)}_\bullet + 
G^{0,3}_1(z) G^{1,1}_0(z) V^{(1,3)}_\bullet  \\ \nonumber 
&+ G^{0,3}_0(z) G^{1,1}_1(z) V^{(1,3)}_\bullet + 
G^{0,1}_2(z) G^{0,4}_0(z) V^{(1,4)}_\bullet + G^{0,2}_3(z) V^{(2)}_\bullet  \\ \nonumber 
&+ G^{0,4}_0(z) G^{1,1}_0(z) (-V^{(1)} V^{(4)} + V^{(1,4)}_\bullet) + 
G^{0,2}_1(z) G^{0,2}_2(z) V^{(2,2)}_\bullet  \\ \nonumber 
&+ G^{0,2}_2(z) G^{0,3}_0(z) V^{(2,3)}_\bullet + G^{0,2}_1(z) G^{0,3}_1(z) V^{(2,3)}_\bullet  + G^{2,1}_0(z) (-V^{(1)} + V^{(1)}_{\bullet} )  \\ \nonumber
&+ G^{0,2}_1(z) G^{0,4}_0(z) V^{(2,4)}_\bullet + 
G^{0,3}_0(z) G^{0,3}_1(z) V^{(3,3)}_\bullet + 
G^{0,2}_1(z) G^{1,2}_0(z) V^{(2,2)}_\bullet   \\ \nonumber 
&+ 
\frac{1}{6} G^{0,1}_2(z)^3 V^{(1,1,1)}_\bullet + 
\frac{1}{2} G^{0,1}_2(z)^2 G^{1,1}_0(z) V^{(1,1,1)}_\bullet\\ \nonumber  
&+ \frac{1}{6} G^{1,1}_0(z)^3 (-(V^{(1)})^3 + V^{(1,1,1)}_\bullet) + 
\frac{1}{2} G^{0,1}_2(z)^2 G^{0,2}_1(z) V^{(1,1,2)}_\bullet   \\ \nonumber 
& +G^{0,1}_2(z) G^{0,2}_1(z) G^{1,1}_0(z) V^{(1,1,2)}_\bullet + \frac{1}{2} G^{0,2}_1(z) G^{1,1}_0(z)^2 V^{(1,1,2)}_\bullet \\ \nonumber
&+ \frac{1}{2} G^{0,1}_2(z)^2 G^{0,3}_0(z) V^{(1,1,3)}_\bullet + 
G^{0,1}_2(z) G^{0,3}_0(z) G^{1,1}_0(z) V^{(1,1,3)}_\bullet  \\ \nonumber 
&+ \frac{1}{2} G^{0,3}_0(z) G^{1,1}_0(z)^2 (-(V^{(1)})^2 V^{(3)} + V^{(1,1,3)}_\bullet) + 
\frac{1}{2} G^{0,1}_2(z) G^{0,2}_1(z)^2 V^{(1,2,2)}_\bullet   \\ \nonumber 
&+ \frac{1}{2} G^{0,2}_1(z)^2 G^{1,1}_0(z) V^{(1,2,2)}_\bullet + G^{0,1}_2(z) G^{0,2}_1(z) G^{0,3}_0(z) V^{(1,2,3)}_\bullet  \\ \nonumber  
 &+ G^{0,2}_1(z) G^{0,3}_0(z) G^{1,1}_0(z) V^{(1,2,3)}_\bullet + 
\frac{1}{2} G^{0,1}_2(z) G^{0,3}_0(z)^2 V^{(1,3,3)}_\bullet  \\ \nonumber 
&+ \frac{1}{2} G^{0,3}_0(z)^2 G^{1,1}_0(z) (-V^{(1)} (V^{(3)})^2 + V^{(1,3,3)}_\bullet) + 
\frac{1}{6} G^{0,2}_1(z)^3 V^{(2,2,2)}_\bullet \\ \nonumber 
&+ \frac{1}{2} G^{0,2}_1(z)^2 G^{0,3}_0(z) V^{(2,2,3)}_\bullet + \frac{1}{2} G^{0,2}_1(z) G^{0,3}_0(z)^2 V^{(2,3,3)}_\bullet \\ \nonumber 
&+ \frac{1}{6} G^{0,3}_0(z)^3 (-(V^{(3)})^3 + V^{(3,3,3)}_\bullet) + 
G^{0,3}_0(z) G^{1,2}_0(z) (-V^{(2)} V^{(3)} + V^{(2,3)}_\bullet) \\ \nonumber
&+ G^{0,3}_0(z) G^{0,4}_0(z) (-V^{(3)} V^{(4)} + V^{(3,4)}_\bullet)  + 
\frac{1}{2} G^{0,1}_2(z) G^{1,1}_0(z)^2 V^{(1,1,1)}_\bullet 
\end{align}

Let us remark here that the theta functions vanish at argument 0, and hence the undotted $ V^{(d)} $'s are not defined. However, they only appear along with terms $ G^{h,d}_n $  with $  n = 0 $  (which are constant in z). These terms only change the wave-function $ \psi $ by an overall (possibly $ \hbar $-dependent) constant, and hence do not pose an issue in the following.

\subsection{Quantization condition}

In the previous subsection, we defined a non-perturbative wave-function $\psi_{\text{NP}}$. We also studied its graphical interpretation in the form
\begin{equation}
\psi_{\text{NP}}(z) =  \exp\left(\frac{1}{\hbar} \sum_{k \geq 0} \hbar^k S_k(z) \right).
\end{equation}
This was considered as a formal asymptotic series in $\hbar$. But in general, the $S_k(z)$ will also depend on $\hbar$; they however will not have power series expansions in $\hbar$. 
Thus, for this expansion to be useful for us, the $S_k(z)$ should be independent of $\hbar$. This can be referred to as a \emph{quantization condition} for the spectral curve.\footnote{The quantization condition for spectral curves was explored in \cite{Gu,GS} and subsequently in \cite{BoE} for spectral curves in $\mathbb{C}^* \times \mathbb{C}^*$, in the context of knot theory and the AJ conjecture. In this context, the quantization condition has a beautiful interpretation in terms of algebraic K-theory. See \cite{GS} and also \cite{BoE} for more details.}

The problem comes from
\begin{align}
\zeta_{\hbar}(\tau) = \frac{1}{2\pi i \hbar}\left(\oint_{\mathcal{A}} y \mathrm{d}x - \tau \oint_{\mathcal{B}} y \mathrm{d}x\right),
\end{align}
which enters into the definition of $\theta_\bullet(z|\tau)$. In general, it depends on $\hbar$. The simplest way to ensure that the $S_k$'s do not depend on $\hbar$ is to set $\zeta_\hbar = 0 $.\footnote{We note here that this is not the only way however; see \cite{BoE,GS} for more details. It would be interesting to investigate more general elliptic spectral curves that satisfy the quantization condition.} Let us work out what this means for the Weierstrass spectral curve.
First,
\begin{align}
\oint \limits_{\mathcal{B}} y \mathrm{d}x &= \oint \limits_{\mathcal{B}} \wp'(z;\tau)^{2} \mathrm{d} z \\ \nonumber
		&= - \int\limits_{0}^{\tau} \wp''(z;\tau) \wp(z;\tau) \mathrm{d} z \\ \nonumber
		&= - \int\limits_{0}^{ \tau} (6\wp(z;\tau)^{2} - \frac{g_2(\tau)}{2}) \wp(z;\tau) \mathrm{d} z \\ \nonumber
		&= 	\left(-\frac{3}{5} g_{3}(\tau) z + \frac{2}{5} g_{2}(\tau) \zeta(z;\tau) \right)_{0}^{ \tau} \\ \nonumber
		&= -\frac{3}{5} g_{3}(\tau) \tau + \frac{2}{5} g_{2}(\tau) (2\pi i +  \tau G_{2}(\tau)).
\end{align}
Second,
\begin{align}
\oint \limits_{\mathcal{A}}  y \mathrm{d}x &= \left(-\frac{3}{5} g_{3}(\tau) z + \frac{2}{5} g_{2} (\tau)\zeta(z;\tau) \right)_{0}^{1} \\
		&= -\frac{3}{5} g_{3}(\tau)  + \frac{2}{5} g_{2}(\tau) G_{2} (\tau).
\end{align}
Therefore, we get that
\begin{equation}
\zeta_\hbar \equiv \frac{1}{\hbar} \left (\frac{2}{5} g_2(\tau) \right).
\end{equation}
This tells us that for $\zeta_\hbar = 0$, we should set $ g_2 (\tau)= 0 $, which fixes the isomorphism class of the elliptic curve (i.e., fixes $ \tau $ in the fundamental domain). Without loss of generality we can also choose $ g_3(\tau) = 4 $ to get the curve in the form $ y^2 = 4(x^3 -1) $. For this curve, the values of $ x = \wp(z;(\tau)) $ at the half-periods are the third roots of unity: $ 1, \omega , \omega^2  $. 

This elliptic curve is of course very special. It corresponds to the curve with $\tau = \exp\left( \frac{2 \pi i}{3} \right)$. Its $j$-invariant vanishes. It also has complex multiplication. In fact, after rescaling $y \to 2 y$, it becomes the curve 144A1 in Cremona's classification. It would be interesting to investigate what role these special properties of the elliptic curve play in the non-perturbative setting.

For the rest of this section we focus on that particular elliptic curve, hence we will remove the explicit $\tau$ dependence, since $\tau$ is now fixed. The formulae that we will derive for the non-perturbative $S_k$'s are only valid for this particular elliptic curve.

\subsection{Quantum curve}

The authors of \cite{BorE1} conjecture that the non-perturbative partition function \eqref{eq:nppart} is a $\tau$-function, \emph{i.e.} that it satisfies the Hirota equations. Assuming this conjecture, they argue that there should be a quantum curve $\hat{P}(\hat{x},\hat{y}; \hbar)$ of the spectral curve that kills the non-perturbative wave-function:
\begin{equation}
\hat{P}(\hat{x},\hat{y}; \hbar) \psi_{\text{NP}}(z) = 0.
\end{equation}
We refer the reader to \cite{BorE1} for the details of the argument. The goal of this subsection is to study whether this conjecture is true for the Weierstrass spectral curve $y^2 = 4 (x^3-1)$.

To be more precise, for the case of the Weierstrass spectral curve the conjecture can be formulated as follows:
\begin{conj} \label{qconj}
	 Consider the Weierstrass spectral curve $ P(x,y) = y^2 - 4 (x^3 -1) = 0 $. Then there is a unique quantum curve $ \hat{P}(\hat{x},\hat{y};\hbar) $  of the form
	 	\begin{align}\label{eq:npqcurve}
	\hat{P}(\hat{x},\hat{y};\hbar) = \hbar^2 \frac{\text{d}^2}{\text{d} x^2} - 4 (x^3-1)+ {\sum_{i \geq 1} \hbar^{2 i} A_{2 i} (x) \frac{\text{d}}{\text{d} x} } + {\sum_{j \geq 1} \hbar^{2 j} B_{2 j} (x)} ,
	\end{align}
	where the $ A_{i} (x) $ and $ B_{j} (x) $ are polynomials of x,
	  which kills the non-perturbative wave-function constructed by equation \eqref{eq:kernel}:
	 \begin{align}\label{eq:npqcurve1}
	 \hat{P}(\hat{x},\hat{y};\hbar) \psi_{\text{NP}}(z) = 0.
	 \end{align} 
	Note that only even powers of $\hbar$ appear in the quantum curve. Moreover, we conjecture that the $A_{2i}(x)$ are polynomials of degree at most $i-2$, and the $B_{2j}(x)$ are polynomials of degree at most $j$.
	\end{conj}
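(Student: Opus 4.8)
The plan is to verify Conjecture~\ref{qconj} recursively in powers of $\hbar$; this strategy delivers a complete proof through order $\hbar^5$, while the full statement (all orders, with the stated degree bounds) remains open. I would write the graphical expansion \eqref{eq:expansion}--\eqref{eq:sk} as $\psi_{\text{NP}}(z)=\exp\big(\tfrac1\hbar\,\mathcal S(z)\big)$ with $\mathcal S=\sum_{k\ge 0}\hbar^k S_k$, and let a subscript $x$ denote $\tfrac{d}{dx}=\tfrac{1}{\wp'(z)}\tfrac{d}{dz}$, so that $\hbar^2\tfrac{d^2}{dx^2}\psi_{\text{NP}}=\big(\mathcal S_x^2+\hbar\,\mathcal S_{xx}\big)\psi_{\text{NP}}$. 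Inserting the ansatz \eqref{eq:npqcurve} into $\hat P(\hat x,\hat y;\hbar)\psi_{\text{NP}}=0$ turns it into
\begin{equation}
\mathcal S_x^2+\hbar\,\mathcal S_{xx}-4(x^3-1)+\sum_{i\ge 1}\hbar^{2i-1}A_{2i}(x)\,\mathcal S_x+\sum_{j\ge 1}\hbar^{2j}B_{2j}(x)=0 .
\end{equation}
At order $\hbar^0$ this reads $S_{0,x}^2-4(x^3-1)=\wp'(z)^2-4(\wp(z)^3-1)=0$, true by the Weierstrass equation with $g_2=0$, $g_3=4$; at order $\hbar^1$ it reads $2S_{0,x}S_{1,x}+S_{0,xx}+A_2(x)S_{0,x}=0$, and since $S_1=-\int_0^z\tfrac{\wp''}{2\wp'}$ makes $2S_{0,x}S_{1,x}+S_{0,xx}$ vanish identically, it forces $A_2\equiv 0$, in agreement with the claimed degree bound. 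For $n=2,3,4,5$ the order-$\hbar^n$ coefficient is a linear combination of the $S_{a,x}S_{b,x}$ with $a+b=n$, of $S_{n-1,xx}$, of terms $A_{2i}(x)S_{n-2i+1,x}$, and of $B_n(x)$ when $n$ is even; the single new unknown appearing at orders $\hbar^2,\hbar^3,\hbar^4,\hbar^5$ is $B_2$, $A_4$, $B_4$, $A_6$ respectively, each uniquely determined once the lower coefficients are known. Hence, provided such a $\hat P$ exists in the prescribed form, it is unique through this order.

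Executing this requires the explicit functions $S_0,\dots,S_5$ on $y^2=4(x^3-1)$. Apart from $S_0=\int_0^z\wp'^2$ and $S_1=-\int_0^z\tfrac{\wp''}{2\wp'}$, I would assemble $S_2,\dots,S_5$ from the graphical formulas, combining: the correlation functions $W_{g,n}$ of Appendix~\ref{s:correlation} with $g_2=0$, $g_3=4$; the iterated $B$-period and antiderivative operations defining $G^{h,d}_n(z)$, which are elementary once the $W_{g,n}$ are written through $\wp,\wp',P_1,P_2$; and the theta data $V^{(d)}_\bullet$ at the quantization point. The quantization condition $g_2=0$ forces $\zeta_\hbar=0$, so the $S_k$ carry no residual $\hbar$-dependence and \eqref{eq:expansion} is a genuine power series, which is what legitimizes the order-by-order reading; moreover the undefined undotted $V^{(d)}$'s occur only with the $z$-independent coefficients $G^{h,d}_0$, so they rescale $\psi_{\text{NP}}$ by a constant and drop out of $\hat P\psi_{\text{NP}}=0$. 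In practice this is a lengthy symbolic computation.

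The feature that makes the recursion close is the deck symmetry $z\mapsto-z$: from the explicit expressions one checks $S_k(-z)=(-1)^{k+1}S_k(z)$ (up to an irrelevant additive constant for odd $k$), so every term in the order-$\hbar^n$ coefficient has $z$-parity $(-1)^n$. Thus the even-order equations ($n=2,4$) are identities among even elliptic functions, i.e.\ rational functions of $x=\wp(z)$, and present $B_2,B_4$ rationally in $x$; the odd-order equations ($n=3,5$) are $\wp'(z)$ times an even elliptic function, so dividing by $S_{0,x}=\wp'$ presents $A_4,A_6$ rationally in $x$ as well. The main obstacle — the point that cannot be anticipated formally — is showing these rational functions are actually \emph{polynomials}, of degrees $0$ for $A_4$, $\le 1$ for $A_6$ and $B_2$, and $\le 2$ for $B_4$. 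A priori the candidates have poles on the ramification set $R=\{w_1,w_2,w_3,0\}$: at the half-periods, where $\wp'$ vanishes (these poles already afflict $S_{1,x}^2=\tfrac{9\,\wp^4}{16(\wp^3-1)^2}$ and $S_{1,xx}$), and at $z=0$, the pole of $\wp$; polynomiality requires delicate cancellations, supplied precisely by the remarkable collapse of the individually complicated $W_{g,n}$ into simple elliptic functions inside $\psi_{\text{NP}}$. One must finally check that, once $A_{2i},B_{2j}$ are fixed, the elliptic function remaining in each equation vanishes identically — the consistency of the system — and it is here that the identities of Section~\ref{s:identities} (Corollary~\ref{c:identity} and its higher analogues) enter, to make the relevant cycle integrals explicit. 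A conceptual understanding of those identities might in principle replace this case-by-case verification and yield all orders at once, but I do not attempt that here.
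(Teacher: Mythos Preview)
Your proposal takes essentially the same route as the paper: both acknowledge the full conjecture remains open and verify it order by order via the WKB expansion \eqref{eq:wkb}, using the parity $S_k(-z)=(-1)^{k+1}S_k(z)$ to explain why only even powers of $\hbar$ appear, reading off uniqueness from the recursive structure, and then checking polynomiality by explicit symbolic computation of the $S_k$ from the correlation functions of Appendix~\ref{s:correlation} specialized to $g_2=0$, $g_3=4$.

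Two minor points of divergence are worth noting. First, the paper's argument for the degree bound on $A_{2i}$ and $B_{2j}$ looks only at the pole order of the $S_k$ at $z=0$ (the double pole of $x$); your additional discussion of potential poles at the half-periods $w_i$ is a useful refinement, since polynomiality requires those to cancel as well, but the paper simply absorbs this into the Mathematica verification rather than isolating it conceptually. Second, your suggestion that the identities of Section~\ref{s:identities} ``enter, to make the relevant cycle integrals explicit'' is a nice idea for a more conceptual proof, but it is not what the paper actually does: those identities concern $A$-cycle integrals, whereas the non-perturbative $S_k$'s involve $B$-cycle integrals in the $G^{h,d}_n$, and the paper simply evaluates the latter directly in Mathematica. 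Also, the paper's explicit computation stops at $S_4'$ (level $2g-2+n=3$), so the quantum curve is determined modulo $O(\hbar^5)$; your inclusion of $A_6$ at order $\hbar^5$ would require $S_5$, which is one level beyond what is computed.
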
 

The general form of a quantum curve for the spectral curve $y^2 = 4(x^3-1)$ was given in \eqref{eq:qcwe}. The fact that only even powers of $\hbar$ should appear is clear. It is easy to see that the $ S_k $'s transform as $ S_k(-z) = (-1)^{k+1} S_k(z) $ from the transformation properties of the $ W_{g,n} $ and the $ V_\bullet^{(\cdots)} $s. The WKB expansion of \eqref{eq:npqcurve1} is
	\begin{align}\label{eq:wkb}
	S_{k-1}'' + \sum_{l=0}^{k} S_l'S_{k-l}' + B_k + \sum_{l=0}^{k} S_{k-l}' A_{l+1} = 0.
	\end{align}
	As $ A_i(z) = A_i(-z) $ and $ B_j(z) = B_j(-z) $, we see that $  A_{2i+1} = B_{2i+1} = 0 $ for all $ i \in \mathbb{Z} $ .
	
	Uniqueness of the quantum curve also follows directly from the WKB expansion above, which uniquely defines the $A_{2i}(x)$ and $B_{2j}(x)$.
	
	As for the degree of $A_{2i}(x)$ and $B_{2j}(x)$, the conjectured bound is easily obtained by looking at the behaviour of the $S_k$'s at $z=0$ (\emph{i.e.}, the double pole of $x$). For all $k \geq 1$, $S_k$ is of order $3-k$ at $z=0$ (positive order meaning a zero of degree $3-k$, negative order meaning a pole of order $|3-k|$). As for $k=0$, $S_0'$ has a pole of order $3$. Now, \eqref{eq:wkb} (for a specified $k$) ensures that $ B_{2k} $ cannot have a pole of order greater than $2k$, while $A_{2k}$ cannot have a pole of order greater than $2k-4$, which justifies the bound stated in the conjecture, since $x$ has a double pole at $z=0$.

What remains to be proven however is that the non-perturbative wave-function is annihilated by a quantum curve at all. At the moment we do not have a complete proof of Conjecture \ref{qconj}. However, we computed correlation functions $W_{g,n}$ for the Weirstrass curve up to level $2g-2+n = 3$ --- see Appendix A. Focusing on the particular curve $y^2 = 4 (x^3-1)$, we then calculated the relevant cycle integrals to construct the $S_k$'s for the non-perturbative wave-function $\psi_{\text{NP}}$. Using Mathematica, we were then able to verify Conjecture \ref{qconj} to order $ \hbar^5 $:
	\begin{thm}
		The quantum curve to order $ \hbar^{5} $ is
		\begin{align}
		\hat{P}(\hat{x},\hat{y};\hbar) = \hbar^2 \frac{\text{d}^2}{\text{d} x^2} - 4(x^3-1)+ { \hbar^{4} \frac{1}{2^6 3^2} \frac{\text{d}}{\text{d} x} }  + {\hbar^{2} \frac{x}{2^2 3} + \hbar^{4} \frac{x^2}{2^8 3^3}  + O(\hbar^5) }
		\end{align}
		In particular, it satisfies the requirements of Conjecture \ref{qconj}.
	\end{thm}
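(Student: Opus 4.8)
The plan is to verify the statement by directly computing the WKB coefficients and solving the WKB hierarchy \eqref{eq:wkb} order by order. The first step is to produce the functions $S_k(z)$, for $k=0,1,\dots,5$, as explicit (quasi-)elliptic functions of $z$ for the curve $y^2 = 4(x^3-1)$. For $S_0$ and $S_1$ we use the closed forms given above; for $2 \le k \le 5$ we use the graphical expansion \eqref{eq:sk}, whose inputs are the correlation functions $W_{g,n}$ up to level $2g-2+n=3$ (recorded in Appendix \ref{s:correlation}) and the building blocks $G^{h,d}_n(z)$ and $V^{(\cdots)}_\bullet$. The $G^{h,d}_n(z)$ are obtained by taking $\mathcal{B}$-cycle integrals of the $W_{h,n+d}$ and then integrating from $0$ to $z$; since the $W_{g,n}$ are written in terms of $\wp$, $\wp'$ and Eisenstein series, these integrals can be carried out in closed form at the specialized modulus $\tau = e^{2\pi i/3}$, where the quantization condition gives $\zeta_\hbar = 0$. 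The $V^{(\cdots)}_\bullet$ are ratios of derivatives of the Jacobi theta function $\theta_\bullet(z|\tau)$ which, at $\zeta_\hbar=0$, reduce to standard quasi-modular quantities. This produces each $S_k(z)$ explicitly.

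The second step is to insert the $S_k$ into the WKB equations \eqref{eq:wkb}. Order $k=0$ reproduces $(S_0')^2 = 4(x^3-1)$, i.e. the classical curve. The odd-order equations are vacuous by the parity $S_k(-z) = (-1)^{k+1}S_k(z)$, which forces $A_{2i+1}=B_{2i+1}=0$. At order $k=2$ one solves \eqref{eq:wkb} for the unique function $B_2$ (with $A_2$ already known to vanish) making the identity hold; at order $k=4$ one solves for $B_4$ and $A_4$. The content of the theorem is that these turn out to be the polynomials in $x = \wp(z)$ with the explicit rational coefficients $B_2 = x/(2^2 3)$, $A_4 = 1/(2^6 3^2)$, $B_4 = x^2/(2^8 3^3)$, consistent with the degree bounds of Conjecture \ref{qconj}.

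The crux of the argument — and the main obstacle — is that this last step amounts to showing that a large collection of nontrivial identities among $\mathcal{A}$-cycle and $\mathcal{B}$-cycle integrals of products of $P_2$'s, of $P_1(z)/\wp'(z)$, and of quasi-modular forms hold exactly, so that the left-hand side of \eqref{eq:wkb} minus the candidate polynomial terms vanishes identically. These are the non-perturbative, $\mathcal{B}$-cycle analogues of the identities appearing in Sections \ref{s:two} and \ref{s:identities}, and we have no conceptual proof that the complicated combinations of elliptic functions assembled in the $S_k$ collapse in this way. Instead the verification is done by explicit symbolic computation: expand each $S_k$ as a Laurent series in $z$ about $z=0$ to sufficient order (finitely many terms suffice, since $x$ has a double pole at $z=0$ and $S_k$ is of order $3-k$ there, which also bounds the orders of $A_{2i}$ and $B_{2j}$ as in Conjecture \ref{qconj}), substitute into \eqref{eq:wkb}, read off the candidate polynomials, and confirm that the remaining expression vanishes identically as an elliptic function. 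Carrying this out on Mathematica through order $\hbar^5$ completes the proof.
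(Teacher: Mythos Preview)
Your proposal is essentially the same computational strategy the paper uses: build the $S_k$ from the graphical expansion \eqref{eq:sk} using the $W_{g,n}$ of Appendix~\ref{s:correlation} specialized to $g_2=0$, $g_3=4$, then feed them into the WKB hierarchy \eqref{eq:wkb} and read off $B_2$, $A_4$, $B_4$. The paper's proof is literally ``the proof is computational'' followed by the explicit $S_k'(z)$ and a Mathematica check.

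One point where the paper is sharper than your outline: rather than working with Laurent expansions of the $S_k$ about $z=0$, the paper computes the $S_k'$ (derivatives with respect to $x=\wp(z)$) in \emph{closed form} as rational functions of $\wp(z)$ and $\wp'(z)$. Once you have, say, $S_2'(z) = -\wp/(24\wp') - 21\wp/(8\wp'^3) - 45\wp/(2\wp'^5)$ and similarly for $S_3',S_4'$, the WKB equation at each order becomes an algebraic identity in $\wp,\wp'$ modulo $\wp'^2=4(\wp^3-1)$, which is verified exactly rather than to finite series order. Your Laurent-series route would also work, but you would need to argue separately that the difference of two elliptic functions with controlled pole structure vanishing to high enough order at $z=0$ must vanish identically; the closed-form route sidesteps this.
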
 
	
	\begin{proof}
	The proof is computational. Using the correlation functions calculate in Appendix A, restricting to the curve with $g_2=0$ and $g_3=4$, and evaluating the relevant cycle integrals on Mathematica, we obtain the following expressions for the non-perturbative $S_k$'s defined in the previous subsection:
		\begin{align}
	\nonumber S_0'(z) &= \wp'(z) \\ \nonumber
	S_1'(z) &= -\frac{3 \wp^2(z)}{\wp'(z)^2} \\ \nonumber
	S_2'(z) &= -\frac{\wp (z)}{24 \wp '(z)}-\frac{21 \wp (z)}{8 \wp '(z)^3}-\frac{45 \wp (z)}{2 \wp '(z)^5} \\ \nonumber
	S_3'(z) &= -\frac{1}{1152} -\frac{1}{24 \wp '(z)^2} -\frac{109}{16 \wp '(z)^4}-\frac{243}{2 \wp '(z)^6}-\frac{405}{\wp '(z)^8} \\ \nonumber
	S_4'(z) &= -\frac{\wp (z)^2}{13824 \wp '(z)}-\frac{\wp (z)^2}{1152 \wp '(z)^3}-\frac{31 \wp (z)^2}{64 \wp '(z)^5}-\frac{13641 \wp (z)^2}{128 \wp '(z)^7}-\frac{41769 \wp (z)^2}{16 \wp '(z)^9} \\ & \qquad-\frac{89505 \wp (z)^2}{8 \wp '(z)^{11}} \nonumber
	\end{align}
	Here primes refer to differentiation with respect to $x = \wp(z)$. Using those results it is straightforward to check that the non-perturbative wave-function $\psi_{\text{NP}}$ is annihilated by the differential operator above, up to order $\hbar^5$.
	\end{proof}

We note here that the $S_k'(z)$ that we calculated are rational functions of $\wp(z)$ and $\wp'(z)$, but we also remind the reader that these results are only valid for the particular Weierstrass spectral curve $y^2 = 4(x^3-1)$. For other Weierstrass curves, the non-perturbative $S_k$'s will generally depend on $\hbar$, and the expressions above will certainly not be valid. It is not clear to us whether we can reconstruct the WKB expansion of a quantization of the general Weierstrass spectral curve from the non-perturbative topological recursion.

\section{Conclusion and open questions}

\label{s:conclusion}

In this paper we studied how to quantize the Weierstrass spectral curve via the Eynard-Orantin topological recursion. More precisely, we investigated whether there exists a quantization of the Weierstrass spectral curve that kills the wave-function constructed from the topological recursion.

We first studied the naive question of whether there is such a quantization that kills the perturbative wave-function, as is the case for genus zero spectral curves. Not surprisingly, we did obtain a differential operator that annihilates the wave-function, but it is not a quantum curve according to Definition \ref{d:qc}. Nevertheless, we obtained this differential operator using two different approaches, and as a side result we produced an infinite tower of identities for cycle integrals of elliptic functions.

We then constructed a non-perturbative wave-function, which is a better candidate for a quantum curve, as expected from matrix models. By direct computations on Mathematica, we showed that indeed, up to order $\hbar^5$, the non-perturbative wave-function is killed by a non-trivial quantization of the Weierstrass spectral curve. However, we could only construct the non-perturbative wave-function if the quantization condition was satisfied; for this we focused on the simple elliptic curve $y^2 = 4(x^3-1)$.

There are many open questions that should be further studied. To name a few:
\begin{itemize}
\item Conjecture \ref{qconj} remains to be proven for the spectral curve $y^2 = 4(x^3-1)$. It is somehow expected to be true from the point of view of matrix models, but it would be very nice to have a formal proof of this conjecture. It may be possible to use our results from Sections \ref{s:one} and \ref{s:two} about the perturbative wave-function to construct a proof of the conjecture for the non-perturbative wave-function.
\item In the non-perturbative case, we restricted ourselves to the elliptic curve $y^2 = 4(x^3-1)$ to ensure that the non-perturbative wave-function has an expansion in $\hbar$. It would be very nice to see whether we can get rid of this constraint and study more general Weierstrass curves in the non-perturbative setting.
\item Via the non-perturbative approach, we obtained a proper quantization of the Weierstrass spectral curve. However, it is a rather non-trivial one, since it has an infinite number of $\hbar$ corrections. A more natural quantization would consist in 
\begin{equation}
\hat P(\hat x, \hat y) = \hbar^2 \frac{d^2}{d x^2} - 4 (x^3-1),
\end{equation}
that is, without $\hbar$ corrections. It would certainly be very interesting to study whether the WKB asymptotic solution to this equation can somehow be reconstructed, non-perturbatively, from the Eynard-Orantin topological recursion.
\item In this paper we focused on the Weierstrass spectral curve. But of course it would be very interesting to study larger classes of spectral curves, both at genus one and at higher genus, in the spirit of \cite{BE} for genus zero curves. The approach of \cite{BE} for the perturbative wave-function can certainly be generalized to higher genus curves, as we did in section \ref{s:one} (the general expresions become rather messy quickly though). But the most interesting question would be to study the non-perturbative wave-function. 
\item Coming back to the Weierstrass spectral curve, in Appendix A we calculated many correlation functions produced by the topological recursion. Generally speaking, in most applications of the topological recursion those correlation functions are generating functions for some interesting enumerative invariants. It is unclear at the moment whether there is such an interpretation for the correlation functions produced by the Weierstrass spectral curve. This is certainly worth investigating.
\item Finally, we obtained in section \ref{s:identities} an infinite sequence of identities for cycle integrals of elliptic functions. A natural question is whether those are interesting  from the point of view of elliptic functions and quasi-modular forms. In particular, they may be related to the results on cycle integrals of elliptic functions obtained in \cite{GM}. Moreover, the manipulations done in this paper are quite general, and would probably lead to analogous relations for higher genus curves, which would certainly be interesting to investigate. We hope to report on that in the near future.
\end{itemize}

\appendix

\section{Correlation functions for Weierstrass curve}
\label{s:correlation}

In this appendix we record the correlation functions constructed from the Eynard-Orantin topological recursion at the first few recursive levels. Those are needed to calculate the first few terms ($S_2$, $S_3$ and $S_4$) in the WKB expansion in section \ref{s:NP}.
 
First, at level $2g-2+n = 1$, we get:
\begin{multline}
W_{0,3}(z_0,z_1,z_2) \\ =   \frac{12}{\Delta}\mathrm{d} z_0 \mathrm{d} z_1 \mathrm{d} z_2\sum_{i=1}^3  (20 G_4(\tau) - e_i^2) P_2(z_0-\omega_i) P_2(z_1-\omega_i) P_2(z_2-\omega_i),
\end{multline}
and
\begin{multline}
W_{1,1}(z_0)  \\= \frac{6}{\Delta} \mathrm{d} z_0 \sum_{i=1}^3 (20 G_4(\tau) - e_i^2) \left( (G_2(\tau)-  e_i) P_2(z_0 - \omega_i,\tau) + \frac{1}{4!} P_2^{(2)}(z_0-\omega_i,\tau)  \right).
\end{multline}

At level $2g-2+n = 2$, we get

{\footnotesize
	\begin{multline*}
		W_{1,2}(z_{0},z_{1}) = \frac{1}{\Delta^2} \sum_{i=1}^{3} 9 (20 G_4-e_i^2) (-60 G_4 (e_i^2-20 G_4) \\
		(G_2+\wp (z_0-w_i)) (G_2+\wp (z_1-w_i))-(e_i^2-20 G_4) (6 \wp (z_0-w_i){}^3-30 G_4 \wp (z_0-w_i)\\
		+\wp '(z_0-w_i){}^2) (G_2+\wp (z_1-w_i))-6 (e_i^2-20 G_4) (5 G_4-\wp (z_0-w_i){}^2)\\
		 (-4 G_2^2+8 e_i G_2-\wp (z_1-w_i){}^2+5 G_4+(8 e_i-4 G_2) \wp (z_1-w_i))\\
		 +(G_2+\wp (z_0-w_i)) (-8 (e_{i+2}^2-20 G_4) (G_2+\wp (z_1-w_{i+2})) (e_{i+1}+G_2){}^2\\
		 -8 (e_i^2-20 G_4) (G_2^2+6 G_4) (G_2+\wp (z_1-w_i))\\
		 -60 (e_i^2+2 G_4) (e_i^2-20 G_4) (G_2+\wp (z_1-w_i))+2 ((4 G_2 (G_2-e_i)+G_4) (20 G_4-e_i^2)\\
		 +(20 G_4-e_{i+2}^2) (e_{i+1}^2+4 (e_{i+1}+G_2) (G_2-e_{i+2})-5 G_4)\\
		 +(20 G_4-e_{i+1}^2)  (e_{i+2}^2+4 (G_2-e_{i+1}) (e_{i+2}+G_2)-5 G_4)) (G_2+\wp (z_1-w_i))\\
		 -24 (e_i-G_2) (e_i^2-20 G_4) (5 G_4-\wp (z_1-w_i){}^2)\\
		 -8 e_i (e_i^2-20 G_4) (-12 G_2^2+4 e_i G_2-3 \wp (z_1-w_i){}^2+15 G_4\\
		 +4 (e_i-3 G_2) \wp (z_1-w_i))-8 (e_{i+2}+G_2){}^2 (e_{i+1}^2-20 G_4) (G_2+\wp (z_1-w_{i+1}))\\
		 +(e_i^2-20 G_4) (-6 \wp (z_1-w_i){}^3+30 G_4 \wp (z_1-w_i)-\wp '(z_1-w_i){}^2))),
		\end{multline*}}
\noindent and
{\footnotesize
\begin{multline*}
	W_{0,4}(z_{0},z_{1},z_{2},z_{3}) = \sum_{i=1}^{3} 3 \frac{144 (20 G_4-e_i^2)}{\Delta^2} (e_i^2-20 G_4)\\
	 (5 G_4-\wp (z_0-w_i){}^2) (G_2+\wp (z_1-w_i)) (G_2+\wp (z_2-w_i)) (G_2+\wp (z_3-w_i))\\
	 +\frac{144 (20 G_4-e_i^2)}{\Delta^2} (G_2+\wp (z_0-w_i)) (12 e_i (e_i^2-20 G_4) (G_2+\wp (z_1-w_i)) \\
	 (G_2+\wp (z_2-w_i)) (G_2+\wp (z_3-w_i))+3 (e_i^2-20 G_4) (5 G_4-\wp (z_1-w_i){}^2) \\
	 (G_2+\wp (z_2-w_i)) (G_2+\wp (z_3-w_i))+3 (e_i^2-20 G_4) (G_2+\wp (z_1-w_i))\\
	  (5 G_4-\wp (z_2-w_i){}^2) (G_2+\wp (z_3-w_i))+(-G_2 (e_i^2-20 G_4) (G_2+\wp (z_1-w_i)) \\
	  (G_2+\wp (z_2-w_i))-(e_{i+2}+G_2) (e_{i+1}^2-20 G_4) (G_2+\wp (z_1-w_{i+1}))\\
	   (G_2+\wp (z_2-w_{i+1}))  -(e_{i+1}+G_2) (e_{i+2}^2-20 G_4) (G_2+\wp (z_1-w_{i+2})) \\
	   (G_2+\wp (z_2-w_{i+2}))) (G_2+\wp (z_3-w_i))+3 (e_i^2-20 G_4) (G_2+\wp (z_1-w_i)) \\
	   (G_2+\wp (z_2-w_i)) (5 G_4-\wp (z_3-w_i){}^2)+(G_2+\wp (z_2-w_i)) (-G_2 (e_i^2-20 G_4) \\
	   (G_2+\wp (z_1-w_i)) (G_2+\wp (z_3-w_i))-(e_{i+2}+G_2) (e_{i+1}^2-20 G_4) \\
	   (G_2+\wp (z_1-w_{i+1})) (G_2+\wp (z_3-w_{i+1}))-(e_{i+1}+G_2) (e_{i+2}^2-20 G_4) \\
	   (G_2+\wp (z_1-w_{i+2})) (G_2+\wp (z_3-w_{i+2}))) +(G_2+\wp (z_1-w_i)) (-G_2 (e_i^2-20 G_4) \\
	   (G_2+\wp (z_2-w_i)) (G_2+\wp (z_3-w_i))-(e_{i+2}+G_2) (e_{i+1}^2-20 G_4) \\
	   (G_2+\wp (z_2-w_{i+1})) (G_2+\wp (z_3-w_{i+1}))-(e_{i+1}+G_2) (e_{i+2}^2-20 G_4) \\
	   (G_2+\wp (z_2-w_{i+2})) (G_2+\wp (z_3-w_{i+2})))),
\end{multline*}
}
\noindent where the index $i$ is defined mod $3$.

We also calculated the correlation functions at level $2g-2+n=3$, namely $W_{0,5}(z_0,z_1,z_2,z_3,z_4)$, $W_{1,3}(z_0,z_1,z_2)$ and $W_{2,1}(z_0)$. The expressions are very long though so we will not include them here. They are available upon request.

 \section{An independent proof of Corollary \ref{c:ellint}}
 
 \label{s:proof}

In this Appendix we provide an independent proof of Corollary \ref{c:ellint} directly from the theory of elliptic functions. Recall that Corollary \ref{c:ellint} states that:
\begin{equation}
 \oint_A \frac{P_2(2 z;\tau)}{\wp'(z;\tau)^2} dz =  \frac{G_4(\tau) ( 5 G_4(\tau)-G_2(\tau)^2 )}{30 (20 G_4(\tau)^3 - 49 G_6(\tau)^2 )}.
\end{equation}
Let us evaluate the period integral on the LHS explicitly and show that it is indeed equal to the quasi-modular form on the RHS. In this Appendix we will suppress the $\tau$-dependence for brevity.

First we expand the integrand with a ``double angle" identity:
\begin{equation}
P_2(2z) = G_2 - 2\wp(z) + \frac{1}{4} \left(\frac{\wp''(z)}{\wp'(z)}\right)^{2}
\end{equation}

Hence our original integral splits into the following three integrals:

\begin{equation}
\oint_A \frac{P_2(2 z)}{\wp'(z)^2} dz = G_2\oint_A \frac{\text{d}z}{\wp'(z)^2} - 2\oint_A\frac{\wp(z)}{\wp'(z)^2}\text{d}z + \frac{1}{4}\oint_A \frac{\wp''(z)^{2}}{\wp'(z)^{4}}\text{d}z
\end{equation}

Let us focus on the third constituent integral. Using integration by parts and the fact that $\wp'''(z) = 12\wp(z)\wp'(z)$ we see that it simplifies into a more familiar form:

\begin{equation}
\frac{1}{4}\oint_A \frac{\wp''(z)^{2}}{\wp'(z)^{4}}\text{d}z = \frac{1}{4}\left\{ \left. -\frac{\wp''(z)}{3\wp'(z)^{3}}\right|_0^1 + 4 \oint_A \frac{\wp(z)}{\wp'(z)^{2}}\right\} = \oint_A \frac{\wp(z)}{\wp'(z)^2}
\end{equation}

Hence our original problem reduces to solving only two integrals:

\begin{equation}\label{eq:P2integral}
\oint_A \frac{P_2(2 z)}{\wp'(z)^2} dz = G_2\oint_A \frac{\text{d}z}{\wp'(z)^2} - \oint_A\frac{\wp(z)}{\wp'(z)^2}\text{d}z 
\end{equation}

To solve both we need a very useful identity, which follows directly from the differential equation for the Weierstrass $\wp$-function \eqref{eq:WPdiffeq} and the fact that $\frac{2}{3}(\wp''(z)-g_2) = 4\wp(z)^2 - g_2$ :

\begin{equation}
\frac{1}{\wp'(z)^{2}}  = \frac{1}{g_3}\left[\frac{2}{3}\frac{\wp(z)\left(\wp''(z) - g_2\right)}{\wp'(z)^2} - 1 \right]
\end{equation}

As it turns out, using integration by parts we can express these two integrals in terms of one another:

\begin{align}\label{eq:1}
\oint_A \frac{\text{d}z}{\wp'(z)^{2}}  =& \frac{1}{g_3}\left[-1 + \frac{2}{3}\oint_A\frac{\wp(z)\left(\wp''(z) - g_2\right)}{\wp'(z)^2}\right] \\
=& -\frac{1}{3g_3}\left[1+ 2g_2\oint_A\frac{\wp(z)}{\wp'(z)^{2}}\text{d}z \right]
\end{align}

\begin{align}\label{eq:2}
\oint_A \frac{\wp(z)}{\wp'(z)^{2}}\text{d}z =& \frac{1}{g_3}\left[\frac{2}{3}\oint_A \frac{\wp(z)^2\left(\wp''(z) - g_2\right)}{\wp'(z)^2} - \oint_A \wp(z)\text{d}z \right] \\=& -\frac{1}{3g_3} \left[G_2 + \frac{g_2^2}{6} \oint_A \frac{\text{d}z}{\wp'(z)^2} \right] 
\end{align}
For the last equation, we used the fact that
\begin{equation}
\oint_A \wp(z)\text{d} z = -G_2,
\end{equation}
since
\begin{equation}
0 = \oint_A P_2(z) \text{d} z = \oint_A (\wp(z) + G_2) \text{d} z.
\end{equation}

Solving the system of equations \eqref{eq:1} and \eqref{eq:2} results in the following explicit expressions (with $\Delta = g_2^3 - 27g_3^2$):
\begin{equation}
\oint_A \frac{\text{d}z}{\wp'(z)^{2}}  = \frac{18 g_3 -12 G_2 g_2 }{2\Delta}
\end{equation}

\begin{equation}
\oint_A \frac{\wp(z)}{\wp'(z)^{2}}\text{d}z  = \frac{18 G_2 g_3 - g_2^2 }{2\Delta}
\end{equation}

As a result we see that the original integral \eqref{eq:P2integral} is given by:

\begin{equation}
\oint_A \frac{P_2(2z)}{\wp'(z)^2}\text{d}z = \frac{18 G_2 g_3 -12 G_2^2 g_2 }{2\Delta} - \frac{18 G_2 g_3 - g_2^2 }{2\Delta} = \frac{g_2(g_2-12G_2^2)}{2\Delta}
\end{equation}

Making the substitutions $g_2 = 60G_4$ and $g_3 = 140G_6$ we arrive at the final expected result:

\begin{equation}
\oint_A \frac{P_2(2z)}{\wp'(z)^2}\text{d}z = \frac{G_4(5G_4 - G_2^2)}{30(20G_4^3 - 49G_6^2)}
\end{equation}

\end{document}